\newtheorem{theorem}{Theorem}[section]
\newtheorem{algorithm}[theorem]{Algorithm}
\newtheorem{remark}{Remark}[section]
\providecommand{\keywords}[1]{\textbf{\textbf{Key Words:}} #1}
\newenvironment{manuallemma}[1]{%
	\manuallemmainner
}{\endmanuallemmainner}
\begin{document}

\title{\bf Robust Clustering with Normal Mixture Models: A Pseudo $\beta$-Likelihood Approach}
\author{Soumya Chakraborty\footnote{Joint Affiliation at Bethune College (Assistant Professor) and Indian Statistical Institute (Research Fellow)}, Ayanendranath Basu\footnote{ayanbasu@isical.ac.in (Corresponding Author)} ~and Abhik Ghosh
\\
Indian Statistical Institute, Kolkata, India. \\
  }

\maketitle

\begin{abstract}
	As in other estimation scenarios, likelihood based estimation in the normal mixture set-up is highly non-robust against model misspecification and presence of outliers (apart from being an ill-posed optimization problem). A robust alternative to the ordinary likelihood approach for this estimation problem is proposed which performs simultaneous estimation and data clustering and leads to subsequent anomaly detection. To invoke robustness, the methodology based on the minimization of the density power divergence (or alternatively, the maximization of the $\beta$-likelihood) is utilized under suitable constraints. An iteratively reweighted least squares approach has been followed in order to compute the proposed estimators for the component means (or equivalently cluster centers) and component dispersion matrices which leads to simultaneous data clustering. Some exploratory techniques are also suggested for anomaly detection, a problem of great importance in the domain of statistics and machine learning. The proposed method is validated with simulation studies under different set-ups; it performs competitively or better compared to the popular existing methods like K-medoids, TCLUST, trimmed K-means and MCLUST, especially when the mixture components (i.e., the clusters) share regions with significant overlap or outlying clusters exist with small but non-negligible weights (particularly in higher dimensions). Two real datasets are also used to illustrate the performance of the newly proposed method in comparison with others along with an application in image processing. The proposed method detects the clusters with lower misclassification rates and successfully points out the outlying (anomalous) observations from these datasets.
\\
\\
\keywords{Anomaly detection, Density Power Divergence, Image processing, Maximum pseudo $\beta$-likelihood, Robust clustering. }

\end{abstract}
\section{Introduction}
%\begin{singlespace}

Mixture distributions arise in many common practical situations. 
In particular, when the population is not homogeneous due to the presence of different categorical attributes, the variable of interest has different behaviour over distinct homogeneous subgroups which come together to generate an overall heterogeneous mixture system. To draw statistical inference based on this kind of heterogeneous datasets, a single probability distribution may not be adequate to model the data; mixture distributions provide the appropriate structure in these situations. Mixtures of many different distributions with varying shapes have been used in the literature to model datasets 
coming from different disciplines ranging over astronomy, clinical psychology, economics, finance, DNA sequencing, 
image processing, voice recognition, criminology, species counting and many others. %Mixture modelling can be broadly classified into parametric and non-parametric mixtures. 
%In parametric mixture modelling, the mixture probability distribution used is a convex combination of a certain number of individual probability distributions from a particular parametric class. 
In parametric mixture modelling, one uses a model probability distribution constructed as a mixture (convex combination) of 
several probability distributions from a particular parametric class with different parameter values. 
Mathematically, given a parametric family of densities  $\mathcal{F}_{\boldsymbol{\Theta}}=\{f_{\boldsymbol{\theta}}:\boldsymbol{\theta} \in \boldsymbol{\Theta}\subset {\mathbb R}^k\}$ 
indexed by the parameter vector $\boldsymbol{\theta} \in \boldsymbol{\Theta}  $, a mixture from this class can be described in terms of 
the mixture probability density function (PDF),
$
f_{\boldsymbol{\boldsymbol{\theta}}}=\sum\limits_{j=1}^{k}\pi_{j}f_{\boldsymbol{\boldsymbol{\theta}}_{j}}
$    
where $\pi_{j}$ is the weight given to the $j$-th component of the mixture having PDF $f_{\boldsymbol{\theta}_{j}}$ for $j=1,..,k$ with $\sum_{j=1}^{k}\pi_{j}=1$, 
and $\boldsymbol{\theta}=(\boldsymbol{\theta}_{1},\boldsymbol{\theta}_{2},...,\boldsymbol{\theta}_{k})$ is the parameter vector of interest. 
In practice, the weights $\pi_{j}$s are unknown and hence they also have to be estimated along with the parameters ($\boldsymbol{\theta}_{j}$s) of the component distributions. 
Normal, Cauchy and Laplace are perhaps the most common symmetric mixture examples. The normal mixture models are flexible, can fit a large variety of shapes, 
and are among the  most popular and most used statistical tools in practice. 
Moreover, gamma scale mixtures of normal components cover a very large class of symmetric mixture distributions 
(e.g., \cite{sclaemixnorm}). For non-symmetric or skewed mixture distributions, 
uniform mixtures of normal distributions are used (e.g., \cite{sclaemix2}). For general references covering different areas of mixture models (including non-normal mixtures), 
see \cite{titterington}, \cite{lindsay} and \cite{peel}.

Motivated by their huge applicability, here we develop a robust estimation procedure for normal mixture models. However, likelihood based inference, which is asymptotically the most efficient under the model, is strongly affected by outliers and model misspecification. To address the robustness issue, we have taken a minimum distance approach in the spirit of the density power divergence (DPD). The DPD class was originally proposed (under a different name) by \cite{basupaper}. We are going to view the problem of minimizing the density power divergence as a maximization of a generalized likelihood function. One of our primary objectives in performing robust inference in normal mixtures is its subsequent use in robust clustering. Clustering is an active area of research and has many real life applications. Some of the existing clustering methods available in the literature are K-means, K-medoids and standard likelihood discrimination. The problem with the K-means method is that it tends to find only spherical or elliptical clusters with \enquote{roughly} equal cluster sizes and equal component covariance matrices. Also the method is based on traditional cluster mean estimates which makes the algorithm non-robust in the presence of \enquote{anomalous} observations. The presence of anomaly is not a rare thing in practice. 
%Moreover, it is practically impossible to capture the exact underlying true model from the visualization of the observed data, which is not an option in high dimensions anyway. 
Subjective deletion of outliers, inadvisable as it is, cannot be done in high dimensions where the data cannot be visualized.  
Hence, some of the small components of the cluster may be misspecified and the observations coming from these irregular clusters become disturbing.
% Let us consider the following scatter plots. 
%\begin{figure}[h!]
%	\centering
%	\includegraphics[height=4 in,width= 6 in]{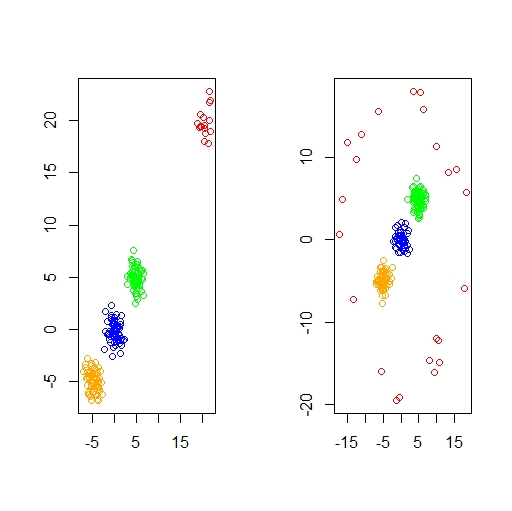}
%	\caption{Scatter plot with outlying Clusters and uniform noise respectively.}
% 	\label{Figure 1}
% \end{figure}
% In the left panel of the plots in Figure \ref{Figure 1}, along with three main clusters, two outlying clusters (in red) are present. In the right panel, the three main clusters are contaminated with uniform noise (red points). If these observations are misclassified into regular clusters, the cluster means are highly affected and as a result the detection of the actual clusters could be highly inaccurate. Thus outliers affect both the estimators and the misclassification rates. Robust clustering methods are needed to deal with contaminated datasets such as these.

Various modifications of the K-means clustering algorithm have been proposed in the literature with the aim of robustifying the algorithm. 
%The trimmed K-means method (Cuesta-Albertos et al.~ $(1997)$ \cite{trimmedkmeans}) is one such example. 
%In K-means algorithm, the cluster centers $\{\mu_{1},\mu_{2},...,\mu_{k}\}$ are found by solving
%\begin{align*}
%    \underset{\mu_{1},\mu_{2},...,\mu_{k}}{ \text{argmin}}\sum_{i=1}^{n}\underset{j=1,2,...,k}{\min}||X_{i}-\mu_{j}||^{2}.
%\end{align*}
%The trimmed K-means algorithm finds the cluster centers $\{\mu_{1},\mu_{2},...,\mu_{k}\}$ by solving
%\begin{align*}
%\underset{U}{\text{argmin}}\;\underset{\mu_{1},\mu_{2},...,\mu_{k}}{\min}\sum_{i=1}^{n}\underset{j=1,2,...,k}{\min}||X_{i}-\mu_{j}||^{2}
%\end{align*}
%where $U$ is a trimmed subset of the set of observations $\{X_{1},X_{2},...,X_{n}\}$ with $|U|=[n(1-\alpha)]$, 
%and $\alpha$ is the trimming proportion. 
The trimmed K-means method (\cite{trimmedkmeans}) is one such example
where the same objective function as the K-means is used but only using a subsample after trimming the extreme observations from the whole sample.  
%The resulting estimators of cluster centers are consistent and asymptotically normal 
%(\cite{tclustifa}). The robustness properties of the trimmed K-means method are examined in \cite{tclustif}. Later a robust estimation procedure was proposed by \cite{tkm_estimation} in normal mixture models using trimmed K-means clustering algorithm. 
%Additional general measures of robustness can be found in \cite{huber} and \cite{hampel}.     
% 
%
To further generalize the K-means and trimmed K-means algorithms beyond spherical or elliptical clusters, 
the concept of heterogeneous clustering has been considered in the literature. 
\cite{gallegosritter} proposed a normal mixture set-up in this context under the \enquote{spurious-outliers model}
and developed an algorithm for estimation which is a \textit{naive extension} of the fast MCD algorithm \cite{reviewpaper}.  
But the estimation procedure under this model is too difficult because 
the algorithm often ends up finding clusters made up of observations lying on a low dimensional subspace. 
Moreover the likelihood function can be unbounded; when one of the observations is equal to one of the component means, the likelihood tends to infinity as the determinant of the dispersion matrix of that component goes to zero. 
To bypass this difficulty, \cite{Gallegos} and \cite{gallegosritter} 
proposed two additional methods based on the structure of the component covariance matrices. 
The first one proposed an algorithm assuming similar scales of the components 
while the second one assumes that the dispersion matrices have an unknown but same covariance structure. 
Later, \cite{tclust} proposed the TCLUST approach 
which performs a likelihood based discrimination with trimming a certain proportion of outlying observations;
it is still based on the maximum likelihood estimates of the parameters but obtained only from non-trimmed observations. 
This method is very popular and heavily used for robust clustering under normal mixture data. 
%Another work by \cite{tclust_linear} has proposed a robust clustering algorithm which aims to find linear structures in datasets contaminated by outliers.

Some of the other existing robust clustering algorithms based on finite gaussian mixture models include \cite{punzo1}, \cite{mclust}, \cite{raftery} which mainly model the contaminated datasets with finite mixture Gaussian models (non-Gaussian models are also used, eg., \cite{raftery}) and perform parameter estimation with subsequent data clustering and outlier detection. 

%Some information-theoretic methods for robust clustering and estimation have also been proposed. These methods are non-parametric and use density estimation techniques. Entropy, as a measure of uncertainty, is used as the criterion to detect the clusters. The work of \cite{MEC1} proposed a minimum entropy clustering based on Parzen density estimation. Later, \cite{MEC2} have proposed a kernel density based alternative clustering approach. Havrda-Charv$\acute{a}$t's structural $\alpha$ entropy is used in these methods which was axiomatically derived for classification purposes (\cite{alphaentropy}).

Divergence based clustering procedures are also used to derive robust and efficient parametric clustering literature. Here, the clustering methods are based on robust estimation of the model parameters by minimizing suitable divergence measures (between probability density functions). The $\gamma$-divergence is one of those divergences which are utilized to build robust clustering tools. The $\gamma$-divergence was originally proposed by \cite{gamma2001} and was motivated later by \cite{dpdjapan2008} through the $\gamma$-cross entropy. Two interesting robust clustering algorithms were proposed by utilizing the $\gamma$-divergence, viz., \cite{gammaclust2} and \cite{gammaclust1}. The former developed a robust clustering algorithm by combining the $q$-Gaussian mixture model and minimum $\gamma$-divergence estimation procedure and applied the algorithm on cryo-EM data where the latter developed a hierarchical robust clustering algorithm by minimizing the $\gamma$-divergence.

In the present paper, we have developed a fully parametric robust approach for the estimation of the parameters under the normal mixture model 
which leads to a subsequent robust clustering strategy. 
It is useful in the same estimation and clustering set-up for which TCLUST is one of the existing robust procedures, 
but the source of robustness of our estimators and our algorithm lies in suitable density power downweighting of the observations 
rather than through invoking likelihood based trimming. 
A section of our assumptions are similar to those necessary for TCLUST, 
although there are also some assumptions specific to the form of the DPD. The contributions of the present work are highlighted in the following paragraphs. 

%\begin{itemize}
Firstly, we present a new robust method for model based clustering under the Gaussian mixture set-up developed in the spirit of the density power divergence, which contains the Kullback-Leibler divergence as a special (extreme) case. Thus, the proposed set of techniques represent a general class of estimation methods which contains likelihood based methods as a particular case. A single scalar tuning parameter controls the trade-off between efficiency and robustness and we demonstrate how positive but small values of the tuning parameter provide more stable performance under noisy data with little loss in efficiency compared to the likelihood based results. 

Secondly, we develop an approximate EM-like algorithm to solve the problem efficiently even in higher dimensions. This is important since a straightforward optimization of the proposed objective function is difficult particularly in high dimensions. The proposed algorithm performs parameter estimation as a precursor to the detection of the clusters and helps to detect anomalous observations (if present) in the dataset. Subsequently, the algorithm leads to robust detection of clusters on the basis of the robust parameter estimates obtained earlier.

Thirdly, under noisy data, the proposed method provides improved results in terms of the estimated misclassification rates and outlier detection compared to the TCLUST (which possibly represents the most popular robust and clustering algorithm currently in use in this area which has an easily implementable software) and the trimmed K-means methods. Along with the robust methods like the TCLUST, trimmed K-means and K-medoids algorithms, we have also included the MCLUST method (\cite{mclust}) in our analysis which optimally selects the number of clusters unlike the others. However, we use the MCLUST method with fixed number of clusters (with uniform noise component) in this work.

Finally, we illustrate the usefulness of the proposed clustering procedure in image processing to identify 
differently colored (anomalous) regions from a colour image. With appropriately proposed additional refinements,
our methodology is seen to outperform the aforesaid procedures also for this special application
as illustrated through analysis of a real satellite image.  
%\end{itemize}

The rest of the paper is organized as follows. 
In Section \ref{section2}, we propose our algorithm along with the underlying theoretical formulations. In Section \ref{SEC:theoretical_result}, we present some of the theoretical properties of our estimators (discussed in a detailed fashion in the Appendices) and the behaviour of the influence functions is explored to justify the claimed robustness. A large scale comparative simulation study is presented in Section \ref{secsim}. 
Analysis of two real life datasets are considered in Section \ref{secrealdata} 
while an application of our method to image processing is provided in Section \ref{secimage}. Conclusions and future plans are discussed in Section \ref{secconfu}. For brevity, technical proofs and derivations are presented in the Appendices.

\section{Proposed Parameter Estimation and Clustering Procedure}
\label{section2}
Let $\boldsymbol{X}_{1}, \boldsymbol{X}_{2},...,\boldsymbol{X}_{n}$ be a random sample drawn from a $p$-dimensional multivariate normal mixture distribution with $k$ components having an unknown PDF which is modelled by the model density $f_{\boldsymbol{\theta}}(\boldsymbol{x})=\sum_{j=1}^{k}\pi_{j}\phi_{p}(\boldsymbol{x},\boldsymbol{\mu}_{j},\boldsymbol{\Sigma}_{j})$ for $\boldsymbol{x} \in \mathbb{R}^{p}$, where $\phi_{p}(\cdot,\boldsymbol{\mu},\boldsymbol{\Sigma})$ denotes the PDF of a $p$-dimensional normal with mean $\boldsymbol{\mu}$ and dispersion matrix $\boldsymbol{\Sigma}$. The parameter $\boldsymbol{\theta}$ is given by $\boldsymbol{\theta} = (\pi_{1}, \pi_{2},...., \pi_{k}, \boldsymbol{\mu}_{1}, \boldsymbol{\mu}_{2},...,\boldsymbol{\mu}_{k}, \boldsymbol{\Sigma}_{1}, \boldsymbol{\Sigma}_{2},..., \boldsymbol{\Sigma}_{k})$ where $\boldsymbol{\mu}_{j} \in  \mathbb{R}^{p}$, $\boldsymbol{\Sigma}_{j}$ is a real symmetric, positive definite $p \times p$ matrix and $0 \leq \pi_{j} \leq 1$ is the weight of the $j$-th component, $j=1,2,...,k$, with $\sum_{j=1}^{k}\pi_{j}=1$. Our objective is to estimate the parameter $\boldsymbol{\theta}$ robustly and to detect the true clusters. Instead of the ordinary likelihood method, we propose a generalized likelihood approach which is motivated by the minimum DPD methodology of \cite{basupaper}, and subsumes the ordinary maximum likelihood approach.  
\subsection{Theoretical Formulation}
\label{SEC:theoretical_method}

To provide a brief background on the DPD, let us consider a simple random sample $\boldsymbol{Z}_{1},\boldsymbol{Z}_{2},...,\boldsymbol{Z}_{n}$ 
from an unknown probability distribution with PDF $g$ having cumulative distribution function (CDF) $G$. We model this unknown PDF $g$  by a parametric family of densities $\mathcal{F}_{\boldsymbol{\Theta}}=\{f_{\boldsymbol{\theta}}: \boldsymbol{\theta} \in \boldsymbol{\Theta}\} $. In the minimum distance approach, the best approximation of $g$ in the model family is given by the member of $\mathcal{F}_{\boldsymbol{\Theta}}$ which is \enquote{closest} to $g$, in terms of an appropriate statistical distance measure. For this purpose we will utilize the DPD given by
\begin{equation}
\centering
\label{Eq1}
D_{\beta}(g,f_{\boldsymbol{\theta}})=\int \left[f_{\boldsymbol{\theta}}^{1+\beta}(\boldsymbol{x})-\left(1+\frac{1}{\beta}\right)g(\boldsymbol{x})f_{\boldsymbol{\theta}}^{\beta}(\boldsymbol{x})+\frac{1}{\beta}g^{1+\beta}(\boldsymbol{x})\right]\;d\boldsymbol{x},
\end{equation}
where $\beta$ is a non-negative tuning parameter. From efficiency considerations, the practical range of the parameter $\beta$ is usually taken to be the interval $[0, 1]$. 
It may observed that, 
\begin{equation}
\label{Eq1.1}
D_{\beta}(g,f_{\boldsymbol{\theta}}) =\int f_{\boldsymbol{\theta}}^{1+\beta}(\boldsymbol{x}) \;d\boldsymbol{x} - \left(1+\frac{1}{\beta}\right)E_{G}(f_{\boldsymbol{\theta}}^{\beta}(\boldsymbol{X})) + \frac{1}{\beta}\int g^{1+\beta}(\boldsymbol{x})\;d\boldsymbol{x}.
\end{equation}
The last term in Equation (\ref{Eq1.1}) is independent of $\boldsymbol{\theta}$. Hence it is enough to minimize the objective function
\begin{equation}
\label{Eq1.11}
\int f_{\boldsymbol{\theta}}^{1+\beta}(\boldsymbol{x}) \;d\boldsymbol{x} - \left(1+\frac{1}{\beta}\right)E_{G}(f_{\boldsymbol{\theta}}^{\beta}(\boldsymbol{X})) 
\end{equation}
in order to find the model element $f_{\boldsymbol{\theta}}$ which is closest to $g$ in the minimum DPD sense. 

From the Glivenko-Cantelli Theorem, the empirical CDF $G_{n}$ based on $\boldsymbol{Z}_{1},\boldsymbol{Z}_{2},...,\boldsymbol{Z}_{n}$ is strongly consistent for $G$ and as a consequence, we can approximate Equation (\ref{Eq1.11}) by its empirical version
\begin{align*}
&\int f_{\boldsymbol{\theta}}^{1+\beta}(\boldsymbol{x}) \;d\boldsymbol{x} - \left(1+\frac{1}{\beta}\right)\frac{1}{n}\sum_{i=1}^{n}f_{\boldsymbol{\theta}}^{\beta}(\boldsymbol{Z}_{i}) 
\end{align*}
and minimize this with respect to $\boldsymbol{\theta}$ in order to obtain the minimum DPD estimators (MDPDEs) of the unknown parameters.
Note that the aforesaid minimization problem can also be viewed as the maximization of,
\begin{equation}
\centering
\label{Eq2}
l_{\beta}(\boldsymbol{\theta})= \left(1+\frac{1}{\beta}\right)\frac{1}{n}\sum_{i=1}^{n}f_{\boldsymbol{\theta}}^{\beta}(\boldsymbol{Z}_{i})-\int f_{\boldsymbol{\theta}}^{1+\beta}(\boldsymbol{x}) \;d\boldsymbol{x}.
\end{equation}
As $\beta\rightarrow 0$, the MDPDE coincides with the maximum likelihood estimator (MLE) as the limit $l_0(\boldsymbol{\theta})$  coincides with the log-likelihood function. For a general $\beta>0$, the quantity in the right hand side of Equation (\ref{Eq2}) has been referred to as the $\beta$-likelihood in \cite{dpdjapan} and its maximizer is termed as the maximum $\beta$-likelihood estimator. Also see \cite{alpha_div} for a description of $\alpha$, $\beta$ and $\gamma$ divergences (leading to the corresponding generalized likelihoods) as robust methods of similarities. For the $\beta$-likelihood, the tuning parameter $\beta \in [0,1]$ controls the trade off between the degree of robustness and asymptotic efficiency of the MDPDE, or, equivalently, the maximum $\beta$-likelihood estimator of $\boldsymbol{\theta}$. The asymptotic properties and the robustness properties of this estimator have been discussed in \cite{basupaper} and \cite{basubook}. The relationship between the MDPD and the minimum $q$-entropy estimation was established by \cite{q_entropy}. The same quantity (in the right hand side of Equation (\ref{Eq2})) is indeed also equal to the $L_q$-likelihood function (\cite{q_entropy})  with $q = 1-\beta$; its use in robust statistical inferences has been explored by \cite{q_entropy_finance, q_entropy_tail, L_q likelihood, qent3, qent1, qent2}. 

Returning to the original problem of normal mixture models, the joint likelihood is given by
\begin{equation}
\centering
\label{Eq3}
L_{M}(\boldsymbol{\theta},F_{n})= \prod_{i=1}^{n}f_{\boldsymbol{\theta}}(\boldsymbol{X}_{i}), ~~~~\mbox{ with } ~f_{\boldsymbol{\theta}}(\boldsymbol{x})=\sum_{j=1}^{k}\pi_{j}\phi_{p}(\boldsymbol{x},\boldsymbol{\mu}_{j},\boldsymbol{\Sigma}_{j}),
\end{equation}
where 
%     $f_{\theta}(x)=\sum_{j=1}^{k}\pi_{j}\phi_{p}(x,\mu_{j},\Sigma_{j})$   and 
$F_{n}=\frac{1}{n}\sum_{i=1}^{n}\delta_{\boldsymbol{X}_i}$ is the empirical probability measure of the data. 
The classical MLE is defined as the maximizer of $ L_{M}(\boldsymbol{\theta},F_{n})$ with respect to $\boldsymbol{\theta}\in\boldsymbol{\Theta}$.
The corresponding $\beta$-likelihood can again be defined by Equation (\ref{Eq2}),  
but now $f_{\boldsymbol{\theta}}$ is as given in (\ref{Eq3}). This may be maximized  with respect to $\boldsymbol{\theta}$ to obtain the maximum $\beta$-likelihood estimator (or the MDPDE) of $\boldsymbol{\theta}$  for a given value of $\beta$. 
However, due to the presence of a summation term in $f_{\boldsymbol{\theta}}$ and the integral of its power in the objective function, the associated optimization problem becomes extremely difficult.
\cite{dpdjapan} have proposed an algorithm for this particular optimization problem to obtain the MDPDEs
for the univariate ($p=1$) normal mixture models. 
But the algorithm is very difficult to implement in higher dimensions
and hence the computation of the maximum $\beta$-likelihood estimator in a normal mixture model with $p>1$ still remains a challenging problem. 

We hereby propose an alternative EM like algorithm for robust parameter estimation for the normal mixture models
without directly maximizing the DPD objective function (the $\beta$-likelihood) as done by \cite{dpdjapan}.
In particular, we consider an alternative version of the likelihood using the $\beta$-likelihood of the individual component densities, as described below,  
rather than considering the $\beta$-likelihood for the overall mixture density $f_{\boldsymbol{\theta}}$ as in \cite{dpdjapan}.
Our approach leads to a valid objective function which has a much simpler form that is fairly straightforward to maximize through EM type iterative algorithms 
even for higher dimensions ($p>1$). 
As a result, our algorithm also leads to clustering and outlier detection and is structurally similar to the TCLUST algorithm 
(\cite{tclust}) but the source of robustness is different. 
Instead of performing a likelihood based trimming, we invoke the $\beta$-likelihood from the minimum DPD approach. 
The motivation comes from the fact that outliers (if present) may also provide useful information about the system; so they should be further scrutinized rather than be eliminated by trimming. 

In order to describe our proposed algorithm, let us note that even the likelihood function of the normal mixture model given in (\ref{Eq3}) 
is difficult to maximize directly with respect to $\boldsymbol{\theta}$ and a different expression for the likelihood function is used for the computation of MLE via EM algorithms. 
Consider the missing assignment functions 
\[
Z_{j}(\boldsymbol{X}_{i},\boldsymbol{\theta})= 
\begin{cases}
1,& \text{if } \boldsymbol{X}_{i} \in C_{j}\\
0,              & \text{otherwise}
\end{cases}
\]
with $C_{j}$ as the $j$-th cluster, $j=1,2,...,k$. Using these assignment functions, the likelihood function can also be presented as,
\begin{align*}
L(\boldsymbol{\theta},F_{n})=\prod_{j=1}^{k}\prod_{i \in C_{j}}\pi_{j}\phi_{p}(\boldsymbol{X}_{i},\boldsymbol{\mu}_{j},\boldsymbol{\Sigma}_{j})= \prod_{i=1}^{n}\prod_{j=1}^{k}\pi_{j}^{Z_{j}(\boldsymbol{X}_{i},\boldsymbol{\theta})}{\phi_{p}(\boldsymbol{X}_{i},\boldsymbol{\mu}_{j},\boldsymbol{\Sigma}_{j})}^{Z_{j}(\boldsymbol{X}_{i},\boldsymbol{\theta})}.
\end{align*}
Instead of maximizing $L(\boldsymbol{\theta},F_{n})$, it is mathematically equivalent and convenient to maximize,
\begin{align*}
l(\boldsymbol{\theta},F_{n})&=
\;\text{log}\;L(\boldsymbol{\theta},F_{n})\\
&= \sum_{i=1}^{n}\sum_{j=1}^{k}Z_{j}(\boldsymbol{X}_{i},\boldsymbol{\theta})[\log\;\pi_{j}+\log\;\phi_{p}(\boldsymbol{X}_{i},\boldsymbol{\mu}_{j},\boldsymbol{\Sigma}_{j})]= \sum_{j=1}^{k}[n_{j}\log\;\pi_{j}+\sum_{i \in C_{j}} \log\;\phi_{p}(\boldsymbol{X}_{i},\boldsymbol{\mu}_{j},\boldsymbol{\Sigma}_{j})],
\end{align*}
where $n_{j}=\sum_{i=1}^{n}Z_{j}(\boldsymbol{X}_{i},\boldsymbol{\theta})$ represents the number of observation in the $j$-th cluster for $j=1, \ldots, k$.

Now our goal is to use the $\beta$-likelihood instead of the ordinary log-likelihood for the estimation of the parameter $\boldsymbol{\theta}$. Hence, for $j=1, \ldots, k$,  we replace separately the individual term  $\sum_{i \in C_{j}} \log\;\phi_{p}(\boldsymbol{X}_{i},\boldsymbol{\mu}_{j},\boldsymbol{\Sigma}_{j})$ by $\dfrac{n_j}{1+\beta}l_\beta^{(j)}(\boldsymbol{\theta})$, an appropriate constant multiple of the $\beta$-likelihood function $l_\beta^{(j)}(\boldsymbol{\theta})$ of the $j$-th component density given by,
\begin{equation}
\label{Eq1.2}
l_{\beta}^{(j)}(\boldsymbol{\theta})=(1+\frac{1}{\beta})\frac{1}{n_j}\sum_{i \in C_{j}}\phi_{p}^{\beta}(\boldsymbol{X}_{i},\boldsymbol{\mu}_{j},\boldsymbol{\Sigma}_{j})-\int \phi_{p}^{1+\beta}(\boldsymbol{x},\boldsymbol{\mu}_{j},\boldsymbol{\Sigma}_{j}) \;d\boldsymbol{x}.
\end{equation}
Under suitable modification with constants independent of $\boldsymbol{\theta}$, $\dfrac{n_j}{1+\beta}l_\beta^{(j)}(\boldsymbol{\theta})$ indeed converges to \\ $\sum_{i \in C_{j}} \log\;\phi_{p}(\boldsymbol{X}_{i},\boldsymbol{\mu}_{j},\boldsymbol{\Sigma}_{j})$ as $\beta$ tends to $0$. Our $\beta$ modified objective function thus becomes,\\ $ \sum_{j=1}^{k}\left[n_{j}\log\;\pi_{j}+\frac{n_{j}}{1+\beta} l_{\beta}^{(j)}(\boldsymbol{\theta})\right]$, which after some algebra simplifies to 
\begin{equation}
\label{Eq1.3}
nE_{F_{n}}\left[\sum_{j=1}^{k}Z_{j}(\boldsymbol{X},\boldsymbol{\theta})\log\;\pi_{j}+\frac{1}{\beta}\sum_{j=1}^{k}Z_{j}(\boldsymbol{X},\boldsymbol{\theta})\phi_{p}^{\beta}(\boldsymbol{X},\boldsymbol{\mu}_{j},\boldsymbol{\Sigma}_{j})-\frac{1}{1+\beta}\sum_{j=1}^{k}Z_{j}(\boldsymbol{X},\boldsymbol{\theta})\int \phi_{p}^{1+\beta}(\boldsymbol{x},\boldsymbol{\mu}_{j},\boldsymbol{\Sigma}_{j}) \;d\boldsymbol{x}\right].
\end{equation}

\noindent Hence, it is enough to maximize,
\begin{equation}
\centering
\label{Eq4}
L_{\beta}(\boldsymbol{\theta},F_{n})=E_{F_{n}}\left[\sum_{j=1}^{k}Z_{j}(\boldsymbol{X},\boldsymbol{\theta})\left[\log\;\pi_{j}+\frac{1}{\beta}\phi_{p}^{\beta}(\boldsymbol{X},\boldsymbol{\mu}_{j},\boldsymbol{\Sigma}_{j})-\frac{1}{1+\beta}\int \phi_{p}^{1+\beta}(\boldsymbol{x},\boldsymbol{\mu}_{j},\boldsymbol{\Sigma}_{j}) \;d\boldsymbol{x}\right]\right].
\end{equation}
Equation (\ref{Eq4}) represents the empirical optimization problem. 
Assuming $F$ to be the true probability measure of $\boldsymbol{X}_{1}$, the corresponding theoretical objective function is given by,
\begin{equation}
\centering
\label{Eq5}
L_{\beta}(\boldsymbol{\theta},F)=E_{F}\left[\sum_{j=1}^{k}Z_{j}(\boldsymbol{X},\boldsymbol{\theta})\left[\log\;\pi_{j}+\frac{1}{\beta}\phi_{p}^{\beta}(\boldsymbol{X},\boldsymbol{\mu}_{j},\boldsymbol{\Sigma}_{j})-\frac{1}{1+\beta}\int \phi_{p}^{1+\beta}(\boldsymbol{x},\boldsymbol{\mu}_{j},\boldsymbol{\Sigma}_{j}) \;d\boldsymbol{x}\right]\right].
\end{equation}
To solve the aforesaid estimation problem, we need a specific algebraic form of $Z_{j}(\boldsymbol{X}_{i},\boldsymbol{\theta})$. 
That is, we need a discrimination rule which can assign a particular observation $\boldsymbol{X}_{i}$ to a cluster $C_{j}$ systematically. 
The most well-known discrimination rule is based on likelihood method, originally proposed by R.A. Fisher, and was used in TCLUST method by  \cite{tclust}. We are also going to use the likelihood based discrimination rule which is defined below. 

\noindent\textbf{Discriminant Function:}  Given $\boldsymbol{\theta} \in \boldsymbol{\Theta}_{C}$, we define the discriminant functions 
\begin{align*}
D_{j}(\boldsymbol{X},\boldsymbol{\theta}) &= \pi_{j}\phi_{p}(\boldsymbol{X},\boldsymbol{\mu}_{j},\boldsymbol{\Sigma}_{j})  \;\; \text{and}\;D(\boldsymbol{X},\boldsymbol{\theta}) =\underset{1 \leq j \leq k}{\text{max}}D_{j}(\boldsymbol{X},\boldsymbol{\theta})
\end{align*}
and we include a particular observation $\boldsymbol{X}_{i}$ to the $j$-th cluster $C_{j}$ if $D(\boldsymbol{X}_{i},\boldsymbol{\theta})=D_{j}(\boldsymbol{X}_{i},\boldsymbol{\theta})$. It is easy to observe that $D_{j}(\boldsymbol{X},\boldsymbol{\theta})$ is proportional to the posterior probability of $C_j$ given $\boldsymbol{X}$ which equals $c(\boldsymbol{\theta})\pi_{j}\phi_{p}(\boldsymbol{X};\boldsymbol{\mu}_j,\boldsymbol{\Sigma}_j)$, where $c(\boldsymbol{\theta})=(\sum_{j=1}^{k}\pi_{j}\phi_{p}(\boldsymbol{X}_i;\boldsymbol{\mu}_j,\boldsymbol{\Sigma}_j))^{-1}$).

Note that although the discrimination is based on the likelihood, we compute its empirical values 
by substituting the robust parameter estimates obtained through the maximum pseudo $\beta$-likelihood method, 
which guarantees proper stability. 	
In terms of this discriminant functions, the assignment functions can be written as $Z_{j}(\boldsymbol{X}_{i},\boldsymbol{\theta})=I\left[D(\boldsymbol{X}_{i},\boldsymbol{\theta})=D_{j}(\boldsymbol{X}_{i},\boldsymbol{\theta})\right].$

This discrimination functions will also be used for outlier detection at the end of our algorithm. A small value of the discriminant function is a good indicator of possible anomaly in respect of a particular observation in relation to the presumed cluster.

We refer to the right hand side of Equation (\ref{Eq4}) as the empirical pseudo $\beta$-likelihood function and the right hand side of Equation (\ref{Eq5}) as the theoretical pseudo $\beta$-likelihood ($PL_\beta$) function. We define the maximizers of these empirical and theoretical pseudo $\beta$-likelihood functions, with respect to $\boldsymbol{\theta}$,  as the maximum pseudo $\beta$-likelihood estimator (MPLE$_{\beta}$) and the maximum pseudo $\beta$-likelihood functional (MPLF$_{\beta}$), respectively.

It may be noted that after using the alternative version of the $\beta$-likelihood for the individual component densities, our objective function in Equation (\ref{Eq4}) is no longer the objective function of the actual MDPDE of the normal mixture model. In this way we differ from the \cite{dpdjapan} approach, although the two approaches coincide for $\beta = 0$ (the case of the ordinary likelihood). The source of robustness of our procedure as well as  our motivation and philosophy are, however, strictly in line with those of the MDPDEs. Accordingly we feel that the  \enquote{pseudo $\beta$-likelihood} and the \enquote{maximum pseudo $\beta$-likelihood  estimator} represent logical nomenclature for our method and our estimator.

However as we have already noted in the previous section, the mixture normal likelihood is unbounded as a function of
the parameters so that its direct maximization is not a well defined problem. The same difficulty also arises in case of the pseudo $\beta$-likelihood of
the mixture normal model leading to singularities in the estimates of covariance matrices. To circumvent this problem in one dimension, \cite{hathway} proposed a constraint on the ratios of component standard deviations. Later, \cite{tclust} generalized this constraint in the multivariate set-up in terms of eigenvalues to avoid singularity of the dispersion matrix estimators. 
We will impose the same eigenvalue ratio constraint in our case.
Let us denote $\lambda_{jl}$ to be the $l$-th eigenvalue of the covariance matrix $\boldsymbol{\Sigma}_{j}$ for $1 \leq j \leq k$ and $1 \leq l \leq p$,
and put  $M=\underset{1 \leq j \leq k}{\text{max}}\;\underset{1 \leq l \leq p}{\text{max}}\lambda_{jl}$ and 
$m=\underset{1 \leq j \leq k}{\text{min}}\;\underset{1 \leq l \leq p}{\text{min}}\lambda_{jl}$,
the largest and smallest eigenvalues, respectively.\\

\noindent\textbf{Eigenvalue Ratio (ER) Constraint:}  For a prespecified constant $c \geq 1$, the system satisfies the condition
\begin{equation}
\centering
\label{Eq6}
\frac{M}{m} \leq c.
\end{equation}

Along with the eigenvalue ratio constraint, we will make the following additional assumption to avoid singularity and establish the existence and consistency of our proposed estimator.\\

\noindent\textbf{Non-singularity (NS) Constraint:} We assume that the smallest eigenvalue $m$ satisfies
$
m\geq c_{1}
$
for some small positive constant $c_{1}$ which is prespecified. 
\bigskip

Under the above two constraints, characterized by constants  $C=(c,c_{1})$, 
our search for the estimator can be confined with the restricted parameter space defined as
$$
\boldsymbol{\Theta}_{C}=\left\{\boldsymbol{\theta}:\boldsymbol{\theta}= (\pi_{1}, \pi_{2},...., \pi_{k}, \boldsymbol{\mu}_{1}, \boldsymbol{\mu}_{2},...,\boldsymbol{\mu}_{k}, \boldsymbol{\Sigma}_{1}, \boldsymbol{\Sigma}_{2},..., \boldsymbol{\Sigma}_{k})\;\text{with}\; \frac{M}{m} \leq c \mbox{     and } m\geq c_{1} \right\}.
$$

Here $c=1$ provides the strongest possible restriction in case of the eigenvalue ratio constraint. But a large value of $c$ is more pragmatic in the sense that the estimation problem becomes less restrictive in this case. On the other hand a small but positive value of $c_{1}$ is preferred both in terms of theoretical aspects (such as existence and consistency) as well as practical aspects (in the sense that the constraint does not become too stringent).

The non-singularity constraint is not really a stringent assumption in the presence of the eigenvalue ratio constraint. We need the non-singularity constraint (in the presence of the eigenvalue ratio constraint) only when the sequence of the smallest eigenvalue tends to $0$ and the sequence of the largest eigenvalue is of same order as that of the sequence of smallest eigenvalues. This scenario is quite rare in practice especially under the positive definiteness of the dispersion matrices. This assumption is crucial to establish some of the theoretical properties (such as existence and consistency, see Subsection \ref{SEC:theoretical_result} for details) of the parameter estimates even in case of the above mentioned pathological case.

\subsection{Computational Algorithm for the MPLE$_{\beta}$} 

To estimate the unknown parameters, to form the clusters and to detect the outliers present in the dataset, 
we need to optimize the empirical objective function on the right hand side of Equation (\ref{Eq4}). 
We hereby propose an approximate EM like algorithm which solves this empirical problem and provides reasonable estimates of the unknown parameters. We refer to this algorithm as the MPLE$_{\beta}$ algorithm. Before describing this algorithm, we need to derive another iterative procedure to find the minimum density power divergence estimators of $\boldsymbol{\mu}$ and $\boldsymbol{\Sigma}$ based on a random sample $\boldsymbol{X}_{1}, \boldsymbol{X}_{2},...,\boldsymbol{X}_{n}$ (note that this sample is not the sample mentioned at the beginning of Section \ref{section2} which is modelled by a multivariate normal mixture distribution) which is modelled with a $N_{p}(\boldsymbol{\mu},\boldsymbol{\Sigma})$ model. Let us note that, to find the aforesaid estimators, we need to minimize
\begin{align*}
\frac{1}{1+\beta}\int \phi_{p}^{1+\beta}(\boldsymbol{x},\boldsymbol{\mu},\boldsymbol{\Sigma}) \;d\boldsymbol{x} - \frac{1}{n\beta}\sum_{i=1}^{n}\phi_{p}^{\beta}(\boldsymbol{X}_{i},\boldsymbol{\mu},\boldsymbol{\Sigma})
\end{align*}
with respect to $\boldsymbol{\mu}$ and $\boldsymbol{\Sigma}$ (as in Equation (\ref{Eq2}) which is same as the maximization of $\beta$-likelihood with respect to $\boldsymbol{\mu}$ and $\boldsymbol{\Sigma}$). This minimization can be done by solving the following system of equations.

\begin{equation}
\centering
\label{eqq_irls}
\begin{array}{l}
\frac{1}{n}\sum_{i=1}^{n}e^{-\frac{\beta}{2}(\boldsymbol{X}_{i}-\boldsymbol{\mu})^{'}\boldsymbol{\Sigma}^{-1}(\boldsymbol{X}_{i}-\boldsymbol{\mu})}(\boldsymbol{X}_{i}-\boldsymbol{\mu})=0,\\
\frac{1}{n}\sum_{i=1}^{n}e^{-\frac{\beta}{2}(\boldsymbol{X}_{i}-\boldsymbol{\mu})^{'}\boldsymbol{\Sigma}^{-1}(\boldsymbol{X}_{i}-\boldsymbol{\mu})}(\boldsymbol{\Sigma}-(\boldsymbol{X}_{i}-\boldsymbol{\mu})(\boldsymbol{X}_{i}-\boldsymbol{\mu})^{'})=\frac{\beta}{(1+\beta)^{\frac{p}{2}+1}}\boldsymbol{\Sigma}.
\end{array}
\end{equation}

The mathematical derivations of the aforesaid system of equations are presented in Section $2$ of the Appendices. Here we propose an iteratively reweighted least squares (IRLS)\label{IRLS} algorithm to solve the aforesaid system of equations as follows.
\begin{algorithm}[IRLS]
	\label{algo1}
	~~~~~~~~~~~~~~~~~~~~~~~~~~~~~~~~~~~~~~~~~~~~~~~~~~~~~~~~~~~~~~~~~~~~~~~~~~~~~~~~~~~~~~~~~~~
	\begin{enumerate}
		\item\textbf{Starting Value:} A non-robust starting value may affect a robust algorithm severely. Thus we will use robust starting values for both the mean and the dispersion matrix. For the mean vector we use the componentwise sample medians, that is, $\hat{\boldsymbol{\mu}}^{0}=(u_{1},u_{2},...,u_{p})$ where $u_{j}=\text{median}\{X_{1j},X_{2j},...,X_{nj}\}$ for $1\leq j\leq p$. For the dispersion matrix we have chosen the starting value as $\hat{\boldsymbol{\Sigma}}^{0}$ such that,    
		\[
		\hat{\Sigma}^{0}_{ij}= 
		\begin{cases}
		1.4826^{2}\;\text{median}\{|X_{li}-u_{i}|^{2},1\leq l\leq n\},& \text{if } i=j,\\
		1.4826^{2}\;\text{median}\{(X_{li}-u_{i})(X_{lj}-u_{j}),1\leq l\leq n\},  & \text{otherwise}.
		\end{cases}
		\]
		$\hat{\boldsymbol{\Sigma}}^{0}$ can be treated as the multivariate generalization of the median absolute deviation (MAD) estimator of dispersion in one dimension.
		\item \textbf{Update:} Let, $\hat{\boldsymbol{\mu}}^{l}$ and $\hat{\boldsymbol{\Sigma}}^{l}$ be the estimates at the $l$-th step of iteration. 	Calculate the current weights,
		\begin{align*}
		w^{l}_{i}=e^{-\frac{\beta}{2}(\boldsymbol{X}_{i}-\hat{\boldsymbol{\mu}}^{l})^{'}(\hat{\boldsymbol{\Sigma}}^{l})^{-1}(\boldsymbol{X}_{i}-\hat{\boldsymbol{\mu}}^{l})}.
		\end{align*}
		Now update,
		\begin{align*}
		\hat{\boldsymbol{\mu}}^{l+1}=\frac{\sum_{i=1}^{n}w^{l}_{i}\boldsymbol{X}_{i}}{\sum_{i=1}^{n}w^{l}_{i}}
		\end{align*}
		and
		\begin{align*}
		\hat{\boldsymbol{\Sigma}}^{l+1}=\frac{\sum_{i=1}^{n}w^{l}_{i}(\boldsymbol{X}_{i}-\hat{\boldsymbol{\mu}}^{l+1})(\boldsymbol{X}_{i}-\hat{\boldsymbol{\mu}}^{l+1})^{'}}{\sum_{i=1}^{n}w^{l}_{i}-\frac{n\beta}{(1+\beta)^{\frac{p}{2}+1}}}.
		\end{align*}
		\item \textbf{Iteration:} Repeat step $2$ for a large number of times until, $||\hat{\boldsymbol{\mu}}^{l}-\hat{\boldsymbol{\mu}}^{l+1}|| \leq \epsilon$ and $||\hat{\boldsymbol{\Sigma}}^{l}-\hat{\boldsymbol{\Sigma}}^{l+1}|| \leq \epsilon$ for some small (prespecified) $\epsilon > 0$.
	\end{enumerate}
\end{algorithm}
Now, let us introduce the MPLE$_\beta$ algorithm for clustering, the main focus of the present proposal.

\begin{algorithm}[MPLE$_{\beta}$]
	
	~~~~~~~~~~~~~~~~~~~~~~~~~~~~~~~~~~~~~~~~~~~~~~~~~~~~~~~~~~~~~~~~~~~~~~~~~~~~~~~~~~~~~~~~~~~~~~~~~~~~~~~
	\begin{enumerate}
		\item \textbf{Initialization:}\label{MPLE} Initially, $k$ many random observations from the dataset are chosen as initial cluster centers, identity matrices of proper dimensions as initial dispersion matrices and the vector $(\frac{1}{k},\frac{1}{k},...,\frac{1}{k})$ as initial weights. Then the initial clusters  $C^{0}_{1}\;,C^{0}_{2}\;,...\;,C^{0}_{k}$ are constructed by the maximum likelihood principle which assigns a particular data point to the cluster which maximizes its likelihood. 
		(See subsequent Remark \ref{REM:initialization} for the effects of different initialization schemes on our algorithm).
		
		\item \textbf{Update:} Let, $C^{l}_{1}\;,C^{l}_{2}\;,...\;,C^{l}_{k}$ be the clusters at the $l$-th step of the algorithm 
		($l=0, 1, \ldots$). 
		\begin{enumerate}
			\item For each $1 \leq j \leq k$, obtain $n^{l}_{j}=|C^{l}_{j}|$,  $\hat{\pi}_{j}^{l}=\frac{n^{l}_{j}}{n}$
			(see Theorem \ref{THM2} for the justification).

			\item For each $1 \leq j \leq k$, given $n^{l}_{j}$, obtain $\hat{\boldsymbol{\mu}}_{j}^{l}$ and $\hat{\boldsymbol{\Sigma}}_{j}^{l}$ 
			by maximizing the $\beta$-likelihood of the observations which are currently assigned to the $j$-th cluster $C^{l}_{j}$. 
			Specifically,
			\begin{equation}
			\centering 
			\label{Eq7}
			(\hat{\boldsymbol{\mu}}_{j}^{l},\hat{\boldsymbol{\Sigma}}_{j}^{l})=\underset{\boldsymbol{\mu}_{j},\boldsymbol{\Sigma}_{j}}{\text{argmax}}\;\left [ \frac{1}{n^{l}_{j}\beta}\sum_{i \in C^{l}_{j}}\phi_{p}^{\beta}(\boldsymbol{X}_{i},\boldsymbol{\mu}_{j},\boldsymbol{\Sigma}_{j})-\frac{1}{1+\beta }\int \phi_{p}^{1+\beta}(\boldsymbol{x},\boldsymbol{\mu}_{j},\boldsymbol{\Sigma}_{j}) \;d\boldsymbol{x}\right].
			\end{equation}
			(by the above mentioned IRLS Algorithm \ref{algo1}.)
			\item If the full set of eigenvalues $\boldsymbol{\Lambda}=(\boldsymbol{\Lambda}_1,\ldots,\;\boldsymbol{\Lambda}_k)$ of the component dispersion matrix estimates does not satisfy the either of the ER or NS constraints, we replace $\boldsymbol{\Lambda}$ by another vector $\boldsymbol{\tilde{\Lambda}}$ which minimizes $||\boldsymbol{\tilde{\Lambda}}-\boldsymbol{\Lambda}||^{2}$ subject to the aforesaid constraints. That is, we need to obtain another vector \enquote{closest} to the existing set of eigenvalues which satisfies both the constraints. The ER constrain can be mathematically rephrased as $\lambda_{jl}-c\lambda_{su}\leq 0$ for all $(j,l)\neq(s,u)$. The NS constraint can mathematically be rephrased as $\lambda_{jl}\geq c_1$ for all $(j,l)$. Both of these constraints are linear constraints. Dykstra's algorithm (\cite{Dykstra}) can be used to solve the aforesaid constrained minimization problem.
			
			\item The estimates of the current step are then given by
			\begin{equation}
			\centering
			\label{Eq8}
			\hat{\boldsymbol{\theta}}^{l}=  (\hat{\pi}_{1}^{l}, \hat{\pi}_{2}^{l},...., \hat{\pi}_{k}^{l}, \hat{\boldsymbol{\mu}}_{1}^{l}, \hat{\boldsymbol{\mu}}_{2}^{l},...,\hat{\boldsymbol{\mu}}_{k}^{l}, \hat{\boldsymbol{\Sigma}}_{1}^{l}, \hat{\boldsymbol{\Sigma}}_{2}^{l},..., \hat{\boldsymbol{\Sigma}}_{k}^{l}).
			\end{equation}
			
			\item Construct the updated clusters $C^{l+1}_{1}\;,C^{l+1}_{2}\;,...\;,C^{l+1}_{k}$ as follows. 
			For each $1 \leq i \leq n$, assign $\boldsymbol{X}_{i}$ to $C^{l+1}_{j}$ if
			$
			D_{j}^{l}(\boldsymbol{X}_{i},\hat{\boldsymbol{\theta}}^{l})=	D^{l}(\boldsymbol{X}_{i},\hat{\boldsymbol{\theta}}^{l}),
			$
			where 
			$$D_{j}^{l}(\boldsymbol{X}_{i},\hat{\boldsymbol{\theta}}^{l})=\hat{\pi}_{1}^{l}\phi_{p}(\boldsymbol{X}_{i},\hat{\boldsymbol{\mu}}_{j}^{l},\hat{\boldsymbol{\Sigma}}_{j}^{l})
			~~\mbox{ and }~~
			D^{l}(\boldsymbol{X}_{i},\hat{\boldsymbol{\theta}}^{l})=\underset{1 \leq j \leq k}{\text{max}}D_{j}^{l}(\boldsymbol{X}_{i},\hat{\boldsymbol{\theta}}^{l}).
			$$
		\end{enumerate}
		
		\item \textbf{Stopping Rule:} Repeat step $2$ for a large (preassigned) number of times or until the cluster configurations become stable.
		
		\item \textbf{Outlier Detection:} After the process terminates and the final clusters $C_{1}, C_{2},...,C_{k}$ 
		and their configurations are available, perform the exercise in  Step 2 to obtain the estimate $
		\hat{\boldsymbol{\theta}}=  (\hat{\pi}_{1}, \hat{\pi}_{2},...., \hat{\pi}_{k}, \hat{\boldsymbol{\mu}}_{1}, \hat{\boldsymbol{\mu}}_{2},...,\\ \hat{\boldsymbol{\mu}}_{k}, \hat{\boldsymbol{\Sigma}}_{1}, \hat{\boldsymbol{\Sigma}}_{2},..., \hat{\boldsymbol{\Sigma}}_{k}).$ Now, for each $1 \leq i \leq n$, if $\boldsymbol{X}_{i}$ is assigned to $C_{j}$ for some $1 \leq j \leq k$, calculate $
		D_{j}(\boldsymbol{X}_{i},\hat{\boldsymbol{\theta}})= \hat{\pi}_{j}\phi_{p}(\boldsymbol{X}_{i},\hat{\boldsymbol{\mu}}_{j},\hat{\boldsymbol{\Sigma}}_{j}).$
		If $D_{j}(\boldsymbol{X}_{i},\hat{\boldsymbol{\theta}})\leq T $ for some small positive prespecified constant $T$, classify $\boldsymbol{X}_{i}$ as an outlier.
	\end{enumerate}
\end{algorithm}
\bigskip  
\begin{remark}\label{REM:initialization}
	To initialize the aforesaid algorithm, we have applied a random initialization scheme as stated in the initialization step of the algorithm. But randomly selected data points can produce very good as well as very bad estimators after completing the iterations. Hence, the algorithm should be repeated several times using different choices of initialization and then provide the solution which leads to the maximum value of the objective function. This initialization scheme has also been proposed in \cite{tclust}, although an improved version has been proposed in \cite{tclustpackage_2013}. Non-robust initial choices produce spurious maxima and the misclassification rates as well as bias and mean squared errors of the estimates increase drastically. This problem also arises in case of other robust clustering algorithms like the TCLUST and trimmed K-means. A good initialization is also essential to ensure that the empirical objective function monotonically increases to its maximum value after reasonable amount of iterations. 
\end{remark}     		

\begin{remark}
	For positive $\beta$, our method smoothly discounts the ill effects of anomalous observations without physically deleting them. On the other hand, TCLUST forcefully trims those anomalous observations along with others that may not be anomalous.
\end{remark}

\subsection{Selection of Tuning Parameters}
\label{sec2.3}
To implement the algorithm, we have to choose the tuning parameters $\beta$, $c$, $c_{1}$ and $T$ with appropriate justification. 
In general these choices are not straightforward; some comments are provided in the following remarks.

The tuning parameter $\beta$ in the DPD balances robustness and asymptotic efficiency of the resulting MDPDE; 
a small value (close to $0$) of $\beta$ is appropriate to achieve higher asymptotic efficiency  
whereas a large value is appropriate for higher stability (see \cite{basubook}, Section $9.6$, for further details). 	
This is also the case for our proposed estimators under the normal mixture models.
In fact, from the limit of our objective function as $\beta\rightarrow 0$ (discussed in Section \ref{SEC:theoretical_method}), 
it is evident that our DPD based algorithm leads to the usual (non-robust but most efficient) likelihood based estimation procedure. 
Hence, as $\beta$ approaches $0$, our algorithm becomes non-robust.
To achieve robustness as well as high asymptotic efficiency, we use a small but positive value of $\beta$ (in the interval $(0,0.5]$). However, sophisticated theoretical techniques for choosing an optimal value of $\beta$ have been described by \cite{warwick} and \cite{basak}. What these methods essentially do is that they create an empirical estimate of the true mean square estimator as a function of the tuning parameter (and a suitable pilot estimator) which can then be appropriately minimized over the tuning parameter to obtain an \enquote{optimal} estimate of the unknown tuning parameter. But, in our algorithm, parameter estimation has to be done separately for each cluster, in each iteration. Each of these cases would,
ideally, require different optimal choices of the parameter $\beta$ and thus a single optimal choice of $\beta$ is not reasonable to derive for the full clustering problem. Further mathematical details of these procedures can be found in Appendix \ref{appen5}.

\begin{remark}
	\label{rem28}
	As we have already observed, a large value of the tuning parameter $c$ makes the optimization problem almost unrestricted. For some rational choices of $c$, we refer again to \cite{tclust} and \cite{punzo2}. It is natural to choose the value of $c_{1}$ to be close to $0$ as the value of this constant is needed to be positive to serve certain theoretical purposes but a small value of the same is less restrictive. To find an approximate optimal choice of the tuning parameter $T$, we have followed a \enquote{maximal-gap} idea. Actually, we declare those observations as outliers whose estimated discriminant values (the realized value of the discriminant function) ($\hat{\pi}_{j}\phi_{p}(\cdot,\hat{\boldsymbol{\mu}}_{j},\hat{\boldsymbol{\Sigma}}_{j})$ if the observation belongs to the $j$-th cluster) drop below the threshold value of $T$. So, we first rearrange these discriminant values of all the $n$ observations in an increasing order. This rearrangement should bring the discriminant values corresponding to the outliers (if any) at the first few positions of the sequence. Since the distant outliers are expected to have very low discriminant values in comparison with those of the regular observations, the sorted vector of discriminant values should contain a (possibly big) jump between the region containing the outliers and the region containing the regular observations. We have chosen $T$ to correspond to this \enquote{jump} region, approximated to powers of $10$ (a pictorial illustration can be found in Appendix \ref{appen6}). This maximal-gap idea can also be implemented with the TCLUST methodology (and possibly to some other robust clustering methodologies) to improve outlier detection. It could be the subject of a future research to investigate how this type of refinement might improve the performances of these methods. However, although the maximal gap strategy has worked satisfactorily in practically all of our numerical studies, we do not expect that choosing $T$ by this philosophy will work perfectly in every possible situation. For example, when we have a not-so-remote background contamination together with an additional extremely remote outlying
		observation, only the remotest data point would be discarded by the maximal-gap strategy, without discarding the not-so-remote background noise. There might be a way to generalize the idea of maximal
		(largest) gap by choosing the value of $T$ around the position of the $t$-th largest gap ($t = 2$ will serve
		the purpose for this example) depending on the situation. But choosing the optimal value of $t$ in this refinement may be a very difficult problem in itself.
		On the whole, the choice of the tuning parameter $T$ is, at best, a complicated proposition and substantial future research will be needed for a completely satisfactory solution.%However, an optimal choice of $T$ may not be found for some datasets following the maximal-gap philosophy. For example, if we have a not-so-remote background contamination together with an additional extremely remote outlying observation, only the remotest data point would be discarded, without discarding the not-so-remote background noise. Also in case of pure datasets, a small proportion of regular observations will be classified as outliers. More novel approaches for determining the optimal value of $T$ could be the subject of a future research which would work successfully in the aforesaid situations. }
\end{remark}

Finally note that, although we have performed robust estimation in the multivariate normal mixture model, the primary focus of our proposed algorithm has been on robust clustering. We need to choose the number of clusters $k$ appropriately. 
In many of the well-known clustering techniques, performance improves with increasing the value of $k$, 
but increasing the value of $k$ indefinitely may be inappropriate. Hence an optimal choice of $k$ is needed. 
Rate distortion theory gives nice insights into the problem of detecting optimal number of clusters. 
It applies the \enquote{jump} method which detects $k$ by maximizing efficiency and minimizing error using information based measurements. 
We refer to \cite{sugarjames} for details. A novel penalized likelihood based method was also proposed by \cite{kc} to optimally select the pair $(k,c)$ ($c$ of ER constraint).

\section{Properties of the Proposed Algorithm}
\label{SEC:theoretical_result}
\subsection{Theoretical Results}
\label{TR}
\begin{theorem}[MDPD Estimators]\label{THM1}
	Suppose $\boldsymbol{X}_{1}, \boldsymbol{X}_{2},...,\boldsymbol{X}_{n}$ be a random sample drawn from an unknown PDF $g$ 
	which is modelled by a family of $p$-dimensional normal distributions with mean vector $\boldsymbol{\mu}$ and dispersion matrix $\boldsymbol{\Sigma}$. 
	Then the MDPDEs of $\boldsymbol{\mu}$ and $\boldsymbol{\Sigma}$ can be obtained by solving the system of equations in (\ref{eqq_irls}).

\end{theorem}
The detailed derivation of the aforesaid system of equations can be found in Appendix \ref{appen2}. 

Our next theorem provides the mathematical justification behind the update of the estimate $\hat{\pi}_{j}$
in Step 2(a) of our proposed clustering algorithm.

\begin{theorem}\label{THM2}
	Given a particular cluster assignment $C_{1},\;C_{2},...,\;C_{k}$ and the estimates $\hat{\boldsymbol{\mu}}_{j}$, $\hat{\boldsymbol{\Sigma}}_{j}$, the optimal value of $\pi_{j}$ that maximizes (\ref{Eq4}) is given by, $\hat{\pi}_{j}=\frac{n_{j}}{n}$ where $n_{j}=|C_{j}|$.
\end{theorem}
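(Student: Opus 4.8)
The plan is to exploit the fact that the mixing weights $\pi_j$ enter the empirical pseudo $\beta$-likelihood in Equation~(\ref{Eq4}) only through the additive term $\log\pi_j$; neither $\phi_p^{\beta}(\boldsymbol{X}_i,\hat{\boldsymbol{\mu}}_j,\hat{\boldsymbol{\Sigma}}_j)$ nor the integral term involves $\pi_j$. Since the cluster assignment $C_1,\dots,C_k$ is held fixed, the assignment functions $Z_j(\boldsymbol{X}_i,\boldsymbol{\theta})$ are fixed indicators and $\sum_{i=1}^{n} Z_j(\boldsymbol{X}_i,\boldsymbol{\theta})=|C_j|=n_j$. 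First I would write $E_{F_n}$ as the sample average over $\boldsymbol{X}_1,\dots,\boldsymbol{X}_n$ and collect only the $\pi$-dependent part of the objective, so that maximizing (\ref{Eq4}) over $(\pi_1,\dots,\pi_k)$ reduces, up to an additive constant free of $\boldsymbol{\pi}$, to the constrained problem
\[
\max_{\pi_1,\dots,\pi_k}\ \frac{1}{n}\sum_{j=1}^{k} n_j \log\pi_j \quad\text{subject to}\quad \pi_j\geq 0,\ \sum_{j=1}^{k}\pi_j=1.
\]

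Next I would solve this constrained maximization by the method of Lagrange multipliers. Forming $\mathcal{L}(\boldsymbol{\pi},\lambda)=\sum_{j=1}^{k} n_j\log\pi_j - \lambda\big(\sum_{j=1}^{k}\pi_j-1\big)$ and setting $\partial\mathcal{L}/\partial\pi_j = n_j/\pi_j - \lambda = 0$ gives $\pi_j = n_j/\lambda$; imposing the normalization constraint $\sum_j\pi_j=1$ forces $\lambda=\sum_{j=1}^{k}n_j=n$, and hence $\hat{\pi}_j=n_j/n$. An equivalent and perhaps more transparent route is the information (Gibbs) inequality: writing $q_j=n_j/n$, one has $\sum_{j=1}^{k} q_j\log\pi_j = -H(\boldsymbol{q}) - \mathrm{KL}(\boldsymbol{q}\,\|\,\boldsymbol{\pi})$, where the Kullback--Leibler term is nonnegative and vanishes precisely when $\boldsymbol{\pi}=\boldsymbol{q}$, so the maximum over the simplex is attained exactly at $\hat{\pi}_j=n_j/n$.

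Finally I would confirm that this stationary point is the global maximizer by noting that $\sum_{j=1}^{k} n_j\log\pi_j$ is concave on the convex simplex, being a nonnegative combination of the concave functions $\log\pi_j$; thus the interior critical point is the unique maximum. The logarithmic barrier automatically keeps the solution away from the boundary whenever $n_j>0$, while the usual convention $0\log 0 = 0$ handles any empty cluster by assigning $\hat{\pi}_j=0$. There is no substantial obstacle in this argument; the only mild care needed is verifying that the location and scatter terms genuinely drop out of the $\pi$-optimization (they do, by inspection of (\ref{Eq4})) and handling the boundary behaviour of the simplex, both of which are routine.
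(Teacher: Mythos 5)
Your proposal is correct and follows essentially the same route as the paper: both reduce the problem to maximizing $\sum_{j=1}^{k} n_j \log\pi_j$ subject to $\sum_{j=1}^{k}\pi_j=1$ and solve it via the Lagrangian. You actually carry the Lagrange computation through to $\hat{\pi}_j=n_j/n$ and add a Gibbs-inequality/concavity check of global optimality, which the paper leaves implicit.
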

The proof is discussed in Appendix \ref{appen2}.
%
%\begin{proof}
%	Since the cluster assignments and the estimates  $\hat{\boldsymbol{\mu}}_{j}$, $\hat{\boldsymbol{\Sigma}}_{j}$ are known, 
%	the values of the assignment functions $Z(\cdot,\cdot)$ are also known. 
%	Thus, maximizing (\ref{Eq4}) with respect to $(\pi_{1},\pi_{2},...,\pi_{k})$ is equivalent to the optimization problem:
%	\begin{align*}
%	~\mbox{\textbf{maximize}}~~~~
%	\sum_{j=1}^{k}n_{j}\log\;\pi_{j},
%	~~~~~\mbox{	\textbf{subject to}}~~~~
%	\sum_{j=1}^{k}\pi_{j}=1.
%	\end{align*}
%	Now the optimal choice of $\pi_{j}$ can be directly obtained by optimizing the Lagrangian,
%	\begin{align*}
%	l(\pi_{1},\pi_{2},...,\pi_{k}, \lambda)&=\sum_{j=1}^{k}n_{j}\log\;\pi_{j}-\lambda\left (\sum_{j=1}^{k}\pi_{j}-1\right ).
%	\end{align*}
%\end{proof}

We have also studied further theoretical and asymptotic properties, such as, $(i)$ existence of a solution to the  optimization problems in Equations (\ref{Eq4}) (sample version) and (\ref{Eq5}) (population version) for the proposed procedures and $(ii)$ consistency of the resulting parameter estimates, as defined in Equation (\ref{Eq4}), which also yields the consistency of the estimated cluster centers and proportions. A detailed discussion (along with the mathematical proofs) can be found in \cite{arxiv}. 

\subsection{Robustness: Influence Function}
\label{SEC:IF}
To justify the robustness of our proposed estimators of cluster proportions, means and dispersion matrices, 
we will study the behaviour of their influence functions. 
Deriving these influence functions in higher dimensions is substantially difficult from the computational point of view. 
So, we will focus on the one dimensional case with two components ($p=1,\; k=2$); the implications will be in the same direction for higher dimensions. 
\cite{robustif1}, the only existing literature (as per the knowledge of the authors) for studying the robustness of TCLUST, also studies this special case only. We make the following assumption in order to circumvent cumbersome calculations.  
\begin{figure}[h!]
	\centering
	\begin{tabular}{cc}
		\begin{subfigure}{0.4\textwidth}\centering\includegraphics[height=2 in,width=\columnwidth]{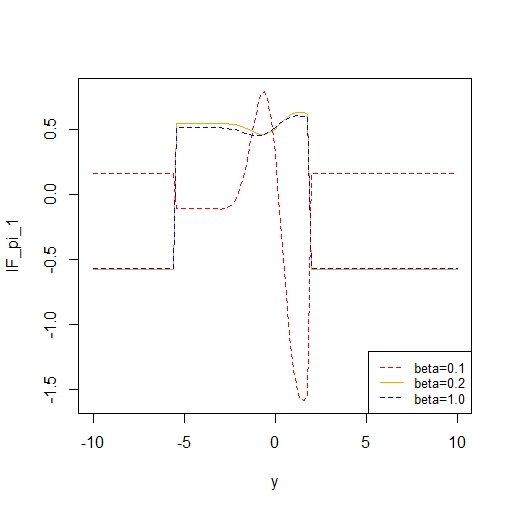}\caption{$IF(\pi_{1},P,y)$}\end{subfigure}&
		\begin{subfigure}{0.4\textwidth}\centering\includegraphics[height=2 in,width=\columnwidth]{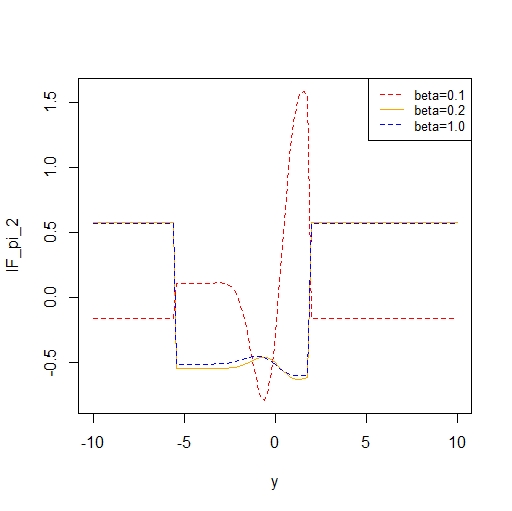}\caption{$IF(\pi_{2},P,y)$}\end{subfigure}\\
		\newline
		\begin{subfigure}{0.4\textwidth}\centering\includegraphics[height=2 in,width=\columnwidth]{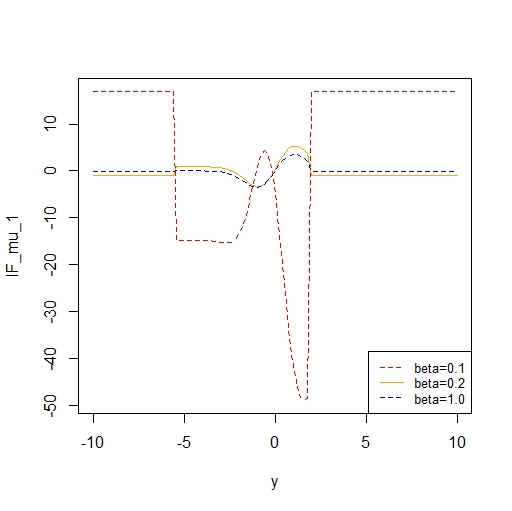}\caption{$IF(\mu_{1},P,y)$}\end{subfigure}&
		\begin{subfigure}{0.4\textwidth}\centering\includegraphics[height=2 in,width=\columnwidth]{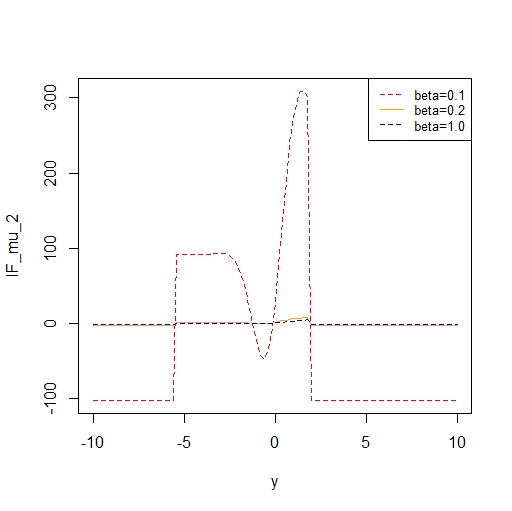}\caption{$IF(\mu_{2},P,y)$}\end{subfigure}\\
		\newline
		\begin{subfigure}{0.4\textwidth}\centering\includegraphics[height=2 in,width=\columnwidth]{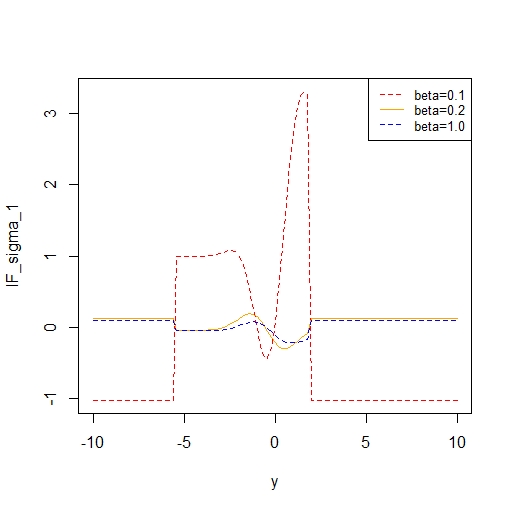}\caption{$IF(\sigma^{2}_{1},P,y)$}\end{subfigure}&
		\begin{subfigure}{0.4\textwidth}\centering\includegraphics[height=2 in,width=\columnwidth]{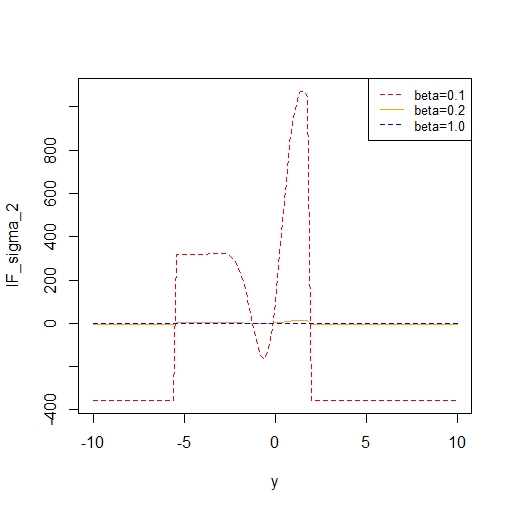}\caption{$IF(\sigma^{2}_{2},P,y)$}\end{subfigure}
	\end{tabular}
	\caption{Influence Functions of Different Functionals.}
	\label{table2}
\end{figure}

\begin{enumerate}
	\item[(\textbf{IF})] We assume that $\boldsymbol{\theta} \in \text{interior}(\boldsymbol{\Theta}_{C})$, that is, $\frac{M}{m}<c$ and $m>c_{1}$.
\end{enumerate}

To derive the influence function under our simple case ($p=1, k=2$), we first have to present our MPLF$_\beta$ (maximum pseudo $\beta$ likelihood functional) of the parameter $\boldsymbol{\theta}=(\pi_1, \pi_2, \mu_1, \mu_2, \sigma^2_1, \sigma^2_2)$ as a functional of the true probability distribution $P$ (rather CDF and $p(\cdot)$, the corresponding true probability density function), namely, 
\begin{align*}
\boldsymbol{\theta}_{\beta}(P)=(\pi_{1}(P),\pi_{2}(P),\mu_{1}(P),\mu_{2}(P),\sigma_{1}^{2}(P),\sigma_{2}^{2}(P)).
\end{align*}
This functional $\boldsymbol{\theta}_{\beta}(P)$ can be implicitly described through the following system of equations (see Appendix \ref{appen3.1}, for detailed the proof).
\begin{equation}
\centering
\label{Eq*}
\begin{array}{l}
\pi_{1}(P)=\int_{a(P)}^{b(P)}p(x)\;dx,\\
\pi_{1}(P)+\pi_{2}(P)=1,\\
D_{1}(c,\boldsymbol{\theta}(P))=D_{2}(c,\boldsymbol{\theta}(P))\; \text{for}\; c=a(P)\; \text{and}\; b(P),\\
\int_{a}^{b}f^{\beta}(x,\mu_{1},\sigma^{2}_{1})(x-\mu_{1})p(x)\;dx=0,\\
\int_{x \notin (a,b)}f^{\beta}(x,\mu_{2},\sigma^{2}_{2})(x-\mu_{2})p(x)\;dx=0,\\
\int_{a}^{b}f^{\beta}(x,\mu_{1},\sigma^{2}_{1})\left(\frac{(x-\mu_{1})^{2}}{2\sigma^{2}_{1}}-1\right)p(x)\;dx+\frac{\beta(P(b)-P(a))}{2(2\pi)^{\frac{\beta}{2}}(\sigma^{2}_{1})^{1+\frac{\beta}{2}}(1+\beta)^{\frac{3}{2}}}=0,\\
\int_{x \notin (a,b)}f^{\beta}(x,\mu_{2},\sigma^{2}_{2})\left(\frac{(x-\mu_{2})^{2}}{2\sigma^{2}_{2}}-1\right)p(x)\;dx+\frac{\beta(1-P(b)+P(a))}{2(2\pi)^{\frac{\beta}{2}}(\sigma^{2}_{2})^{1+\frac{\beta}{2}}(1+\beta)^{\frac{3}{2}}}=0,
\end{array}
\end{equation}
where $f(\cdot,\mu,\sigma^{2})$ is the PDF of univariate normal distribution with mean $\mu$ and variance $\sigma^{2}$.

The aforesaid system of equations will lead us to the influence functions of the necessary functionals through simple differentiation. The influence function $IF(T, P, y)$ of the functional $T$ at the point $y$, under the true distribution $P$, is obtained through a simple differentiation of the functional at a contaminated version of the distribution function $P$. Thus  
\begin{align*}
IF(T,P,y)=\underset{\epsilon \rightarrow 0}{\text{Lim}}\frac{T(P_{\epsilon})-T(P)}{\epsilon}=\frac{\partial T(P_{\epsilon})}{\partial \epsilon}\Big|_{\epsilon=0},
\end{align*} 
where $P_{\epsilon}=(1-\epsilon)P+\epsilon\wedge_{y}$ is the contaminated distribution where $\epsilon$ is the contamination proportion and $\wedge_{y}$ is the probability distribution degenerate at $y$. The assumption (\textbf{IF}) ensures that the eigenvalue ratio constraint and the non-singularity constraint also hold in case of the contaminated distribution for all $\epsilon$ close to zero. 

Let 
$ \boldsymbol{IF}(\boldsymbol{\theta}_\beta,P,y)=(IF(\pi_{1},P,y),IF(\pi_{2},P,y), IF(a,P,y), IF(b,P,y),
IF(\mu_{1},P,y), IF(\mu_{2},P,y),
\\
IF(\sigma^{2}_{1},P,y), IF(\sigma^{2}_{2},P,y))^{'}$ be the vector of influence functions of the aforesaid functionals in $\boldsymbol{\theta}_\beta$. Now differentiating the contaminated version of the system (\ref{Eq*}), we have the following system of linear equations,
\begin{equation}
\label{Eq**}
\centering
\boldsymbol{A}_{\beta}(\boldsymbol{\theta}_{0},a_{0},b_{0})\boldsymbol{IF}(\boldsymbol{\theta}_\beta,P,y)=\boldsymbol{B}_{\beta}(y,\boldsymbol{\theta}_{0},a_{0},b_{0}),
\end{equation}
where $\boldsymbol{\theta}_{0},a_{0},b_{0}$ are the true values of $\boldsymbol{\theta}, a$ and $b$ respectively, 
$\boldsymbol{A}_{\beta}(\boldsymbol{\theta}_{0},a_{0},b_{0})$ is a $8\times8$ coefficient matrix whose entries are independent of the contamination point $y$ and $\boldsymbol{B}_{\beta}(y,\boldsymbol{\theta}_{0},a_{0},b_{0})$ is an element in $\mathbb{R}^{8}$ 
which depends on $y$ only through $I(y \in (a_{0},b_{0}))$, $(y-\mu_{j0})exp(-\frac{\beta(y-\mu_{j0})^{2}}{2\sigma^{2}_{j0}})$ for $j=1,2$ and $p(y)$, the true density function corresponding to $P$.
Detailed formulas for $A_\beta$ and $B_\beta$ are given in Appendix \ref{appen3.2} along with the derivation of Equation (\ref{Eq**}).

The functions $I(y \in (a_{0},b_{0}))$, $(y-\mu_{j0})e^{-\frac{\beta(y-\mu_{j0})^{2}}{2\sigma^{2}_{j0}}}$ for $j=1,2$ are bounded while the PDF $p$ is not in general. This observation leads to the following result regarding the boundedness of the influence functions so that our estimators (and hence the clustering) are robust against outliers.
\begin{theorem}
	The influence function vector $\boldsymbol{IF}(P,y)$ exists if the coefficient matrix $\boldsymbol{A}_{\beta}(\boldsymbol{\theta}_{0},a_{0},b_{0})$ is invertible. Moreover, in case $\beta>0$, it is componentwise bounded as a function of $y$, if the true probability density function $p$ is bounded.
\end{theorem}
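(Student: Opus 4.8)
The plan is to read the conclusion directly off the linear system (\ref{Eq**}), treating it as the defining relation for the influence function vector.

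\textbf{Existence.} First I would observe that, by construction, $\boldsymbol{IF}(\boldsymbol{\theta}_\beta,P,y)$ is required to satisfy $\boldsymbol{A}_{\beta}(\boldsymbol{\theta}_{0},a_{0},b_{0})\,\boldsymbol{IF}(\boldsymbol{\theta}_\beta,P,y)=\boldsymbol{B}_{\beta}(y,\boldsymbol{\theta}_{0},a_{0},b_{0})$, which is obtained by differentiating the contaminated version of the defining system (\ref{Eq*}) at $\epsilon=0$. If $\boldsymbol{A}_{\beta}$ is invertible, this $8\times 8$ system has the unique solution $\boldsymbol{IF}(\boldsymbol{\theta}_\beta,P,y)=\boldsymbol{A}_{\beta}^{-1}\boldsymbol{B}_{\beta}$, so the influence function vector exists and is uniquely determined. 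Equivalently, invertibility of $\boldsymbol{A}_{\beta}$ is precisely the nonsingular-Jacobian hypothesis of the implicit function theorem applied to the system (\ref{Eq*}), whose Jacobian with respect to the parameters is $\boldsymbol{A}_{\beta}$; this is what legitimizes the implicit differentiation that produced (\ref{Eq**}) in the first place.

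\textbf{Boundedness.} For the second claim I would exploit that $\boldsymbol{A}_{\beta}^{-1}$ does not depend on the contamination point $y$. Consequently each coordinate of $\boldsymbol{IF}(\boldsymbol{\theta}_\beta,P,y)=\boldsymbol{A}_{\beta}^{-1}\boldsymbol{B}_{\beta}(y,\boldsymbol{\theta}_{0},a_{0},b_{0})$ is a fixed (in $y$) linear combination of the coordinates of $\boldsymbol{B}_{\beta}$, so it is enough to show that $\boldsymbol{B}_{\beta}$ is bounded componentwise as a function of $y$. By the stated form of $\boldsymbol{B}_{\beta}$, the dependence on $y$ enters only through $I(y\in(a_{0},b_{0}))$, the terms $(y-\mu_{j0})e^{-(y-\mu_{j0})^{2}/\sigma^{2}_{j0}}$ for $j=1,2$, and $p(y)$. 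The indicator is bounded by $1$. Each term of the second type, after absorbing the leading constant, has the shape $t\mapsto t\,e^{-\kappa t^{2}}$ with $\kappa>0$; such a map attains a finite maximum and tends to $0$ as $|t|\to\infty$, hence is bounded. The crucial point is that this Gaussian damping factor is exactly the density-power weight $\phi^{\beta}_{p}$ evaluated at $y$, so $\kappa>0$ holds only because $\beta>0$; when $\beta=0$ the weight degenerates to $1$ and the term reduces to the unbounded map $t\mapsto t$, recovering the non-robustness of the likelihood-based (TCLUST-type) functional. Finally $p(y)$ is bounded by the hypothesis that $p$ is a bounded density. Therefore each coordinate of $\boldsymbol{B}_{\beta}$, and hence of $\boldsymbol{IF}(\boldsymbol{\theta}_\beta,P,y)$, is a bounded function of $y$.

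\textbf{Main obstacle.} The substantive work lies upstream of this theorem: deriving the explicit entries of $\boldsymbol{A}_{\beta}$ and $\boldsymbol{B}_{\beta}$ by implicitly differentiating the system (\ref{Eq*}), including the endpoint-matching equations $D_{1}(c,\boldsymbol{\theta}(P))=D_{2}(c,\boldsymbol{\theta}(P))$ at $c=a,b$, which couple the cluster-boundary functionals $a(P),b(P)$ to the remaining parameters. Once those formulas are in hand, I expect the only delicate verification to be confirming that every $y$-dependent entry of $\boldsymbol{B}_{\beta}$ indeed carries the factor $\phi^{\beta}_{p}(y,\cdot)$ (or the bounded indicator), so that the sole potentially unbounded contribution is the isolated $p(y)$ term controlled by the boundedness assumption; this is what makes the qualifier $\beta>0$ indispensable and constitutes the heart of the robustness claim.
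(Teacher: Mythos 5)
Your proposal is correct and follows essentially the same route as the paper: the paper also reads existence off the invertibility of $\boldsymbol{A}_{\beta}$ in the linear system (\ref{Eq**}) and obtains boundedness from the fact that $\boldsymbol{A}_{\beta}$ is independent of $y$ while every $y$-dependent ingredient of $\boldsymbol{B}_{\beta}$ is either an indicator, a polynomial damped by the Gaussian weight $f^{\beta}(y,\cdot)$ (bounded precisely because $\beta>0$), or the density $p(y)$ controlled by assumption. The paper simply declares this "trivial from the above discussion"; you have spelled out the same argument.
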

The proof is trivial from the above discussion and the form of $B_\beta$ given in the Appendices. Note that, the density $p$ is always bounded for a non-singular normal mixture model. 

Now, let us study the behaviour of the influence functions graphically in a special case. Let us take the true distribution $P$ as a mixture of $N(0,1)$ and $N(5,4)$ with mixing proportions $\pi_{1}=\pi_{2}=0.5$. We have taken $\beta=0.1,\; 0.2 \; \text{and}\; 1$ and the true values of the boundaries $a$ and $b$ are found to be $-5.5$ and $1.95$ respectively. We have taken $c$ and $c_{1}$ to be $5$ and $0.1$ so that the restrictions $\dfrac{M}{m}=4<5=c$ and $m=1>0.1=c_{1}$ are satisfied. The influence functions of the functionals are plotted in Figure \ref{table2}.

The boundedness of the curves in Figure \ref{table2} indicate the stability and the robustness of our estimators. 
Additionally, the respective ranges of each of the influence functions shrink drastically as $\beta$ increases. 
It may also be noted that the influence functions are practically identical in case of $\beta=0.2$ and $\beta=1$. 
This observation indicates that very strong levels of stability have been already attained, at least in this example, 
for very small values of $\beta$. In case of $\beta=0$, the influence functions are unbounded; 
in fact at $\beta = 0$ the curve increases so fast, that a proper depiction of this case together with the positive $\beta$ cases 
in the same frame is not informative. 

\section{Simulation Studies}
\label{secsim}
\subsection{Simulation Set-up}

We now present some simulation studies to investigate the finite sample performance of our algorithm 
in terms of the properties of the  obtained estimators and the subsequent clustering, and 
compare it with the TCLUST (\cite{tclust}), trimmed K-means (TKMEANS) (\cite{trimmedkmeans}), K-medoids (with data matrix and the Manhattan distance, abbreviated as KMEDOIDS)\\ (\cite{kmedoid}) and the MCLUST (\cite{mclust}) algorithms, some of the well-known and/or state-of-the-art robust clustering methods. 
To carry out the simulation study, we have generated samples of size $n=1000$ from  $k=3$-component 
and $p$-dimensional normal mixtures (stochastic mixtures with certain assignment probabilities to each of the clusters; since the clusters are simulated stochastically, the cluster weights are approximately equal to the assignment probabilities) with component means $\boldsymbol{\mu}_{1}=(0,0,...,0)^{t}$,  $\boldsymbol{\mu}_{2}=(5,5,...,5)^{t}$ and $\boldsymbol{\mu}_{3}=(-5,-5,...,-5)^{t}$ 
and identical covariance matrices $\boldsymbol{\Sigma}$. Different choices of $p$ and $\boldsymbol{\Sigma}$ are taken to cover a reasonable range of data shapes. 
To study the robustness and efficiency of our algorithm, both pure (contamination free) and contaminated datasets are used. 
Three types of data contamination are used for this purpose: $(i)$ uniform noise contamination from the $p$-dimensional cuboid $[-10,10]^{p}$, where only those data points whose Mahalanobis distance from any of the cluster centers are more than the $97.5$-th percentile of the $\chi^{2}(p)$ distribution are chosen; this type of contamination will be referred to as \enquote{uniform (chi-squared method) contamination} in the following and the Appendices; $(ii)$ uniform noise contamination from the $p$-dimensional annulus (centered at the origin with the inner and the outer radii $15$ and $20$, respectively), and $(iii)$ outlying cluster contamination with the outlying cluster center at $(20,20,...,20)^{t}$ and identity dispersion matrices 
with $10\%$ (approximately) contamination in each case (the contaminating part have assignment probability $0.1$). A similar motivational example is provided in Appendix \ref{appen1}. For the pure datasets, the cluster assignment probabilities are taken as $0.33$, $0.33$ and $0.34$ 
	and in case of contaminated datasets, the cluster assignment probabilities are $0.3$ for each of them and the rest of the observations (approximately $10\%$ of the sample) are outlying observations. 
As accuracy measures, we have focused on the estimated misclassification rates and the proportions of regular observations misclassified as outliers in case of the pure datasets; smaller values of these measures indicate greater accuracy. In each type of contaminated datasets the estimated misclassification rates of the regular observations (which are not outliers) and the estimated proportion of undetected outliers are the accuracy measures considered; again smaller values of these measures indicate greater accuracy. Estimated bias and mean squared errors of the estimated cluster means are also presented in the Appendix \ref{appen4}.
Datasets of five different dimensions, namely, $p=2,\;4,\;6,\;8\;\text{and}\;10$, have been generated 
with three choices of the common dispersion matrix $\boldsymbol{\Sigma}$, namely, $\boldsymbol{I}_{p},\;3\boldsymbol{I}_{p}\;\text{and}\;5\boldsymbol{I}_{p}$ ($\boldsymbol{I}_{p}$ is the $p\times p$ identity matrix). Together with the above, another simulation set-up with differentially dispersed clusters has also been considered where the component means are again $\boldsymbol{\mu}_{1}=(0,0,...,0)^{t}$,  $\boldsymbol{\mu}_{2}=(5,5,...,5)^{t}$ and $\boldsymbol{\mu}_{3}=(-5,-5,...,-5)^{t}$ with data dimensions $p=2,\;6$. But the component dispersion matrices are no longer identical and are taken to be $\boldsymbol{I}_{p},\;3\boldsymbol{I}_{p}$ and equicorrelation matrix of order $p$ with common correlation $\rho=0.5$, respectively, and the cluster assignment probabilities are taken as $0.30$, $0.35$, $0.35$  for pure datasets. For the contaminated datasets, the cluster assignment probabilities are $0.25$, $0.30$, $0.35$; the remaining $10\%$ (apprximately) observations are outliers generated by either of the three respective contamination schemes. The contamination schemes are the same as those defined above in $(i)$, $(ii)$ and $(iii)$. 
Tables \ref{table3.1}, \ref{table3.15}, \ref{table3.2}, \ref{table3.3} and \ref{table3.4} exhibit the estimated (mean) misclassification rates in case of pure datasets, uniformly (chi-squared method) contaminated datasets, 
uniformly (from annulus) contaminated datasets, outlying cluster contaminated datasets, and datasets with differentially dispersed clusters, respectively, based on $100$ replications. %\textcolor{blue}{Note that, in order to compute the misclassification errors properly, we need to match the labelling of the true and estimated cluster memberships, i.e., if $C_i$ denotes the true cluster label corresponding to the true cluster mean $\mu_i$ for each $i=1,\;2,\;3$ and $C_{1}^\ast,\; C_{2}^\ast,\; C_{3}^\ast$ denote the three cluster labels obtained from any clustering algorithm, we need to ensure that $C_i^\ast$ indeed corresponds to the true cluster $C_i$ for all $i=1,\;2,\;3$. But, this may not be the case in practice for most datasets and clustering algorithms, where $C_1^\ast,\; C_2^\ast,\; C_3^\ast$  may indeed correspond to a permutation of $C_1,\; C_2,\; C_3$. Hence it is needed to find the correct permutation in order to compute the actual misclassification error for each clustering algorithm (not convoluted by mismatched labelling). To ensure this, we have considered all possible permutations ($6$ in total as the number of clusters is $3$) of $C_1^\ast,\; C_2^\ast,\; C_3^\ast$ and computed the misclassification error for each of these $6$ possible permutations (cross-tabulating all of them with $C_1,\; C_2,\; C_3$ in this order). Then, the lowest misclassification error (among these $6$ cross-tabulation) indicates the correct label matching, which are reported (averaged over 100 replications) in Tables \ref{table3.1}, \ref{table3.15}, \ref{table3.2}, \ref{table3.3} and \ref{table3.4} for all our simulation experiments.}

Further, different tuning parameters are chosen as follows in all our simulation experiments. In particular, we have taken $c=5$, $c_{1}=0.1$ and $T=10^{-3},\;10^{-5},\;10^{-8},\;10^{-18}\;\text{and}\;10^{-24}$ 
for $p=2,\;4,\;6,\;8\;\text{and}\;10$, respectively, in our proposed algorithm (as discussed in Remark \ref{rem28}). For the TCLUST method also we have used $c = 5$.
% \textcolor{blue}{Actually, a large value of $c$ makes the problem more or less unrestricted. For our simulation set-ups, the pure datasets have equal component covariance matrices originally. The uniformly contaminated datasets (of both kind) have same component covariance matrices originally and they are contaminated uniformly, i.e., even the contaminated components have approximately same covariance structures. Hence, it is expected that the estimated component covariance matrices have similar sets of eigenvalues following both the MPLE$_\beta$ method or the TCLUST method and thus a small value of $c$ is not very essential to be used. To make the problem almost unconstrained, we have used $c=20$ in these cases. On the other hand, the outlying cluster contamination contaminates the data by adding a single small contaminating cluster (centered at $(20,…,20)$) which does not contaminate all the regular clusters equally. Since this contaminating observations are closest to the regular cluster with center $(5,5,…,5)$, this cluster is expected to be mostly affected by the outliers and thus the estimated component covariance matrices are expected to have different sets of eigenvalues. Thus, a small value of $c>1$ is essential in outlying cluster contamination scheme. It is observed that the TCLUST method with $c=2$ is the best in terms of bias and mean squared errors in case of outlying cluster contamination. But for our method $c=20$ has worked well probably because of the fact that the same relies on downweighting outliers instead of trimming them.} 
The trimming proportion $\alpha$ in TCLUST and trimmed K-means methods are taken as $0.0$ and $0.05$ for pure datasets and $0.10$ and $0.15$ for contaminated cases. The MCLUST method is used with a fixed number of clusters $(G=3)$ and uniform noise component (only in case of contaminated datasets) to make it resistant against outliers. Several values of $\beta$ are taken in the range $[0,0.5]$ for our method; 
values of $\beta$ larger than $0.5$ have been avoided in order to limit the loss in model efficiency. The R packages $\sf{tclust}$ (\cite{tclustpackage}), $\sf{trimcluster}$ (\cite{trimcluster}), $\sf{cluster}$ (\cite{clusterpack}) and $\sf{mclust}$ (\cite{mclust}) are used to carry out the simulations for the TCLUST, trimmed K-means, K-medoids and MCLUST algorithms, respectively.
\subsection{Discussion of Simulation Results }
The simulations that have been performed here are quite extensive, and it is necessary to clearly pinpoint what the salient features of these numbers are. In the following these features are described.
\begin{table}
	\hspace{-0.5in}
	\small
	\begin{tabular}{c c c c c c c c c c c c c} 
		\hline
		&       \multicolumn{4}{c}{\hspace{8em}MPLE$_{\beta}$} & & \multicolumn{2}{c}{TCLUST} & \multicolumn{2}{c}{TKMEANS} & KMEDOIDS  & MCLUST \\ [1ex] 
		$p$ &  $\boldsymbol{\Sigma}$  &  $\beta=0$  & $\beta=0.1$ & $\beta=0.3$ & $\beta=0.5$ & $\alpha=0.0$ &  $\alpha=0.05$ & $\alpha=0.0$  & $\alpha=0.05$ &  & \\ [1ex] 
		\hline
		2	&	$\boldsymbol{I}_{2}$	&			0.0003		&		0.0003		&		0.0004		&		0.0004		&		0.0003		&		0.050		&		0.0012		&		0.051		&		0.0003		& 0.0004\\
		&		&		$(	0.0001	)$	&	$(	0.0001	)$	&	$(	0.0002	)$	&	$(	0.0002	)$	&	$(	0.000	)$	&	$(	0.050	)$	&	$(	0.001	)$	&	$(	0.050	)$	&	$(	0.000	)$   & (0.000)\\
		&	$3\boldsymbol{I}_{2}$	&			0.029		&		0.029		&		0.029		&		0.029		&		0.030		&		0.076		&		0.028		&		0.074		&		0.029  & 0.029 	\\
		&		&		$(	0.0002	)$	&	$(	0.0002	)$	&	$(	0.0003	)$	&	$(	0.0004	)$	&	$(	0.000	)$	&	$(	0.050	)$	&	$(	0.001	)$	&	$(	0.050	)$	&	$(	0.000	)$ & (0.000)	\\
		&	$5\boldsymbol{I}_{2}$	&			0.082		&		0.082		&		0.082		&		0.082		&		0.093		&		0.129		&		0.077		&		0.122		&		0.079	 & 0.078	\\
		&		&		$(	0.001	)$	&	$(	0.001	)$	&	$(	0.001	)$	&	$(	0.001	)$	&	$(	0.000	)$	&	$(	0.050	)$	&	$(	0.001	)$	&	$(	0.050	)$	&	$(	0.000	)$  & (0.000)	\\
		&	&	&	&		\\
		4	&	$\boldsymbol{I}_{4}$	&			0.000		&		0.000		&		0.0001		&		0.0002		&		0.000		&		0.050		&		0.001		&		0.050		&		0.000		& 0.0004\\
		&		&		$(	0.001	)$	&	$(	0.001	)$	&	$(	0.0001	)$	&	$(	0.0002	)$	&	$(	0.000	)$	&	$(	0.050	)$	&	$(	0.0001	)$	&	$(	0.050	)$	&	$(	0.000	)$ & (0.000)	\\
		&	$3\boldsymbol{I}_{4}$	&			0.003		&		0.004		&		0.004		&		0.004		&		0.003		&		0.051		&		0.0034		&		0.051		&		0.0034		 & 0.003\\
		&		&		$(	0.001	)$	&	$(	0.001	)$	&	$(	0.001	)$	&	$(	0.002	)$	&	$(	0.000	)$	&	$(	0.050	)$	&	$(	0.001	)$	&	$(	0.050	)$	&	$(	0.000	)$ & (0.000)	\\
		&	$5\boldsymbol{I}_{4}$	&			0.020		&		0.019		&		0.021		&		0.022		&		0.019		&		0.066		&		0.017		&		0.064		&		0.029	& 0.019	\\
		&		&		$(	0.002	)$	&	$(	0.002	)$	&	$(	0.003	)$	&	$(	0.003	)$	&	$(	0.000	)$	&	$(	0.050	)$	&	$(	0.001	)$	&	$(	0.050	)$	&	$(	0.000	)$ & (0.000)	\\
		&	&	&	&		\\
		6	&	$\boldsymbol{I}_{6}$	&			0.000		&		0.000		&		0.000		&		0.000		&		0.000		&		0.050		&		0.001		&		0.050		&		0.000		& 0.000\\
		&		&		$(	0.000	)$	&	$(	0.000	)$	&	$(	0.000	)$	&	$(	0.000	)$	&	$(	0.000	)$	&	$(	0.050	)$	&	$(	0.001	)$	&	$(	0.050	)$	&	$(	0.000	)$  & (0.000)	\\
		&	$3\boldsymbol{I}_{6}$	&			0.000		&		0.000		&		0.000		&		0.000		&		0.0004		&		0.051		&		0.001		&		0.050		&		0.000	& 0.0002\\
		&		&		$(	0.000	)$	&	$(	0.0001	)$	&	$(	0.0003	)$	&	$(	0.0004	)$	&	$(	0.000	)$	&	$(	0.050	)$	&	$(	0.001	)$	&	$(	0.050	)$	&	$(	0.000	)$ &  (0.000)	\\
		&	$5\boldsymbol{I}_{6}$	&			0.005		&		0.005		&		0.006		&		0.006		&		0.004		&		0.053		&		0.005		&		0.053		&		0.007	 & 0.004	\\
		&		&		$(	0.000	)$	&	$(	0.001	)$	&	$(	0.001	)$	&	$(	0.001	)$	&	$(	0.000	)$	&	$(	0.050	)$	&	$(	0.001	)$	&	$(	0.050	)$	&	$(	0.000	)$ & (0.000)	\\
		&	&	&	&		\\
		8	&	$\boldsymbol{I}_{8}$	&			0.000		&		0.000		&		0.000		&		0.000		&		0.000		&		0.050		&		0.001		&		0.050		&		0.000		& 0.000\\
		&		&		$(	0.000	)$	&	$(	0.000	)$	&	$(	0.000	)$	&	$(	0.000	)$	&	$(	0.000	)$	&	$(	0.050	)$	&	$(	0.001	)$	&	$(	0.050	)$	&	$(	0.000	)$ & (0.000)	\\
		&	$3\boldsymbol{I}_{8}$	&			0.000		&		0.000		&		0.000		&		0.000		&		0.000		&		0.052		&		0.001		&		0.05		&		0.000	& 0.000	\\
		&		&		$(	0.000	)$	&	$(	0.000	)$	&	$(	0.000	)$	&	$(	0.000	)$	&	$(	0.000	)$	&	$(	0.050	)$	&	$(	0.001	)$	&	$(	0.050	)$	&	$(	0.000	)$  & (0.000)	\\
		&	$5\boldsymbol{I}_{8}$	&			0.001		&		0.001		&		0.001		&		0.001		&		0.001		&		0.051		&		0.002		&		0.051		&		0.002		& 0.002\\
		&		&		$(	0.000	)$	&	$(	0.000	)$	&	$(	0.000	)$	&	$(	0.000	)$	&	$(	0.000	)$	&	$(	0.050	)$	&	$(	0.001	)$	&	$(	0.050	)$	&	$(	0.000	)$  & (0.000)	\\
		&	&	&	&		\\
		10	&	$\boldsymbol{I}_{10}$	&			0.000		&		0.000		&		0.000		&		0.000		&		0.000		&		0.050		&		0.001		&		0.050		&		0.000	 & 0.000		\\
		&		&		$(	0.000	)$	&	$(	0.000	)$	&	$(	0.000	)$	&	$(	0.000	)$	&	$(	0.000	)$	&	$(	0.050	)$	&	$(	0.001	)$	&	$(	0.050	)$	&	$(	0.000	)$  & (0.000)	\\
		&	$3\boldsymbol{I}_{10}$	&			0.000		&		0.000		&		0.000		&		0.000		&		0.000		&		0.050		&		0.001		&		0.050		&		0.000		& 0.000\\
		&		&		$(	0.000	)$	&	$(	0.000	)$	&	$(	0.000	)$	&	$(	0.000	)$	&	$(	0.000	)$	&	$(	0.050	)$	&	$(	0.001	)$	&	$(	0.050	)$	&	$(	0.000	)$   & (0.000)	\\
		&	$5\boldsymbol{I}_{10}$	&			0.0002		&		0.0002		&		0.0002		&		0.0003		&		0.0003		&		0.051		&		0.001		&		0.051		&		0.001		& 0.000\\
		& & $(	0.000	)$	&	$(	0.000	)$	&	$(	0.000	)$	&	$(	0.000	)$	&	$(	0.000	)$	&	$(	0.050	)$	&	$(	0.001	)$	&	$(	0.050	)$	&	$(	0.000	)$  & (0.000) 	\\[2ex] 
		\hline
	\end{tabular}
	\caption{Estimated misclassification rates of regular observations (and proportions of regular observations misclassified as outliers within parentheses) for pure datasets.}
	\label{table3.1}
\end{table}
%\newpage
\begin{table}[!t]
	\hspace{-0.5in}
	\small
	\begin{tabular}{c c c c c c c c c c c c} 
		\hline
		&       \multicolumn{4}{c}{\hspace{8em}MPLE$_{\beta}$} & & \multicolumn{2}{c}{TCLUST} & \multicolumn{2}{c}{TKMEANS} & KMEDOIDS  & MCLUST \\ [1ex] 
		$p$ &  $\boldsymbol{\Sigma}$  &  $\beta=0$  & $\beta=0.1$ & $\beta=0.3$ & $\beta=0.5$ & $\alpha=0.1$ &  $\alpha=0.15$ & $\alpha=0.1$  & $\alpha=0.15$ & \\ [1ex] 
		\hline
		2	&	$\boldsymbol{I}_{2}$	&		0.018		&		0.013		&		0.029		&		0.031		&		0.009		&		0.056		&		0.013		&		0.055		&		0.002		&		0.011		\\
		&		&	$(	0.123	)$	&	$(	0.081	)$	&	$(	0.007	)$	&	$(	0.005	)$	&	$(	0.085	)$	&	$(	0.000	)$	&	$(	0.080	)$	&	$(	0.000	)$	&	$(	1.000	)$	&	$(	0.068	)$	\\
		&	$3\boldsymbol{I}_{2}$	&		0.061		&		0.053		&		0.056		&		0.061		&		0.040		&		0.078		&		0.036		&		0.077		&		0.028		&		0.053		\\
		&		&	$(	0.184	)$	&	$(	0.158	)$	&	$(	0.057	)$	&	$(	0.011	)$	&	$(	0.096	)$	&	$(	0.0002	)$	&	$(	0.096	)$	&	$(	0.000	)$	&	$(	1.000	)$	&	$(	0.002	)$	\\
		&	$5\boldsymbol{I}_{2}$	&		0.135		&		0.127		&		0.129		&		0.134		&		0.108		&		0.133		&		0.085		&		0.127		&		0.084		&		0.117		\\
		&		&	$(	0.213	)$	&	$(	0.198	)$	&	$(	0.146	)$	&	$(	0.060	)$	&	$(	0.107	)$	&	$(	0.001	)$	&	$(	0.085	)$	&	$(	0.000	)$	&	$(	1.000	)$	&	$(0.000	)$	\\
		&	&	&				&				&							\\
		4	&	$\boldsymbol{I}_{4}$	&		0.008		&		0.007		&		0.009		&		0.010		&		0.007		&		0.057		&		0.008		&		0.056		&		0.000		&		0.001		\\
		&		&	$(	0.060	)$	&	$(	0.008	)$	&	$(	0.006	)$	&	$(	0.006	)$	&	$(	0.041	)$	&	$(	0.000	)$	&	$(	0.036	)$	&	$(	0.000	)$	&	$(	1.000	)$	&	$(	0.022	)$	\\
		&	$3\boldsymbol{I}_{4}$	&		0.009		&		0.007		&		0.009		&		0.010		&		0.010		&		0.058		&		0.009		&		0.056		&		0.003		&		0.008		\\
		&		&	$(	0.271	)$	&	$(	0.150	)$	&	$(	0.064	)$	&	$(	0.060	)$	&	$(	0.072	)$	&	$(	0.000	)$	&	$(	0.067	)$	&	$(	0.000	)$	&	$(	1.000	)$	&	$(	0.063	)$	\\
		&	$5\boldsymbol{I}_{4}$	&		0.029		&		0.028		&		0.033		&		0.036		&		0.028		&		0.073		&		0.028		&		0.071		&		0.022		&		0.028		\\
		&		&	$(	0.286	)$	&	$(	0.190	)$	&	$(	0.072	)$	&	$(	0.053	)$	&	$(	0.091	)$	&	$(	0.0002	)$	&	$(	0.079	)$	&	$(	0.000	)$	&	$(	1.000	)$	&	$(	0.061	)$	\\
		&	&	&				&				&							\\
		6	&	$\boldsymbol{I}_{6}$	&		0.003		&		0.0005		&		0.0006		&		0.0008		&		0.006		&		0.055		&		0.005		&		0.056		&		0.000		&		0.0004		\\
		&		&	$(	0.061	)$	&	$(	0.004	)$	&	$(	0.004	)$	&	$(	0.004	)$	&	$(	0.034	)$	&	$(	0.000	)$	&	$(	0.032	)$	&	$(	0.000	)$	&	$(	1.000	)$	&	$(	0.004	)$	\\
		&	$3\boldsymbol{I}_{6}$	&		0.008		&		0.006		&		0.010		&		0.011		&		0.005		&		0.053		&		0.004		&		0.053		&		0.001		&		0.002		\\
		&		&	$(	0.079	)$	&	$(	0.026	)$	&	$(	0.016	)$	&	$(	0.014	)$	&	$(	0.051	)$	&	$(	0.000	)$	&	$(	0.053	)$	&	$(	0.000	)$	&	$(	1.000	)$	&	$(	0.003	)$	\\
		&	$5\boldsymbol{I}_{6}$	&		0.010		&		0.007		&		0.009		&		0.011		&		0.011		&		0.060		&		0.009		&		0.056		&		0.007		&		0.008		\\
		&		&	$(	0.093	)$	&	$(	0.017	)$	&	$(	0.083	)$	&	$(	0.077	)$	&	$(	0.073	)$	&	$(	0.001	)$	&	$(	0.067	)$	&	$(	0.000	)$	&	$(	1.000	)$	&	$(	0.074	)$	\\
		&	&	&				&				&							\\
		8	&	$\boldsymbol{I}_{8}$	&		0.002		&		0.000		&		0.000		&		0.000		&		0.005		&		0.056		&		0.004		&		0.055		&		0.000		&		0.000		\\
		&		&	$(	0.672	)$	&	$(	0.007	)$	&	$(	0.008	)$	&	$(	0.008	)$	&	$(	0.022	)$	&	$(	0.000	)$	&	$(	0.035	)$	&	$(	0.000	)$	&	$(	1.000	)$	&	$(	0.000	)$	\\
		&	$3\boldsymbol{I}_{8}$	&		0.003		&		0.000		&		0.000		&		0.000		&		0.005		&		0.058		&		0.005		&		0.057		&		0.001		&		0.001		\\
		&		&	$(	0.711	)$	&	$(	0.000	)$	&	$(	0.000	)$	&	$(	0.058	)$	&	$(	0.041	)$	&	$(	0.000	)$	&	$(	0.030	)$	&	$(	0.000	)$	&	$(	1.000	)$	&	$(	0.013	)$	\\
		&	$5\boldsymbol{I}_{8}$	&		0.009		&		0.007		&		0.011		&		0.014		&		0.007		&		0.055		&		0.007		&		0.057		&		0.003		&		0.003		\\
		&		&	$(	0.839	)$	&	$(	0.048	)$	&	$(	0.025	)$	&	$(	0.022	)$	&	$(	0.052	)$	&	$(	0.000	)$	&	$(	0.049	)$	&	$(	0.000	)$	&	$(	1.000	)$	&	$(	0.048	)$	\\
		&	&	&				&				&				&							\\
		10	&	$\boldsymbol{I}_{10}$	&		0.005		&		0.000		&		0.000		&		0.000		&		0.004		&		0.055		&		0.003		&		0.054		&		0.000		&		0.000		\\
		&		&	$(	0.479	)$	&	$(	0.004	)$	&	$(	0.003	)$	&	$(	0.003	)$	&	$(	0.048	)$	&	$(	0.000	)$	&	$(	0.037	)$	&	$(	0.000	)$	&	$(	1.000	)$	&	$(	0.001	)$	\\
		&	$3\boldsymbol{I}_{10}$	&		0.004		&		0.000		&		0.000		&		0.000		&		0.004		&		0.055		&		0.004		&		0.055		&		0.000		&		0.000		\\
		&		&	$(	0.612	)$	&	$(	0.022	)$	&	$(	0.019	)$	&	$(	0.019	)$	&	$(	0.044	)$	&	$(	0.000	)$	&	$(	0.037	)$	&	$(	0.000	)$	&	$(	1.000	)$	&	$(	0.005	)$	\\
		&	$5\boldsymbol{I}_{10}$	&		0.007		&		0.002		&		0.004		&		0.006		&		0.005		&		0.054		&		0.004		&		0.055		&		0.001		&		0.002		\\
		& &	$(	0.997	)$	&	$(	0.032	)$	&	$(	0.018	)$	&	$(	0.016	)$	&	$(	0.043	)$	&	$(	0.000	)$	&	$(	0.039	)$	&	$(	0.000	)$	&	$(	1.000	)$	&	$(	0.025	)$	\\
		
		\\[2ex] 
		\hline
	\end{tabular}
	\caption{Estimated misclassification rates of regular observations (and proportion of undetected outliers within parentheses) for uniformly (chi-squared method) contaminated datasets.}
	\label{table3.15}
\end{table}
\begin{table}[!t]
	\hspace{-0.5in}
	\small
	\begin{tabular}{c c c c c c c c c c c c} 
		\hline
		&       \multicolumn{4}{c}{\hspace{8em}MPLE$_{\beta}$} & & \multicolumn{2}{c}{TCLUST} & \multicolumn{2}{c}{TKMEANS} & KMEDOIDS  & MCLUST \\ [1ex] 
		$p$ &  $\boldsymbol{\Sigma}$  &  $\beta=0$  & $\beta=0.1$ & $\beta=0.3$ & $\beta=0.5$ & $\alpha=0.1$ &  $\alpha=0.15$ & $\alpha=0.1$  & $\alpha=0.15$ & \\ [1ex] 
		\hline
		2	&	$\boldsymbol{I}_{2}$	&		0.004		&		0.001		&		0.001		&		0.001		&		0.004		&		0.057		&		0.006		&		0.058		&		0.001	&   0.002 	\\
		&		&	$(	0.497	)$	&	$(	0.000	)$	&	$(	0.000	)$	&	$(	0.000	)$	&	$(	0.029	)$	&	$(	0.000	)$	&	$(	0.025	)$	&	$(	0.000	)$	&	$(	1.000	)$  & (0.000)	\\
		&	$3\boldsymbol{I}_{2}$	&		0.038		&		0.040		&		0.029		&		0.029		&		0.053		&		0.079		&		0.032		&		0.079		&		0.029	 &   0.032	\\
		&		&	$(	0.632	)$	&	$(	0.280	)$	&	$(	0.000	)$	&	$(	0.000	)$	&	$(	0.034	)$	&	$(	0.000	)$	&	$(	0.039	)$	&	$(	0.000	)$	&	$(	1.000	)$  & (0.000)	\\
		&	$5\boldsymbol{I}_{2}$	&		0.092		&		0.095		&		0.096		&		0.093		&		0.146		&		0.132		&		0.081		&		0.126		&		0.080	&   0.085 	\\
		&		&	$(	0.698	)$	&	$(	0.396	)$	&	$(	0.038	)$	&	$(	0.032	)$	&	$(	0.039	)$	&	$(	0.000	)$	&	$(	0.041	)$	&	$(	0.000	)$	&	$(	1.000	)$ & (0.000)	\\
		&	&	&								\\
		4	&	$\boldsymbol{I}_{4}$	&		0.001		&		0.0002		&		0.0002		&		0.0002		&		0.004		&		0.053		&		0.003		&		0.053		&		0.0003	&   0.0002 	\\
		&		&	$(	0.071	)$	&	$(	0.000	)$	&	$(	0.000	)$	&	$(	0.000	)$	&	$(	0.036	)$	&	$(	0.000	)$	&	$(	0.043	)$	&	$(	0.000	)$	&	$(	1.000	)$ & (0.000)	\\
		&	$3\boldsymbol{I}_{4}$	&		0.010		&		0.003		&		0.003		&		0.004		&		0.007		&		0.055		&		0.007		&		0.058		&		0.003	&   0.004  	\\
		&		&	$(	0.105	)$	&	$(	0.014	)$	&	$(	0.009	)$	&	$(	0.008	)$	&	$(	0.044	)$	&	$(	0.000	)$	&	$(	0.028	)$	&	$(	0.000	)$	&	$(	1.000	)$ & (0.007)	\\
		&	$5\boldsymbol{I}_{4}$	&		0.026		&		0.021		&		0.021		&		0.021		&		0.023		&		0.072		&		0.022		&		0.070		&		0.024	&   0.021 	\\
		&		&	$(	0.147	)$	&	$(	0.065	)$	&	$(	0.038	)$	&	$(	0.037	)$	&	$(	0.046	)$	&	$(	0.004	)$	&	$(	0.045	)$	&	$(	0.002	)$	&	$(	1.000	)$  & (0.032)	\\
		&	&	&							\\
		6	&	$\boldsymbol{I}_{6}$	&		0.0001		&		0.000		&		0.000		&		0.000		&		0.006		&		0.054		&		0.005		&		0.060		&		0.002	&   0.000	\\
		&		&	$(	0.147	)$	&	$(	0.000	)$	&	$(	0.000	)$	&	$(	0.000	)$	&	$(	0.031	)$	&	$(	0.000	)$	&	$(	0.033	)$	&	$(	0.000	)$	&	$(	1.000	)$  & (0.001)	\\
		&	$3\boldsymbol{I}_{6}$	&		0.002		&		0.0005		&		0.0008		&		0.0008		&		0.004		&		0.055		&		0.005		&		0.055		&		0.0006	&   0.001 	\\
		&		&	$(	0.239	)$	&	$(	0.012	)$	&	$(	0.009	)$	&	$(	0.009	)$	&	$(	0.038	)$	&	$(	0.001	)$	&	$(	0.038	)$	&	$(	0.001	)$	&	$(	1.000	)$ &  (0.010)	\\
		&	$5\boldsymbol{I}_{6}$	&		0.007		&		0.004		&		0.005		&		0.005		&		0.010		&		0.059		&		0.010		&		0.060		&		0.008	&   0.006 	\\
		&		&	$(	0.329	)$	&	$(	0.046	)$	&	$(	0.040	)$	&	$(	0.039	)$	&	$(	0.045	)$	&	$(	0.004	)$	&	$(	0.044	)$	&	$(	0.004	)$	&	$(	1.000	)$&  (0.032)  	\\
		&	&	&						\\
		8	&	$\boldsymbol{I}_{8}$	&		0.0001		&		0.000		&		0.000		&		0.000		&		0.005		&		0.056		&		0.003		&		0.053		&		0.003	& 0.000	\\
		&		&	$(	0.893	)$	&	$(	0.003	)$	&	$(	0.003	)$	&	$(	0.002	)$	&	$(	0.021	)$	&	$(	0.000	)$	&	$(	0.043	)$	&	$(	0.000	)$	&	$(	1.000	)$ & (0.000)	\\
		&	$3\boldsymbol{I}_{8}$	&		0.0003		&		0.000		&		0.000		&		0.000		&		0.006		&		0.057		&		0.005		&		0.056		&		0.0002	& 0.001	\\
		&		&	$(	0.925	)$	&	$(	0.014	)$	&	$(	0.012	)$	&	$(	0.012	)$	&	$(	0.031	)$	&	$(	0.000	)$	&	$(	0.035	)$	&	$(	0.000	)$	&	$(	1.000	)$ & (0.005)	\\
		&	$5\boldsymbol{I}_{8}$	&		0.004		&		0.002		&		0.003		&		0.005		&		0.008		&		0.057		&		0.006		&		0.057		&		0.003	& 0.003	\\
		&		&	$(	0.997	)$	&	$(	0.020	)$	&	$(	0.017	)$	&	$(	0.015	)$	&	$(	0.032	)$	&	$(	0.002	)$	&	$(	0.038	)$	&	$(	0.002	)$	&	$(	1.000	)$ &  (0.023) 	\\
		&	&	&						\\
		10	&	$\boldsymbol{I}_{10}$	&		0.000		&		0.000		&		0.000		&		0.000		&		0.004		&		0.056		&		0.004		&		0.056		&		0.000	& 0.000	\\
		&		&	$(	1.000	)$	&	$(	0.0002	)$	&	$(	0.0002	)$	&	$(	0.0002	)$	&	$(	0.031	)$	&	$(	0.000	)$	&	$(	0.026	)$	&	$(	0.000	)$	&	$(	1.000	)$ & (0.000)	\\
		&	$3\boldsymbol{I}_{10}$	&		0.000		&		0.000		&		0.000		&		0.000		&		0.005		&		0.057		&		0.005		&		0.057		&		0.000	& 0.001	\\
		&		&	$(	1.000	)$	&	$(	0.025	)$	&	$(	0.022	)$	&	$(	0.024	)$	&	$(	0.029	)$	&	$(	0.000	)$	&	$(	0.031	)$	&	$(	0.000	)$	&	$(	1.000	)$ & (0.002)	\\
		&	$5\boldsymbol{I}_{10}$	&		0.0005		&		0.0003		&		0.0004		&		0.0008		&		0.005		&		0.055		&		0.004		&		0.054		&		0.001	& 0.001	\\
		& &	$(	1.000	)$	&	$(	0.060	)$	&	$(	0.026	)$	&	$(	0.025	)$	&	$(	0.038	)$	&	$(	0.002	)$	&	$(	0.044	)$	&	$(	0.001	)$	&	$(	1.000	)$  & (0.015)	\\[2ex] 
		\hline
	\end{tabular}
	\caption{Estimated misclassification rates of regular observations (and proportion of undetected outliers within parentheses) for uniformly (from annulus) contaminated datasets.}
	\label{table3.2}
\end{table}
\begin{table}[!t]
	\hspace{-0.5in}
	\small
	\begin{tabular}{c c c c c c c c c c c c} 
		\hline
		&       \multicolumn{4}{c}{\hspace{8em}MPLE$_{\beta}$} & & \multicolumn{2}{c}{TCLUST} & \multicolumn{2}{c}{TKMEANS} & KMEDOIDS & MCLUST \\ [1ex] 
		$p$ &  $\boldsymbol{\Sigma}$  &  $\beta=0$  & $\beta=0.1$ & $\beta=0.3$ & $\beta=0.5$ & $\alpha=0.1$ &  $\alpha=0.15$ & $\alpha=0.1$  & $\alpha=0.15$ & \\ [1ex] 
		\hline
		2	&	$\boldsymbol{I}_{2}$	&		0.042		&		0.019		&		0.019		&		0.019		&		0.041		&		0.056		&		0.030		&		0.056		&		0.413	&     0.563	\\
		&		&	$(	0.058	)$	&	$(	0.000	)$	&	$(	0.000	)$	&	$(	0.000	)$	&	$(	0.099	)$	&	$(	0.000	)$	&	$(	0.080	)$	&	$(	0.000	)$	&	$(	1.000	)$ & (0.990)	\\
		&	$3\boldsymbol{I}_{2}$	&		0.108		&		0.107		&		0.068		&		0.067		&		0.213   &   0.088		&		0.106		&		0.080		&		0.486	&     0.471 	\\
		&		&	$(	0.238	)$	&	$(	0.209	)$	&	$(	0.000	)$	&	$(	0.000	)$	&	(0.470)  &  (0.009)	&	$(	0.151	)$	&	$(	0.000	)$	&	$(	1.000	)$	& (0.999)\\
		&	$5\boldsymbol{I}_{2}$	&		0.251		&		0.260		&		0.169		&		0.156		&		0.414   &   0.248 		&		0.268		&		0.126		&		0.505	&     0.510	\\
		&		&	$(	0.354	)$	&	$(	0.401	)$	&	$(	0.068	)$	&	$(	0.000	)$	& (0.992)  &  (0.492)	&	$(	0.405	)$	&	$(	0.000	)$	&	$(	1.000	)$ & (0.995)	\\
		&	&	&				&					\\
		4	&	$\boldsymbol{I}_{4}$	&		0.218		&		0.008		&		0.009		&		0.010		&		0.033		&		0.055		&		0.062		&		0.055		&		0.376	& 0.541	\\
		&		&	$(	0.849	)$	&	$(	0.000	)$	&	$(	0.000	)$	&	$(	0.000	)$	&	$(	0.095	)$	&	$(	0.000	)$	&	$(	0.124	)$	&	$(	0.000	)$	&	$(	1.000	)$ & (0.999)	\\
		&	$3\boldsymbol{I}_{4}$	&		0.113		&		0.111		&		0.053		&		0.057		&		0.219   &   0.057		&		0.094		&		0.054		&		0.414	& 0.459   	\\
		&		&	$(	0.930	)$	&	$(	0.869	)$	&	$(	0.000	)$	&	$(	0.000	)$	&	(0.560)  &  (0.000)	&	$(	0.204	)$	&	$(	0.000	)$	&	$(	1.000	)$ & (0.999)	\\
		&	$5\boldsymbol{I}_{4}$	&		0.218		&		0.222		&		0.134		&		0.135		&		0.368   &   0.320		&		0.153		&		0.071		&		0.464	 &     0.490 	\\
		&		&	$(	0.857	)$	&	$(	0.900	)$	&	$(	0.000	)$	&	$(	0.000	)$	&(0.920)  &  (0.480)	&	$(	0.226	)$	&	$(	0.000	)$	&	$(	1.000	)$  &     (0.999)	\\
		&	&	&				&						\\
		6	&	$\boldsymbol{I}_{6}$	&		0.339		&		0.001		&		0.001		&		0.001		&		0.050   &   0.057 		&		0.021		&		0.058		&		0.369	& 0.516	\\
		&		&	$(	0.990	)$	&	$(	0.000	)$	&	$(	0.000	)$	&	$(	0.000	)$	&	$(	0.154	)$	&	$(	0.000	)$	&	$(	0.037	)$	&	$(	0.000	)$	&	$(	1.000	)$ & (0.992)	\\
		&	$3\boldsymbol{I}_{6}$	&		0.309		&		0.027		&		0.011		&		0.013		&		0.165   &   0.056		&		0.034		&		0.054		&		0.441	& 0.518	\\
		&		&	$(	1.000	)$	&	$(	0.720	)$	&	$(	0.000	)$	&	$(	0.000	)$	&	$(	0.421	)$	&	$(	0.000	)$	&	$(	0.058	)$	&	$(	0.000	)$	&	$(	1.000	)$ & (1.000)	\\
		&	$5\boldsymbol{I}_{6}$	&		0.290		&		0.076		&		0.035		&		0.036		&		0.311   &   0.192		&		0.109		&		0.058		&		0.450	& 0.533	\\
		&		&	$(	1.000	)$	&	$(	1.000	)$	&	$(	0.020	)$	&	$(	0.000	)$	&	(0.800)  &  (0.360) 	&	$(	0.173	)$	&	$(	0.000	)$	&	$(	1.000	)$ & (1.000)	\\
		&	&	&				&				&				&				&							\\
		8	&	$\boldsymbol{I}_{8}$	&		0.337		&		0.000		&		0.000		&		0.000		&		0.070   &   0.053 		&		0.047		&		0.053		&		0.382	& 0.479	\\
		&		&	$(	1.000	)$	&	$(	0.000	)$	&	$(	0.000	)$	&	$(	0.000	)$	&	$(	0.242	)$	&	$(	0.000	)$	&	$(	0.123	)$	&	$(	0.000	)$	&	$(	1.000	)$ & (0.999)	\\
		&	$3\boldsymbol{I}_{8}$	&		0.344		&		0.004		&		0.001		&		0.000		&		0.116   &   0.054		&		0.107		&		0.053		&		0.408 & 0.499	\\
		&		&	$(	1.000	)$	&	$(	0.400	)$	&	$(	0.000	)$	&	$(	0.000	)$	&	$(	0.301	)$	&	$(	0.000	)$	&	$(	0.175	)$	&	$(	0.000	)$	&	$(	1.000	)$	& (1.000)\\
		&	$5\boldsymbol{I}_{8}$	&		0.357		&		0.012		&		0.002		&		0.002		&		0.293   &   0.100		&		0.096		&		0.057		&		0.425	& 0.471	\\
		&		&	$(	1.000	)$	&	$(	0.980	)$	&	$(	0.000	)$	&	$(	0.000	)$	&	(0.760)  &  (0.120)	&	$(	0.151	)$	&	$(	0.000	)$	&	$(	1.000	)$  & (1.000)	\\
		&	&	&				&				&				&				&						\\
		10	&	$\boldsymbol{I}_{10}$	&		0.335		&		0.000		&		0.000		&		0.000		&		0.035   &   0.056		&		0.064		&		0.055		&		0.356	& 0.495	\\
		&		&	$(	1.000	)$	&	$(	0.000	)$	&	$(	0.000	)$	&	$(	0.000	)$	&	$(	0.119	)$	&	$(	0.000	)$	&	$(	0.117	)$	&	$(	0.000	)$	&	$(	1.000	)$	& (0.999)\\
		&	$3\boldsymbol{I}_{10}$	&		0.344		&		0.004		&		0.000		&		0.000		&		0.145		&		0.053		&		0.048		&		0.056		&		0.395	& 0.535	\\
		&		&	$(	1.000	)$	&	$(	0.520	)$	&	$(	0.000	)$	&	$(	0.000	)$	&	$(	0.382	)$	&	$(	0.000	)$	&	$(	0.115	)$	&	$(	0.000	)$	&	$(	1.000	)$& (1.000)	\\
		&	$5\boldsymbol{I}_{10}$	&		0.352		&		0.029		&		0.005		&		0.002		&		0.200   &   0.101		&		0.180		&		0.054		&		0.443	& 0.467	\\
		&		&	$(	1.000	)$	&	$(	0.980	)$	&	$(	0.000	)$	&	$(	0.000	)$	&(0.520)  &  (0.120)	&	$(	0.293	)$	&	$(	0.000	)$	&	$(	1.000	)$ & (1.000)	\\	[2ex] 
		\hline
	\end{tabular}
	\caption{Estimated misclassification rates of regular observations (and proportion of undetected outliers within parentheses) for outlying cluster contaminated datasets.}
	\label{table3.3}
\end{table}

\begin{table}[!t]
	%\centering
	\hspace{-0.5in}
	\small
	\begin{tabular}{c c c c c c c c c c c c} 
		\hline
		&       \multicolumn{4}{c}{\hspace{8em}MPLE$_{\beta}$} & & \multicolumn{2}{c}{TCLUST} & \multicolumn{2}{c}{TKMEANS} & KMEDOIDS & MCLUST \\ [1ex] 
		$Type$ &  $p$  &  $\beta=0$  & $\beta=0.1$ & $\beta=0.3$ & $\beta=0.5$ & $\alpha=0$ &  $\alpha=0.05$ & $\alpha=0$  & $\alpha=0.05$ & \\ [1ex] 
		\hline
		Pure	&	2	&	0.004	&		0.004		&	0.004	&		0.004		&	0.003	&		0.051		&	0.010	&		0.055		&	0.009	&		0.004		\\
		&		&	(0.000)	&	(	0.0002	)	&	(0.0002)	&	(	0.0002	)	&	(0.000)	&	(	0.050	)	&	(0.001)	&	(	0.050	)	&	(0.000)	&	(	0.000	)	\\
		&		&		&				&		&				&		&				&		&				&		&				\\
		&		&		&				&		&				&		&				&		&				&		&				\\
		Pure	&	6	&	0.0002	&		0.0002		&	0.0003	&		0.0003		&	0.000	&		0.052		&	0.001	&		0.001		&	0.0004	&		0.0005		\\
		&		&	(0.0001)	&	(	0.0002	)	&	(0.0002)	&	(	0.0003	)	&	(0.000)	&	(	0.050	)	&	(0.001)	&	(	0.050	)	&	(0.000)	&	(	0.000	)	\\
		&		&		&				&		&				&		&				&		&				&		&				\\
		&		&		&				&		&				&		&				&		&				&		&				\\
		&		&		&				&		&				&	$\alpha=$0.10	&		$\alpha=$0.15		&	$\alpha=$0.10	&	$\alpha=$	0.15		&		&				\\
		&		&		&				&		&				&		&				&		&				&		&				\\
		Uniform	&	2	&	0.029	&		0.028		&	0.022	&		0.024		&	0.008	&		0.058		&	0.012	&		0.057		&	0.009	&		0.021		\\
		(chi-squared)&		&	(0.180)	&	(	0.147	)	&	(0.000)	&	(	0.000	)	&	(0.030)	&	(	0.000	)	&	(0.038)	&	(	0.000	)	&	(1.000)	&	(	0.000	)	\\
		Contaminated&		&		&				&		&				&		&				&		&				&		&				\\
		&		&		&				&		&				&		&				&		&				&		&				\\
		
		Uniform	&	6	&	0.003	&		0.002		&	0.003	&		0.003		&	0.005	&		0.054		&	0.005	&		0.056		&	0.0003	&		0.001		\\
		(chi-squared)&		&	(0.054)	&	(	0.000	)	&	(0.000)	&	(	0.000	)	&	(0.032)	&	(	0.000	)	&	(0.036)	&	(	0.000	)	&	(1.000)	&	(	0.000	)	\\
		Contaminated&		&		&				&		&				&		&				&		&				&		&				\\
		&		&		&				&		&				&		&				&		&				&		&				\\
		Uniform	&	2	&	0.014	&		0.013		&	0.006	&		0.005		&	0.008	&		0.053		&	0.013	&		0.057		&	0.010	&		0.009		\\
		(Annulus)&		&	(0.000)	&	(	0.022	)	&	(0.000)	&	(	0.000	)	&	(0.038)	&	(	0.000	)	&	(0.042)	&	(	0.000	)	&	(1.000)	&	(	0.000	)	\\
		Contaminated&		&		&				&		&				&		&				&		&				&		&				\\
		&		&		&				&		&				&		&				&		&				&		&				\\
		Uniform	&	6	&	0.003	&		0.0002		&	0.0004	&		0.0004		&	0.004	&		0.057		&	0.005	&		0.056		&	0.0002	&		0.002		\\
		(Annulus) &		&	(0.002)	&	(	0.004	)	&	(0.003)	&	(	0.004	)	&	(0.044)	&	(	0.0003	)	&	(0.038)	&	(	0.000	)	&	(1.000)	&	(	0.004	)	\\
		Contaminated&		&		&				&		&				&		&				&		&				&		&				\\
		&		&		&				&		&				&		&				&		&				&		&				\\
		
		Outlying	&	2	&	0.067	&		0.068		&	0.026	&		0.025		&	0.104	&		0.057		&	0.090	&		0.058		&	0.358	&		0.340		\\
		Cluster &		&	(0.271)	&	(	0.240	)	&	(0.000)	&	(	0.000	)	&	(0.274)	&	(	0.000	)	&	(0.255)	&	(	0.000	)	&	(1.000)	&	(	0.999	)	\\
		&		&		&				&		&				&		&				&		&				&		&				\\
		Outlying	&	6	&	0.036	&		0.018		&	0.003	&		0.003		&	0.094	&		0.053		&	0.084	&		0.055		&	0.336	&		0.332		\\
		Cluster &		&	(0.996)	&	(	0.700	)	&	(0.000)	&	(	0.000	)	&	(0.266)	&	(	0.000	)	&	(0.188)	&	(	0.000	)	&	(1.000)	&	(	0.999	)	\\[2ex] 
		\hline
	\end{tabular}
	\caption{Estimated misclassification rates with proportions of regular observations misclassified as outliers (in case of pure datasts) and proportion of undetected outliers (in case of contaminated datasets) (within parentheses) for datasets with differentially dispersed clusters.}
	\label{table3.4}
\end{table}

%\begin{itemize}

In case of pure datasets, the estimated misclassification rates (averaged over the $100$ samples) along with the proportion of regular observations misclassified as outliers are presented in Table \ref{table3.1}. All the methods are very similar in terms of the estimated misclassification rates. As expected, the proposed method performed the best in case of $\beta=0$, the case corresponding to maximum likelihood along with the ER and NS constraints. Further, in the present proposal, the proportions of regular observations misclassified as outliers (presented in parenthesis below the misclassification rates) are very small indicating the method is doing the correct thing when the data are from pure models.

In case of uniformly (chi-squared method) contaminated datasets, the estimated misclassification rates of the regular observations (which are not outliers) along with the estimated proportion of undetected outliers are presented in Table \ref{table3.15}. In many cases, the proposed method with $\beta \approx 0.1$ or $0.3$ have lower estimated regular misclassification rates in comparison with the trimming based methods, while being competitive in other cases. The K-medoids method generates slightly lower missclassification rates than the proposed method for several of the cases. However, the K-medoid method is not adaptable for detection of outliers  and fails in this respect. The MCLUST method produces marginally better results in comparison to almost all the methods in this case.
%So, considering both robust clustering and detection of outliers, our proposal seems to have greater advantages, particularly for the higher dimensional set-ups.  

The relative performance (presented in Table \ref{table3.2}) of the MPLE$_\beta$ method in comparison with the other methods is better in case of uniformly (from annulus) contaminated datasets (as compared to that for uniformly (chi-squared) contaminated datasets). The proposed method clearly beats trimmed K-means and has a very similar performance to that of the MCLUST method. The K-medoids have slightly lower misclassification rates but have no outlier detection capability.

%Our method performed better than the trimming based methodologies (i.e., TCLUST, trimmed K-means) on an average. In most of the cases the estimated misclassification rates of our method are lower than those of the MCLUST method but the proportion of undetected outliers are less in MCLUST in some cases. The K-medoids method performed competitively with the other methods in terms of estimated misclassification rates in lower dimensions but it has failed again in detecting outliers.}

The simulation outputs for the outlying cluster contaminated datasets, presented in Table \ref{table3.3}, indicate that the proposed method clearly outperforms the other methods in this case. It should be noted that the outlying cluster simulation scheme, being stochastic in nature, sometimes generates a contaminating proportion slightly larger than $10\%$ (although the assignment probability is exactly $0.1$ for the contaminating part), and consequently, some of these contaminated observations (very distant) are not trimmed by the $\alpha=0.1$ trimming level in TCLUST. This can have a potentially negative impact on the performance of TCLUST in such cases. The K-medoids performs poorly in terms of the misclassification rate of regular observations. MCLUST performs the worst in terms of classification of regular observations and also fares poorly in detecting outlying observations. 

For the differentially dispersed simulation set-up, the proposed method has performed the best more or less as compared to the other methods in terms of misclassification rates as well as bias and mean squared errors of the cluster means. The superiority is more prominent in case of outlying cluster contaminated datasets.

Although the K-medoids method performs better than some of its competitors in terms of estimated regular misclassification rates in some cases, it completely fails to detect outliers and also has higher bias and mean squared errors of the estimated cluster centers compared to those obtained by the present proposal with moderate values of $\beta$ (see Appendix \ref{appen4}).    

The estimated bias and mean squared errors of the estimated cluster means are presented in Appendix \ref{appen4}. The proposed method performs quite well on the average. The use of TCLUST with an $\alpha=0.15$ trimming level (i.e., a trimming level larger than the actual contamination rate) also performs quite well with respect to these accuracy measures. 

It should be noted that the maximal-gap idea can also be applied to the TCLUST method (and possibly to some other robust clustering methodologies) for a data-driven optimal choice of the trimming proportion. However, it could be the subject of a future study to investigate how this type of refinement might affect the TCLUST procedure. 

%\end{itemize}
\section{Real Data Examples}
\label{secrealdata}
\subsection{Swiss Bank Notes Data} 

These data, originally considered in \cite{swissbank}, have been accessed from the R-cloud of datasets. The data consist of 200 old Swiss 1000-franc bank notes. It is known that the first 100 notes are genuine and the remaining are counterfeit. Our interest is in determining whether our algorithm can detect the counterfeit notes based on these data. Six measurements are made on each bank note:  (i) length of the bank note, (ii) height of the bank note (measured along the left side), (iii) height of the bank note (measured along the right side), (iv) distance of inner frame to the lower border, (v) distance of inner frame to the upper border and (vi) length of the diagonal.
\begin{figure}[h!]
	\centering
	\begin{tabular}{cc}
		\begin{subfigure}{0.45\textwidth}\centering\includegraphics[width=.6\columnwidth]{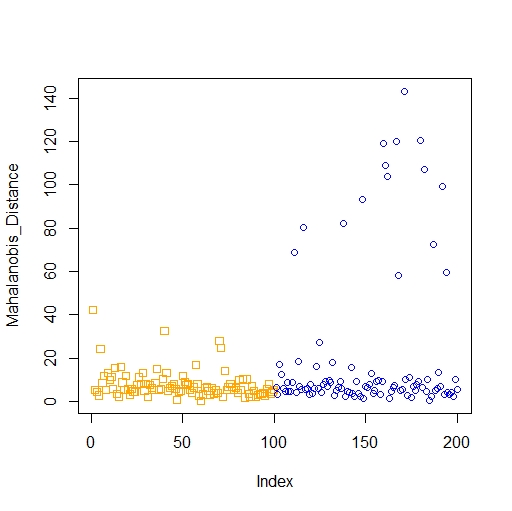}\caption{Original}
			\label{Figure 2}\end{subfigure}&
		\begin{subfigure}{0.45\textwidth}\centering\includegraphics[width=.6\columnwidth]{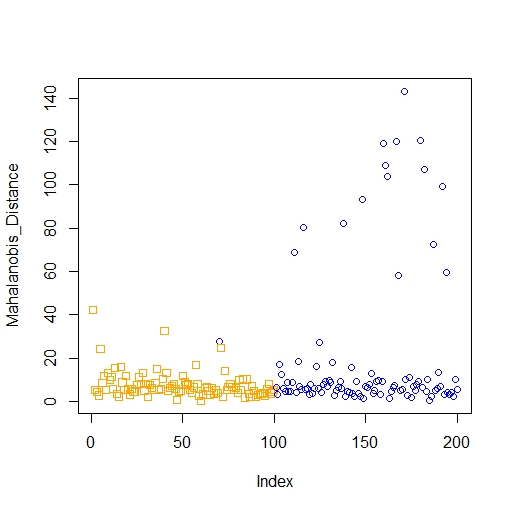}\caption{K-medoids}\end{subfigure}\\
		
		\begin{subfigure}{0.45\textwidth}\centering\includegraphics[width=.6\columnwidth]{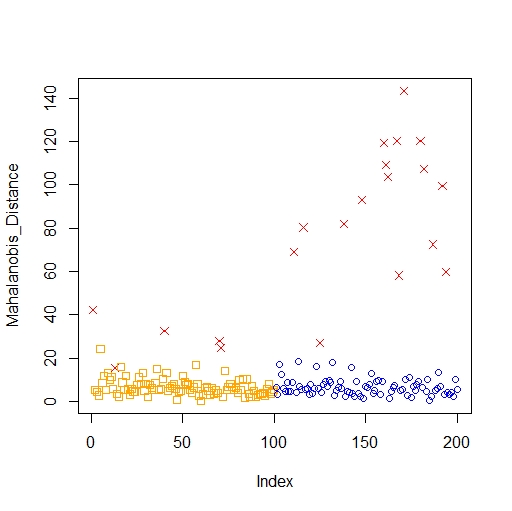}\caption{MPLE$_{\beta}$, $\beta=0.5$}\end{subfigure}&
		\begin{subfigure}{0.45\textwidth}\centering\includegraphics[width=.6\columnwidth]{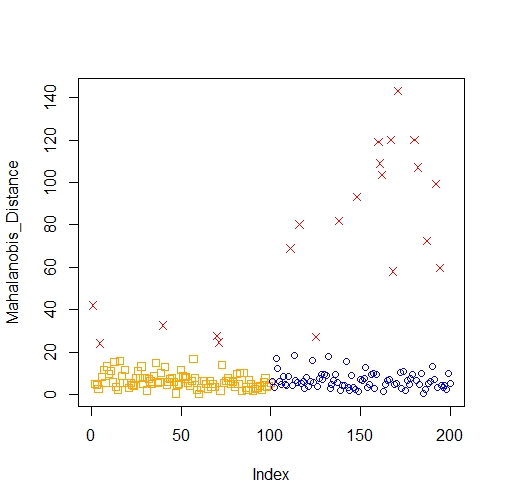}\caption{MCLUST}\end{subfigure}
		\\
		
		\begin{subfigure}{0.45\textwidth}\centering\includegraphics[width=.6\columnwidth]{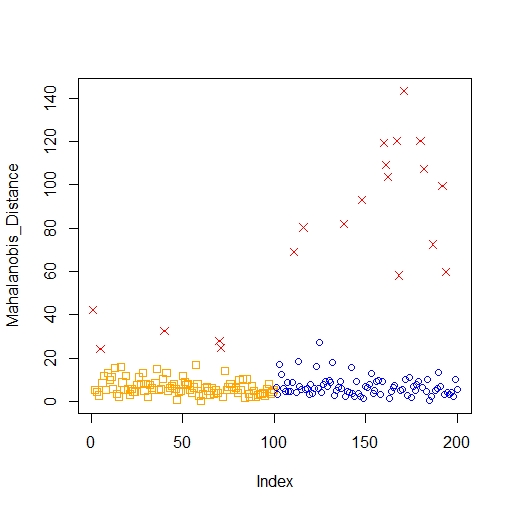}\caption{TCLUST, $\alpha=0.10$}\end{subfigure}&
		
		\begin{subfigure}{0.45\textwidth}\centering\includegraphics[width=.6\columnwidth]{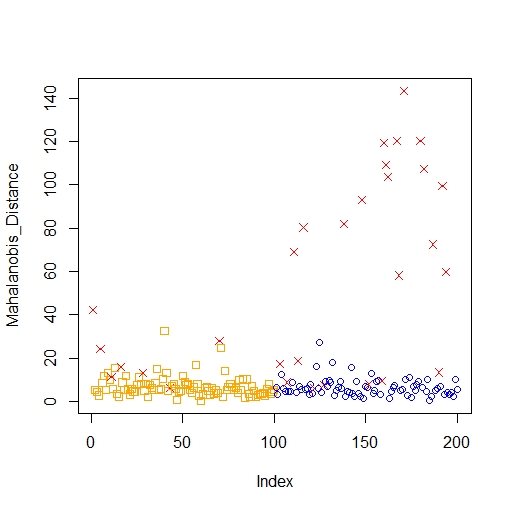}\caption{Trimmed K-means, $\alpha=0.15$}\end{subfigure}\\
		
	\end{tabular}
	\caption{Clusters derived from different methods for the Swiss Bank Notes data. 
		The vertical axis presents estimated Mahalanobis distances of the observations from their respective estimated (MCD) cluster centers.
		[Orange square: cluster $1$ of genuine notes; blue circle: cluster $2$ of counterfeit notes;  
		red cross: outliers identified by the corresponding algorithm] }
	\label{table11}
\end{figure}

To study these data, we begin with an exploratory data analysis. We split the data into two groups according to the true nature of the notes (i.e., genuine or counterfeit) and then estimate the location and dispersion of each of these groups using the minimum covariance determinant (MCD) method (\cite{mcd1}). We then estimate the Mahalanobis distances of the observations from their respective group (or cluster) centers. For the $i$-th observation $\boldsymbol{X}_{i}$, therefore, we compute,
\begin{equation}
\label{100.23}
d_{i} = 
\begin{cases}
(\boldsymbol{X}_{i}-\hat{\boldsymbol{\mu}}_{1})^{'}\hat{\boldsymbol{\Sigma}}_{1}^{-1}(\boldsymbol{X}_{i}-\hat{\boldsymbol{\mu}}_{1}),& \text{for } 1\leq i \leq 100\\
(\boldsymbol{X}_{i}-\hat{\boldsymbol{\mu}}_{2})^{'}\hat{\boldsymbol{\Sigma}}_{2}^{-1}(\boldsymbol{X}_{i}-\hat{\boldsymbol{\mu}}_{2}),            & \text{for } 101\leq i \leq 200
\end{cases}
\end{equation}
where $\hat{\boldsymbol{\mu}}_{j}$ and $\hat{\boldsymbol{\Sigma}}_{j}$ are the MCD based location and dispersion estimates of the $j$-th group, $j=1,2$.

Figure \ref{Figure 2} presents the index plot of these (robust) Mahalanobis distances.
In this plot, some points are far above the baseline with ordinates that are much larger compared to the ordinates of the general cloud of points concentrated near the horizontal axis. These observations are \enquote{far away} from their true cluster centers in terms of their estimated Mahalanobis distances and are therefore anomalous. It should also be noted that these anomalous observations are primarily from the counterfeit group of notes. In the figure, the genuine notes are represented as orange squares, and the counterfeit notes as blue circles. 

Now, we apply our robust method on these data along with the K-medoids, trimmed K-means, TCLUST and MCLUST methods. For our proposed method, we have taken $\beta=0.5$. For the TCLUST we have taken $\alpha = 0.10$, while for the  trimmed K-means $\alpha = 0.15$ has been used. The MCLUST method has been applied with $2$ clusters (and uniform noise component).  
The other panels of Figure \ref{table11} contain the clusters derived from each of these methods. 
%contains the clusters derived by K-means, our DPD based method and the TCLUST method with different tuning parameters.
Since the data are 6 dimensional, it is not possible to present the clusters along with the scatterplot of the data. So, in the remaining panels of Figure \ref{table11}, we present the Mahalanobis distance values (as depicted in Figure \ref{Figure 2}) for each index (through the observed magnitudes), 
the classification results according to the specified algorithm (orange squares representing observations classified as genuine and blue circles representing observations classified as counterfeit) 
and the outliers detected by the algorithm (depicted by red crosses).

In general it appears that the robust methods are all successful in doing the classifications and labeling the outliers correctly. There are rare misclassifications in the K-medoids case, but, more importantly, the latter makes no contribution to the issue of anomaly detection.  

\subsection{Seed Data} 

These data (which can be found in this \href{https://archive.ics.uci.edu/ml/datasets/seeds#}{link})  contain measurements of geometrical properties of three different varieties of wheat, namely, Kama, Rosa and Canadian. This data set consists of 210 observations on seven attributes that are all continuous and real-valued. The attributes are (i) area $A$, (ii) perimeter $P$, (iii) compactness $C=4\pi A/P^{2}$, (iv) length of kernel, (v) width of kernel, (vi) asymmetry coefficient and (vii) length of kernel groove, respectively.

\begin{figure}[h!]
	\centering
	\begin{tabular}{cc}
		\begin{subfigure}{0.45\textwidth}\centering\includegraphics[width=.6\columnwidth]{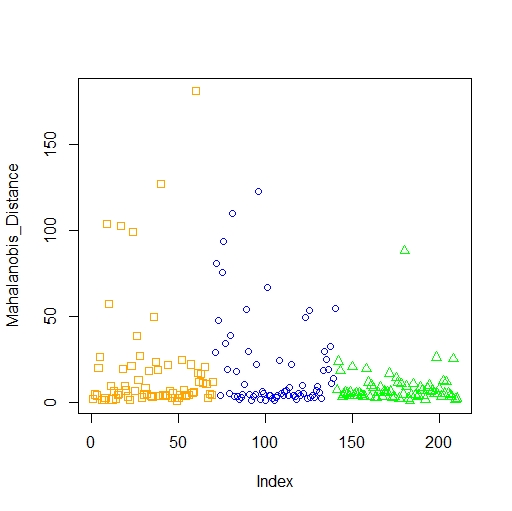}\caption{Original}
			\label{Figure 3}\end{subfigure}&
		\begin{subfigure}{0.45\textwidth}\centering\includegraphics[width=.6\columnwidth]{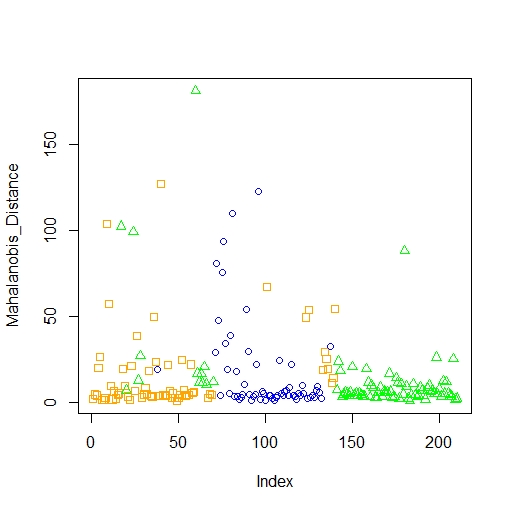}\caption{K-medoids}\end{subfigure}\\
		
		\begin{subfigure}{0.45\textwidth}\centering\includegraphics[width=.6\columnwidth]{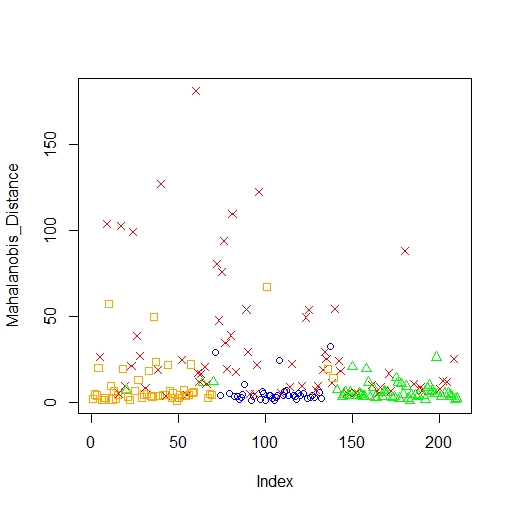}\caption{MPLE$_{\beta}$, $\beta=0.3$}\end{subfigure}&
		\begin{subfigure}{0.45\textwidth}\centering\includegraphics[width=.6\columnwidth]{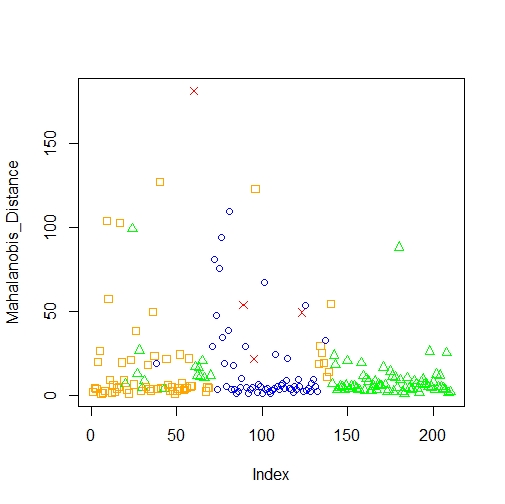}\caption{MCLUST}\end{subfigure}
		\\
		\begin{subfigure}{0.45\textwidth}\centering\includegraphics[width=.6\columnwidth]{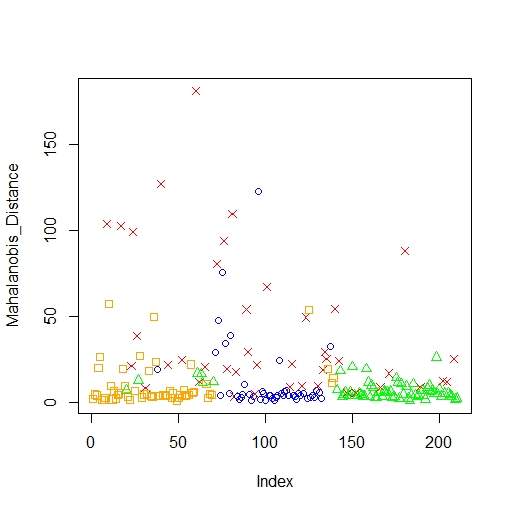}\caption{TCLUST, $\alpha=0.20$}\end{subfigure}&
		\begin{subfigure}{0.45\textwidth}\centering\includegraphics[width=.6\columnwidth]{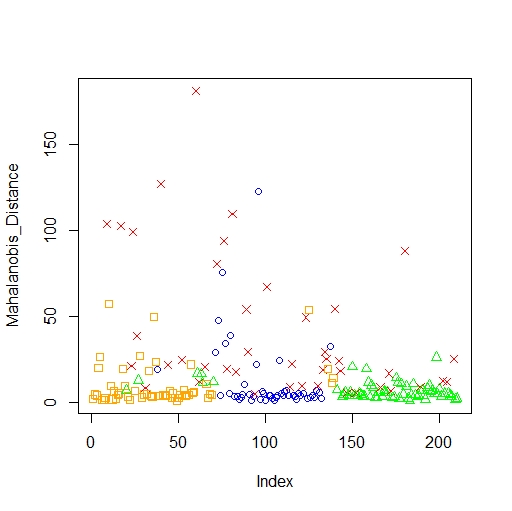}\caption{Trimmed K-means, $\alpha=0.20$}\end{subfigure}\\
	\end{tabular}
	\caption{Clusters derived from different methods for the Seed data.
		The vertical axis presents estimated Mahalanobis distances of the observations from their respective estimated (MCD) cluster centers.
		[Orange square: cluster $1$; blue circle: cluster $2$; green triangle: Cluster $3$; 
		red cross: outliers identified by the corresponding algorithm] }
	\label{table12}
\end{figure}

As we have done in case of the Swiss Bank Notes data, we first perform an exploratory data analysis by calculating the MCD estimates of cluster centers and dispersion matrices and calculate the estimated Mahalanobis distances of the observations from their respective cluster center estimates (MCD). These distances are presented in Figure \ref{Figure 3} which confirms the presence of some outlying observations. These observations are \enquote{far} away from the baseline as the Mahalanobis distances of these points from their respective estimated (MCD) cluster centers are much larger compared to majority of the points. 

The presence of such outlying observations suggests the need for robust clustering tools to analyze these data. 
We will apply our method with $\beta=0.3$ and compare it with the K-medoids, TCLUST (with $\alpha=0.2$), trimmed K-means (with $\alpha=0.2$) and MCLUST algorithm with $3$ clusters (with uniform noise component). 
Once again we represent the derived clusters and outliers, in Figure \ref{table12}, by expressing the Mahalanobis distance index curve 
using different colours and shapes, with orange square (cluster $1$), blue circle (cluster $2$) and green triangle (cluster $3$) representing the three clusters, 
and red crosses indicating the outliers.    

The classifications observed in Figure \ref{table12} indicate that the K-medoids algorithm, apart from failing in terms of outlier detection, lead to too many misclassified observations. Our proposed algorithm, the TCLUST and the trimmed K-means algorithms provide improvements through better classification and outlier detection, although neither classification nor outlier detection is done as perfectly as in case of the Swiss Bank Notes data. Our proposal does marginally better than the TCLUST and the trimmed K-means methods in terms of controlling the misclassification. The MCLUST method failed to detect the outlying observations properly. 
\section{Extension to Image Processing} 
\label{secimage}
Unsupervised methods for image analysis with anomaly detection is an important class of techniques in computer vision 
with applications in astronomy, biology, geology and many other fields. 
We analyze a satellite image presented in Figure \ref{table13} (obtained from  
\href{https://medium.com/swlh/using-deep-learning-semantic-segmentation-method-for-ship-detection-on-satellite-optical-imagery-ffeaae8c1ab}{link}, see this \cite{em}) 
with our method along with the aforesaid algorithms. 
For this purpose, we divide the original high resolution picture into $500 \times 500$ pixels (Figure \ref{Fig:imageO}). 
Each pixel consists of a combination of three different colours (red, blue and green) with different intensities ranging from $0$ to $1$. 
We observe the intensities of the colours in a pixel as a multivariate three dimensional observation. 
\begin{figure}[h!]
	\centering
	\includegraphics[height=3 in]{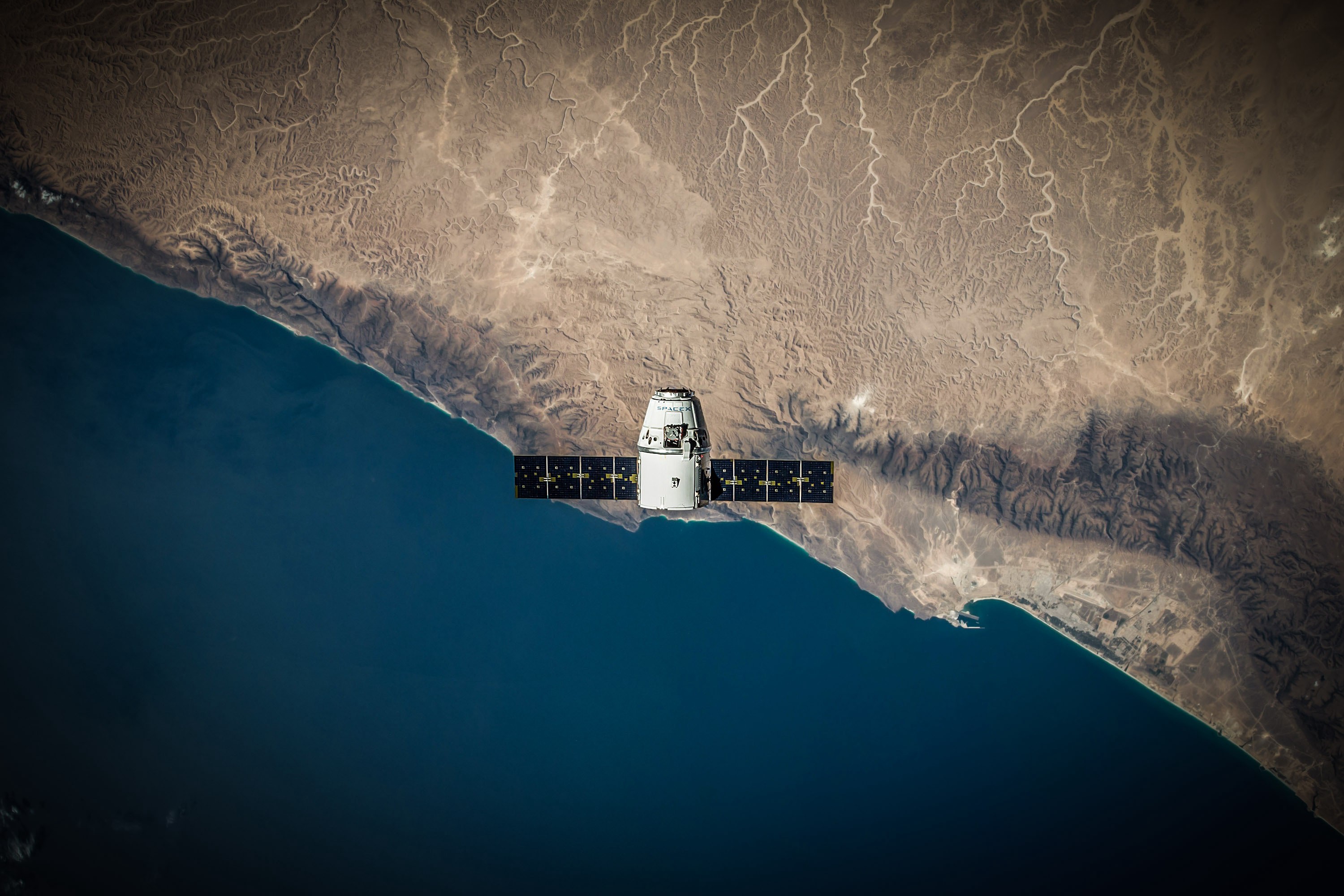}
	\caption{An Example of Satellite Image.}
	\label{table13}
\end{figure}

The two main components of the picture are the ocean body (blue region along the bottom and the left border of the image) and the coastal area (rest of the image along the top and right border) adjacent to it. But the flying body (the white body with black wings) and the shadow like regions (comparatively black regions in the coastal part, just adjacent to the ocean) should be treated as anomalies as they do not really belong to any of the two primary components. 
Thus we describe the image as a sample of size $500 \times 500=2.5 \times 10^{5}$ from a three dimensional, 
two component mixture population where the dimensions are the intensities of red, blue and green colours 
and the mixing components are the ocean body and the coastal area along with the aforesaid anomalies. 
We apply our method (along with the TCLUST, trimmed K-means, MCLUST and the ordinary K-means algorithms; K-medoids is avoided due to very large sample size of this data ($250000$) which is not permissible in the \enquote{PAM} function of the R software which is used to implement the K-medoids algorithm) with two clusters and check 
whether these methods can correctly identify those components and detect the anomalous structures (the flying object and the shadow like regions). 
In analyzing the data set and reconstructing the original image, we make the following modifications to our algorithm.   
\\
\\
\noindent
\textbf{Initialization:} The initialization and estimation methods are same as before. 
\\
\\
\noindent	
\textbf{Modification 1:} The first modification is in the assignment step. We use the minimum distance principle rather than the maximum likelihood principle in assigning the observations to different clusters. That is, we now assign $X$ (the intensity vector of red, green and blue colours for a particular pixel) to $C_{j}$ if the intensity vector of the cluster centre corresponding to $C_j$ is closest to $X$ (in the Euclidean norm) compared to all other cluster centres (instead of assigning $X$ to a cluster which maximizes its likelihood).
\\
\\
\noindent	
\textbf{Modification 2:} The second modification is in the outlier detection step. Although in our data analysis examples in Section 6, declaration of outliers has not been cluster specific, outliers may have different sources of possible anomalies, so a further classification among them may be useful. Note that prior to the identification of outliers, our algorithm assigns all data points among the regular clusters. We use these class memberships to consider a classification of the outliers; with $k$ clusters, there are $k$ different types of outliers according to their cluster assignment. It is likely that these points will end up representing different things in the reconstructed image.

It may be noted that the above modifications are not specific to the image under considerations;
they can and should be appropriately incorporated while applying our proposed clustering techniques for any image under study. 
The distance based assignment (modification 1) is common in most image processing techniques.
To see the requirement of the second modification, let us consider the example image given in Figure \ref{table13}; 
in this image the shadow like regions and the flying object both can be regarded as outliers compared to the ocean body and coastal area 
but they are actually different objects. The MPLE$_{\beta}$ as well as the TCLUST, trimmed K-means and ordinary K-means methods will recognize these regions as outliers ignoring the structural difference between them. But this will not be helpful if the clustering algorithms aim to separately identify the anomalous flying object.
Such problems are of great practical importance, e.g., in aeronautics and marine science.
Although in our example we have illustrated the classification of outliers in two groups,  
the proposed methodology can easily be extended to similarly classify more than two types of outliers,
as required, depending on the image under consideration.  

\begin{figure}[h!]
	\centering
	\begin{tabular}{cc}
		\begin{subfigure}{0.5\textwidth}\centering\includegraphics[height=1.5in,width=.8\columnwidth]{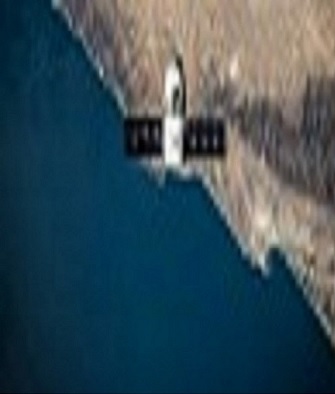}\caption{Original image with $500\times 500$ pixels }\label{Fig:imageO}\end{subfigure}&
		\begin{subfigure}{0.5\textwidth}\centering\includegraphics[height=1.5in,width=0.8\columnwidth]{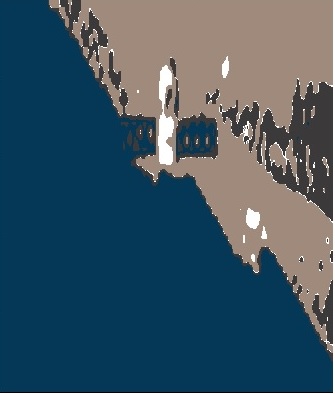}\caption{Output of MPLE$_{\beta}$ method}\end{subfigure}\\
		\begin{subfigure}{0.5\textwidth}\centering\includegraphics[height=1.5 in,width=.8\columnwidth]{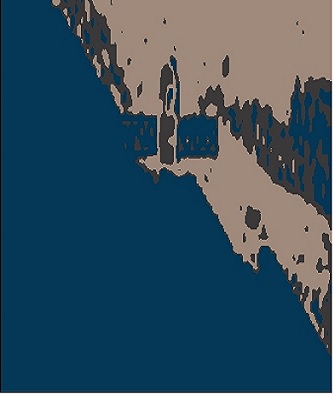}\caption{Output of trimmed K-means method}\end{subfigure}&
		\begin{subfigure}{0.5\textwidth}\centering\includegraphics[height=1.5 in,width=.8\columnwidth]{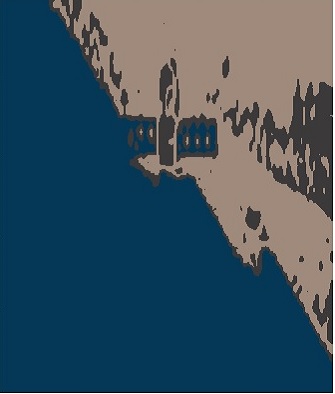}\caption{Output of TCLUST method}\end{subfigure}\\
		\begin{subfigure}{0.5\textwidth}\centering\includegraphics[height=1.5 in,width=.8\columnwidth]{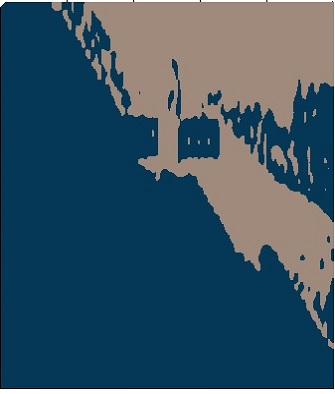}\caption{Output of K-means method}\end{subfigure}&
		\begin{subfigure}{0.5\textwidth}\centering\includegraphics[height=1.5 in,width=.8\columnwidth]{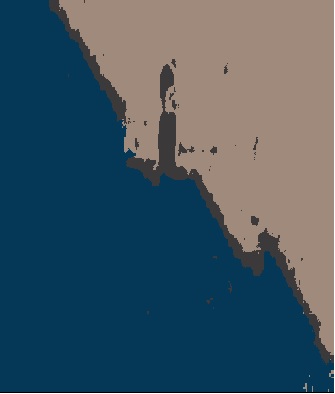}\caption{Output of MCLUST method}\end{subfigure}\\
	\end{tabular}
	\caption{Original and reconstructed images after applying different methods of clustering.}
	\label{table14}
\end{figure}

Now, we implement our proposed method to analyze the image under study in Figure \ref{Fig:imageO} (divided into $500\times 500$ pixels);
here we take $\beta=0.2$, $T=0.02$, $c=20$ and  $c_{1}=0.1$ in our proposal,
whereas for the TCLUST method $c=20$ and $\alpha=0.1$ are taken and for the trimmed K-means method $\alpha=0.1$ is taken; MCLUST is applied with $2$ clusters (with uniform noise component).
The reconstructed images with different methods are presented in Figure \ref{table14}. The water body and the coastal region can be clearly identified from the reconstructed images using all the algorithms. The brown shades indicate the possible anomalous regions in case of the TCLUST method and the trimmed K-means method. The trimmed K-means method is somewhat inefficient in detecting the outliers as some of the shadow like regions in the coastal area are misclassified as water body (blue regions within the brown outlying parts). But in our method the brown shades correspond to the shadow like regions whereas the white shades correspond to the flying object. Some regions in the coastal area which are close to white in the original image, are also detected as white outliers in our method.

So, we may conclude that the MPLE$_{\beta}$ and the TCLUST algorithms successfully point out these areas except for some areas on the wings of the flying object but the trimmed K-means method can not classify those areas perfectly. On the other hand, the MCLUST algorithm could not detect either the flying object or the shadow like anomalous region in the coastal area properly.

Additionally, our method specifically points out and distinguishes the flying object separately from the shadow like region through further classification of identified outliers. 
The existing implementations of the TCLUST or the trimmed K-means methods (in CRAN) do not distinguish between the outlier types. 
This additional refinement of our proposed methodology specifically helps to identify different small parts in the image  
making it a very useful robust clustering technique for image processing. 

For all the three clustering methods considered here, however, a little amount of misclassification occurs 
possibly due to the reduced resolution of the original image. 
\section{Conclusion and Future Plans} 
\label{secconfu}
We have proposed an algorithm which gives robust estimates of parameters of a mixture normal distribution and provides the cluster assignments of the observations along with likelihood based anomaly detection in the spirit of the density power divergence and $\beta$-likelihood. Some of the theoretical results of this procedure, including existence and consistency, has been presented in \cite{arxiv}). To explore the robustness of our algorithm, we have studied the influence functions of our estimators theoretically and have established the boundedness of the influence functions. The phenomenon has also been demonstrated graphically. Simulation studies have been presented in terms of regular misclassification error rates, proportion of undetected outliers, bias and mean squared errors of the estimated cluster means. On the whole, satisfactory results have been obtained and our method works competitively or better than the discussed methods in case of contaminated datasets in higher dimensions. Our method also exhibits satisfactory performance in analyzing the real datasets. In this work, we have compared our method with the K-medoid, TCLUST, trimmed K-means and MCLUST algorithms. In future, the performance of our method may also be compared with that of the $\gamma$-divergence based clustering algorithms (\cite{gammaclust2} and \cite{gammaclust1}) and other similar existing methods. 

We have presented a method which will provide an useful tool for applied scientists in many different domains. We have demonstrated the performance of the method in many applications and many different viewpoints. At the same time, it is important for us to be mindful about the issues involved in this research which remain unresolved (at least partially) as of now. Here we list such issues which are important in this connection. Satisfactory resolution of these issues which enrich and enhance the present research, and will hope to be able to solve them sometimes in the near future.

%\begin{enumerate}

Firstly, most of the tuning parameters in this work have been selected either subjectively, or through ad-hoc ideas. This has been briefly discussed in Section \ref{sec2.3}. This aspect of the present work needs to be improved. In future, we will endeavour to build sophisticated statistical tools which would lead to the automatic selection of the tuning parameters taking into account the (unknown) amount of anomaly in the data.  

Secondly, some of the asymptotic results related to the proposed estimators need to be strengthened and appropriately extended. While proving the theoretical results about our estimators, it has been assumed that the true unknown distribution is a mixture of multivariate normals, i.e., it belongs to the model family. In the future, it will be of interest to establish the theoretical results under more general assumptions on the true unknown distribution to allow mixtures of other non-normal distributions including heavy tailed distributions.

Finally, the NS constraint is crucial to prove some of the theoretical results, and at the moment a proof which bypasses this condition is not available. However, it is required only for some very rare pathological cases (discussed at the end of Section \ref{SEC:theoretical_method}). In future, it will be among our primary goals to develop a proof which avoids the use of the NR constraint given the ER constraint, possibly by assuming some suitable moment conditions.   

%\end{enumerate}

Other possible future works may include the study of breakdown points to establish strong robustness properties of the estimators. Although, we have taken high dimensional datasets for simulation studies as well as real data applications, we have not considered the typical \enquote{high dimensional} set-up, where $n\ll p$. It will be interesting to extend this research from the present multivariate set-up to the actual high dimensional set-up, so that it can be efficiently applied to complex genomic, astronomical and other datasets. Robust classification tools may also be developed using maximum pseudo $\beta$-likelihood.
\\
\\
\noindent 
\textbf{Acknowledgement:} 
The research of AB is partially supported by the Technology Innovation Hub at Indian Statistical Institute, Kolkata under
Grant NMICPS/006/MD/2020-21 of Department of Science and Technology, Government of India, dated 16.10.2020. The research of AG is partially supported by the INSPIRE Faculty Research Grant from Department of Science and Technology, Government of India. The authors gratefully acknowledge the suggestions of two anonymous reviewers, which led to a significantly improved version of the manuscript.

 \begin{appendices}
 	\section{Robust clustering tools: A motivation} 
 	\label{appen1}
 	Let us consider the following scatter plots in Figure \ref{Figure 1}.  In the left panel of the plots, along with three main clusters, two outlying clusters (in red) are present. In the right panel, the three main clusters are contaminated with uniform noise (red points) from an annulus. If a clustering procedure with three clusters based on the estimation of the cluster mean using classical methods is run on the contaminated data, the cluster means are strongly affected and as a result the detection of the actual clusters could be highly inaccurate. The outliers affect both the estimators and the misclassification rates. %In particular, for the dataset presented in the left panel of Figure \ref{Figure 1}, the misclassification rate was $0.003$ before adding the outlying clusters (red clusters) and the rate becomes $0.213$ after adding those outlying clusters under ordinary likelihood based clustering procedure. For the dataset presented in the right panel of Figure \ref{Figure 1}, the misclassification rate was $0.003$ before adding the noise (red points) and the rate becomes $0.174$ after adding the noise under ordinary likelihood based clustering procedure.
 	Clearly, robust clustering methods may be useful in these situations to properly control the noise in the data. 
 	
 	\renewcommand{\thefigure}{S1}
 	\begin{figure}[h!]
 		\centering
 		\includegraphics[height=2.8 in,width= 6 in]{draft1_1}
 		\caption{Scatter plots with outlying cluster contamination (left panel) and uniform (from annulus) contamination (right panel), for a three cluster bivariate dataset.}
 		\label{Figure 1}
 	\end{figure}

 	%\section{Additional Computational Algorithms}

 	\section{Mathematical Derivations}
 	\label{appen2}
 	\subsection{Proof of Theorem \ref{THM1}}
 	
 	To find the MDPD estimates of $\boldsymbol{\mu}$ and $\boldsymbol{\Sigma}$, it is enough to minimize,
 	\begin{align*}
 	H(\boldsymbol{\mu},\boldsymbol{\Sigma})&=\frac{1}{1+\beta}\int \phi_{p}^{1+\beta}(\boldsymbol{x},\boldsymbol{\mu},\boldsymbol{\Sigma}) \;d\boldsymbol{x} - \frac{1}{n\beta}\sum_{i=1}^{n}\phi_{p}^{\beta}(\boldsymbol{X}_{i},\boldsymbol{\mu},\boldsymbol{\Sigma})
 	\end{align*}
 	with respect to $\boldsymbol{\mu}$ and $\boldsymbol{\Sigma}$ where,
 	\begin{align*}
 	\phi_{p}(\boldsymbol{x},\boldsymbol{\mu},\boldsymbol{\Sigma}) = \frac{1}{(2\pi)^{\frac{p}{2}}|\boldsymbol{\Sigma}|^\frac{1}{2}}e^{-\frac{1}{2}(\boldsymbol{x}-\boldsymbol{\mu}){'}\boldsymbol{\Sigma}^{-1}(\boldsymbol{x}-\boldsymbol{\mu})}\; \text{for}\; \boldsymbol{x} \in \mathbb{R}^{p}.
 	\end{align*}
 	Since $N_{p}(\boldsymbol{\mu},\boldsymbol{\Sigma})$ model is a location-scale model, the first integral term in $H(\boldsymbol{\mu},\boldsymbol{\Sigma})$ $\Bigg(\text{given by} \frac{1}{(2\pi)^{\frac{p\beta}{2}}|\boldsymbol{\Sigma}|^{\frac{\beta}{2}}(1+\beta)^{\frac{p}{2}+1}}\Bigg)$ is independent of $\boldsymbol{\mu}$. So differentiating $H(\boldsymbol{\mu},\boldsymbol{\Sigma})$ with respect to $\boldsymbol{\mu}$ and equating the derivative to zero, we get the first estimating equation as,
 	\begin{equation}
 	\label{eqq*}
 	\frac{1}{n}\sum_{i=1}^{n}\phi_{p}^{\beta}(\boldsymbol{X}_{i},\boldsymbol{\mu},\boldsymbol{\Sigma})\;\frac{\partial \log\; \phi_{p}(\boldsymbol{X}_{i},\boldsymbol{\mu},\boldsymbol{\Sigma})}{\partial \boldsymbol{\mu}}=0.
 	\end{equation}
 	But, we have
 	\begin{align*}
 	\log\;\phi_{p}(\boldsymbol{x},\boldsymbol{\mu},\boldsymbol{\Sigma})& = -\frac{p}{2}\log\;(2\pi)-\frac{1}{2}\log\;|\boldsymbol{\Sigma}|-\frac{1}{2}(\boldsymbol{x}-\boldsymbol{\mu}){'}\boldsymbol{\Sigma}^{-1}(\boldsymbol{x}-\boldsymbol{\mu})
 	\end{align*}
 	and hence,
 	\begin{align*}
 	\frac{\partial \log\; \phi_{p}(\boldsymbol{X}_{i},\boldsymbol{\mu},\boldsymbol{\Sigma})}{\partial \boldsymbol{\mu}}&=\boldsymbol{\Sigma}^{-1}(\boldsymbol{X}_{i}-\boldsymbol{\mu}).
 	\end{align*}
 	So, the first estimating equation in Equation (\ref{eqq*}) simplifies to 
 	\begin{equation}
 	\centering
 	\label{Eq12}
 	\frac{1}{n}\sum_{i=1}^{n}\phi_{p}^{\beta}(\boldsymbol{X}_{i},\boldsymbol{\mu},\boldsymbol{\Sigma})(\boldsymbol{X}_{i}-\boldsymbol{\mu})=0.
 	\end{equation}
 	Next, to obtain the second estimating equation, our task is to differentiate $H(\boldsymbol{\mu},\boldsymbol{\Sigma})$ with respect to $\boldsymbol{\Sigma}$. But it will be cumbersome and thus we will carry on the calculation with respect to $\boldsymbol{\Sigma}^{-1}$ as was done to find the maximum likelihood estimates of $\boldsymbol{\mu}$ and $\boldsymbol{\Sigma}$ in Mardia et al. \cite{mardia}. To do that, we need the following lemma.  
 	
 	\begin{manuallemma}{S3.1}[Mardia et al. \cite{mardia}]\label{baz}
 		Suppose $f:\mathbb{R}^{m \times n}\rightarrow \mathbb{R}$ be a differentiable function. Then,
 		\begin{enumerate}
 			\item The derivative of $f(\boldsymbol{X})$ with respect to $\boldsymbol{X}=[[x_{ij}]]$  is given by,
 			\begin{align*}
 			\frac{\partial f(\boldsymbol{X})}{\partial \boldsymbol{X}}&=\left[\left[\frac{\partial f(\boldsymbol{X})}{\partial x_{ij}}\right]\right].
 			\end{align*}
 			\item If $\boldsymbol{X}$ is symmetric,
 			\[
 			\frac{\partial |\boldsymbol{X}|}{\partial x_{ij}}= 
 			\begin{cases}
 			\boldsymbol{X}_{ii},& \text{if } i=j,\\
 			2\boldsymbol{X}_{ij},  & \text{otherwise},
 			\end{cases}
 			\]
 			where $\boldsymbol{X}_{ij}$ being the $(i,j)$-th cofactor of $\boldsymbol{X}$.
 			\item If $\boldsymbol{X}$ is symmetric,
 			\begin{align*}
 			\frac{\partial tr(\boldsymbol{X}\boldsymbol{A})}{\partial \boldsymbol{X}}&= \boldsymbol{A}+\boldsymbol{A}^{'}-Diag(\boldsymbol{A}),
 			\end{align*}
 			where $tr(\boldsymbol{A})$ denotes the trace of the matrix $\boldsymbol{A}$ and $Diag(\boldsymbol{A})$ denotes the diagonal matrix whose diagonal elements are that of the matrix $\boldsymbol{A}$.
 		\end{enumerate}
 	\end{manuallemma}

 	Coming back to our problem, 
 	\begin{align*}
 	\frac{\partial \log\;\phi_{p}(\boldsymbol{x},\boldsymbol{\mu},\boldsymbol{\Sigma})}{\partial \boldsymbol{\Sigma}^{-1}}&=\frac{\partial }{\partial\boldsymbol{\Sigma}^{-1} }\left[\frac{1}{2}\log\;|\boldsymbol{\Sigma}^{-1}|-\frac{1}{2}tr(\boldsymbol{\Sigma}^{-1}(\boldsymbol{x}-\boldsymbol{\mu})(\boldsymbol{x}-\boldsymbol{\mu})^{'})\right]\\
 	&=\frac{\partial }{\partial \boldsymbol{V} }\left[\frac{1}{2}\log\;|\boldsymbol{V}|-\frac{1}{2}tr(\boldsymbol{V}(\boldsymbol{x}-\boldsymbol{\mu})(\boldsymbol{x}-\boldsymbol{\mu})^{'})\right] ,\;\boldsymbol{V}=\boldsymbol{\Sigma}^{-1}.
 	\end{align*}
 	Now, using Lemma \ref{baz},
 	\[
 	\frac{\partial }{\partial \boldsymbol{V} }\log\;|\boldsymbol{V}|=  \frac{1}{|\boldsymbol{V}|}\frac{\partial |\boldsymbol{V}|}{\partial \boldsymbol{V} }\; \text{whose}\; (i,j)\text{-th element is}
 	\begin{cases}
 	\frac{2\boldsymbol{V}_{ij}}{|\boldsymbol{V}|},& \text{if }i \neq j, \\
 	\frac{\boldsymbol{V}_{ii}}{|\boldsymbol{V}|},& \text{if }i = j,
 	\end{cases}
 	\]
 	where $\boldsymbol{V}_{ij}$ is the $(i,j)$-th cofactor of $\boldsymbol{V}$ and
 	\begin{align*}
 	\frac{\partial }{\partial \boldsymbol{V} }tr\left(\boldsymbol{V}(\boldsymbol{x}-\boldsymbol{\mu})(\boldsymbol{x}-\boldsymbol{\mu})^{'}\right)&=2(\boldsymbol{x}-\boldsymbol{\mu})(\boldsymbol{x}-\boldsymbol{\mu})^{'}-Diag(\boldsymbol{x}-\boldsymbol{\mu})(\boldsymbol{x}-\boldsymbol{\mu})^{'}.
 	\end{align*}
 	Since, $\boldsymbol{V}=\boldsymbol{\Sigma}^{-1}$ is symmetric, the matrix with elements $\frac{\boldsymbol{V}_{ij}}{|\boldsymbol{V}|}$ equals $\boldsymbol{V}^{-1}=\boldsymbol{\Sigma}$.
 	Hence, 
 	\begin{align*}
 	\frac{\partial \log\;\phi_{p}(\boldsymbol{x},\boldsymbol{\mu},\boldsymbol{\Sigma})}{\partial \boldsymbol{\Sigma}^{-1}}
 	&=\frac{\partial }{\partial \boldsymbol{V} }\left[\frac{1}{2}\log\;|\boldsymbol{V}|-\frac{1}{2}tr(\boldsymbol{V}(\boldsymbol{x}-\boldsymbol{\mu})(\boldsymbol{x}-\boldsymbol{\mu})^{'})\right] \;,\boldsymbol{V}=\boldsymbol{\Sigma}^{-1}\\
 	&=\frac{1}{2}\boldsymbol{M}-\frac{1}{2}(2\boldsymbol{S}-Diag(\boldsymbol{S})),
 	\end{align*}
 	where $\boldsymbol{S}=(\boldsymbol{x}-\boldsymbol{\mu})(\boldsymbol{x}-\boldsymbol{\mu})^{'}$ and $\boldsymbol{M}=2\boldsymbol{\Sigma}-Diag(\boldsymbol{\Sigma})$. 
 	Hence, differentiating $H(\boldsymbol{\mu},\boldsymbol{\Sigma})$ with respect to $\boldsymbol{\Sigma}^{-1}$ and equating the derivative to zero, we have the second estimating equation
 	
 	\begin{equation}
 	\centering
 	\label{eqq**}
 	\begin{array}{l}
 	\frac{1}{n}\sum_{i=1}^{n}\phi_{p}^{\beta}(\boldsymbol{X}_{i},\boldsymbol{\mu},\boldsymbol{\Sigma})\;\frac{\partial  \log \phi_{p}(\boldsymbol{X}_{i},\boldsymbol{\mu},\boldsymbol{\Sigma})}{\partial \boldsymbol{\Sigma}^{-1}}=e_{0}\\
 	\\
 	\implies \frac{1}{n}\sum_{i=1}^{n}\phi_{p}^{\beta}(\boldsymbol{X}_{i},\boldsymbol{\mu},\boldsymbol{\Sigma})\frac{1}{2}(\boldsymbol{M}-(2\boldsymbol{S}_{i}-Diag(\boldsymbol{S}_{i})))=e_{0},
 	\end{array}
 	\end{equation}
 	where $\boldsymbol{S}_i=(\boldsymbol{X}_i-\boldsymbol{\mu})(\boldsymbol{X}_i-\boldsymbol{\mu})^{'}$ and 
 	\begin{align*}
 	e_{0}&=\frac{\partial}{\partial \boldsymbol{\Sigma}^{-1}}\left[\frac{1}{1+\beta}\int \phi_{p}^{1+\beta}(\boldsymbol{x},\boldsymbol{\mu},\boldsymbol{\Sigma}) \;d\boldsymbol{x}\right ]\\
 	&=\int \phi_{p}^{1+\beta}(\boldsymbol{x},\boldsymbol{\mu},\boldsymbol{\Sigma})\frac{\partial \log\; \phi_{p}(\boldsymbol{x},\boldsymbol{\mu},\boldsymbol{\Sigma})} {\partial \boldsymbol{\Sigma}^{-1}}\;d\boldsymbol{x}\\
 	&=\frac{\boldsymbol{M}}{2}\int \phi_{p}^{1+\beta}(\boldsymbol{x},\boldsymbol{\mu},\boldsymbol{\Sigma}) \;d\boldsymbol{x}-\int (\boldsymbol{x}-\boldsymbol{\mu})(\boldsymbol{x}-\boldsymbol{\mu})^{'} \phi_{p}^{1+\beta}(\boldsymbol{x},\boldsymbol{\mu},\boldsymbol{\Sigma}) \;d\boldsymbol{x} + \frac{1}{2}\int Diag(\boldsymbol{x}-\boldsymbol{\mu})(\boldsymbol{x}-\boldsymbol{\mu})^{'} \phi_{p}^{1+\beta}(\boldsymbol{x},\boldsymbol{\mu},\boldsymbol{\Sigma}) \;d\boldsymbol{x}.
 	\end{align*}
 	Now using the facts that,
 	\begin{equation}
 	\centering 
 	\label{Eq13}
 	\phi_{p}^{1+\beta}(\boldsymbol{x},\boldsymbol{\mu},\boldsymbol{\Sigma})=\frac{1}{(2\pi)^{\frac{p\beta}{2}}|\boldsymbol{\Sigma}|^{\frac{\beta}{2}}(1+\beta)^{\frac{p}{2}}}\phi_{p}(\boldsymbol{x},\boldsymbol{\mu},\boldsymbol{\Sigma}_{0})
 	\end{equation}
 	with $\boldsymbol{\Sigma}_{0}=\frac{1}{1+\beta}\boldsymbol{\Sigma}$ and $(\boldsymbol{X}-\boldsymbol{\mu})(\boldsymbol{X}-\boldsymbol{\mu})^{'}$ follows a $p$-dimensional Wishart distribution with scale parameter $\boldsymbol{\Sigma}_0$ and 1 degree of freedom for a random vector $\boldsymbol{X}$ which follows $N_{p}(\boldsymbol{\mu},\boldsymbol{\Sigma}_{0})$,
 	\begin{align*}
 	e_{0}&=\frac{\boldsymbol{M}}{2}\frac{1}{(2\pi)^{\frac{p\beta}{2}}|\boldsymbol{\Sigma}|^{\frac{\beta}{2}}(1+\beta)^{\frac{p}{2}}} -\frac{\boldsymbol{M}}{2}\frac{1}{1+\beta}\frac{1}{(2\pi)^{\frac{p\beta}{2}}|\boldsymbol{\Sigma}|^{\frac{\beta}{2}}(1+\beta)^{\frac{p}{2}}}\\
 	&=\frac{\boldsymbol{M}\beta}{2(2\pi)^{\frac{p\beta}{2}}|\boldsymbol{\Sigma}|^{\frac{\beta}{2}}(1+\beta)^{\frac{p+2}{2}}}.
 	\end{align*}
 	Precisely, the second estimating equation in Equation $(\ref{eqq**})$ becomes,
 	\begin{align*} \frac{1}{n}\sum_{i=1}^{n}\phi_{p}^{\beta}(\boldsymbol{X}_{i},\boldsymbol{\mu},\boldsymbol{\Sigma})\frac{1}{2}(\boldsymbol{M}-(2\boldsymbol{S}_{i}-Diag(\boldsymbol{S}_{i})))=\frac{\boldsymbol{M}\beta}{2(2\pi)^{\frac{p\beta}{2}}|\boldsymbol{\Sigma}|^{\frac{\beta}{2}}(1+\beta)^{\frac{p+2}{2}}}\\
 	\implies \frac{1}{n}\sum_{i=1}^{n}\phi^{\beta}_{p}(\boldsymbol{X}_{i},\boldsymbol{\mu},\boldsymbol{\Sigma})(\boldsymbol{M}-(2\boldsymbol{S}_{i}-Diag(\boldsymbol{S}_{i})))&=c_{0}\boldsymbol{M}
 	\end{align*}
 	which is algebraically equivalent to,
 	\begin{equation}
 	\centering
 	\label{Eq14}
 	\frac{1}{n}\sum_{i=1}^{n}\phi^{\beta}_{p}(\boldsymbol{X}_{i},\boldsymbol{\mu},\boldsymbol{\Sigma})(\boldsymbol{\Sigma}-(\boldsymbol{X}_{i}-\boldsymbol{\mu})(\boldsymbol{X}_{i}-\boldsymbol{\mu})^{'})=c_{0}\boldsymbol{\Sigma}.
 	\end{equation}
 	where $c_{0}={\beta}{(2\pi)^{-\frac{p\beta}{2}}|\boldsymbol{\Sigma}|^{-\frac{\beta}{2}}(1+\beta)^{-\frac{p+2}{2}}}$. Now, by expanding the multivariate normal density $\phi_{p}(\boldsymbol{X}_{i},\boldsymbol{\mu},\boldsymbol{\Sigma})$ in Equations (\ref{Eq12}) and (\ref{Eq14}), we get the following simplified versions of the aforesaid estimating equations.
 	\begin{equation*}
 	\centering
 	%\label{Eq15}
 	\frac{1}{n}\sum_{i=1}^{n}e^{-\frac{\beta}{2}(\boldsymbol{X}_{i}-\boldsymbol{\mu})^{'}\boldsymbol{\Sigma}^{-1}(\boldsymbol{X}_{i}-\boldsymbol{\mu})}(\boldsymbol{X}_{i}-\boldsymbol{\mu})=0,
 	\end{equation*}
 	\begin{equation*}
 	%\label{Eq16}
 	\centering
 	\frac{1}{n}\sum_{i=1}^{n}e^{-\frac{\beta}{2}(\boldsymbol{X}_{i}-\boldsymbol{\mu})^{'}\boldsymbol{\Sigma}^{-1}(\boldsymbol{X}_{i}-\boldsymbol{\mu})}(\boldsymbol{\Sigma}-(\boldsymbol{X}_{i}-\boldsymbol{\mu})(\boldsymbol{X}_{i}-\boldsymbol{\mu})^{'})=\frac{\beta}{(1+\beta)^{\frac{p}{2}+1}}\boldsymbol{\Sigma}.
 	\end{equation*}
 	This completes derivation of the estimating equations.
 	\subsection{Proof of Theorem \ref{THM2}}
 	Since the cluster assignments and the estimates  $\hat{\boldsymbol{\mu}}_{j}$, $\hat{\boldsymbol{\Sigma}}_{j}$ are known, 
 	the values of the assignment functions $Z(\cdot,\cdot)$ are also known. 
 	Thus, maximizing the empirical objective function with respect to $(\pi_{1},\pi_{2},...,\pi_{k})$ is equivalent to the optimization problem:
 	\begin{align*}
 	~\mbox{\textbf{maximize}}~~~~
 	\sum_{j=1}^{k}n_{j}\log\;\pi_{j},
 	~~~~~\mbox{	\textbf{subject to}}~~~~
 	\sum_{j=1}^{k}\pi_{j}=1.
 	\end{align*}
 	Now the optimal choice of $\pi_{j}$ can be directly obtained by optimizing the Lagrangian,
 	\begin{align*}
 	l(\pi_{1},\pi_{2},...,\pi_{k}, \lambda)&=\sum_{j=1}^{k}n_{j}\log\;\pi_{j}-\lambda\left (\sum_{j=1}^{k}\pi_{j}-1\right ).
 	\end{align*}
 	
 	\section{Technical Derivations of Section \ref{SEC:IF}}
 	\label{appen3}
 	\subsection{Derivations of the Systems of Equations defining the Maximum Pseudo $\beta$-Likelihood Functional (MPLF$_\beta$)}
 	\label{appen3.1}
 	Consider the set-up and notation of Section \ref{SEC:IF}. 
 	We now present the derivation of the system of equations defining the MPLF$_\beta$ $\boldsymbol{\theta}(P)$, given in system (\ref{Eq*}) of equations.

 	Let us first note that, under $p=1,k=2$ and the true distribution function $P$,
 	our objective function for the MPLF$_\beta$ functional is given by
 	\begin{equation}
 	\centering
 	\label{Eq10}
 	L_{\beta}(\boldsymbol{\theta},P)=E_{P}\left[\sum_{j=1}^{2}Z_{j}(X,\boldsymbol{\theta})\left[\log\;\pi_{j}+\frac{1}{\beta}f^{\beta}(X,\mu_{j},\sigma^{2}_{j})-\frac{1}{1+\beta}\int f^{1+\beta}(x,\mu_{j},\sigma^{2}_{j}) \;dx]\right]\right],
 	\end{equation}
 	where $f(\cdot,\mu_{j},\sigma^{2}_{j})$ is the pdf of univariate normal distribution with mean $\mu_{j}$ and variance $\sigma^{2}_{j}$ for $j=1,2$, $P$ is the true but unknown distribution function and hence the parameter as a functional can be written as,
 	\begin{equation}
 	\centering 
 	\label{Eq11}
 	\boldsymbol{\theta}_{\beta}(P)=\underset{\boldsymbol{\theta} \in \boldsymbol{\Theta}_{C}}{\text{argmax}}\;L_{\beta}(\boldsymbol{\theta}, P).
 	\end{equation}
 	
 	In our case, the assignment functions can be presented as,
 	\begin{align*}
 	Z_{1}(X,\boldsymbol{\theta})&=I(D_{1}(X,\boldsymbol{\theta})\geq D_{2}(X,\boldsymbol{\theta}))\; \text{and}\\
 	Z_{2}(X,\boldsymbol{\theta})&=1-Z_{1}(X,\boldsymbol{\theta}).
 	\end{align*}
 	Note that $D_{1}(X,\boldsymbol{\theta})$ and $D_{2}(X,\boldsymbol{\theta})$ are bell-shaped convex functions. Hence,
 	\begin{align*}
 	&D_{1}(X,\boldsymbol{\theta})\geq D_{2}(X,\boldsymbol{\theta})\\
 	&\implies \frac{(X-\mu_{2})^2}{\sigma^{2}_{2}}-\frac{(X-\mu_{1})^2}{\sigma^{2}_{1}}\geq 2\log\;\frac{\pi_{2}\sigma_{1}}{\pi_{1}\sigma_{2}}\\
 	&\implies a_{1}X^{2}+a_{2}X+a_{3}\geq0
 	\end{align*}
 	for suitable values of $a_{1},a_{2}$ and $a_{3}$. Since, a quadratic equation can have at most two real roots, the aforesaid inequality leads to the following possibilities,
 	
 	\textbf{Possibility 1:} $X\in(a,b)$ if $ a_{1}<0$.
 	
 	\textbf{Possibility 2:} $X \in (-\infty,a)\cup(b,\infty)$ if $ a_{1}>0$.
 	
 	\textbf{Possibility 3:} $X \in (-\infty,a)$ or $X \in (a,\infty)$ if $ a_{1}=0$.
 	
 	Let us carefully note that, the constants $a$ and $b$ depend on the parameter $\boldsymbol{\theta}$. So, we have to consider these additional functionals $a(P)$ and $b(P)$ in order to derive the influence function of $\boldsymbol{\theta}(P)$. The above  possibilities can be graphically observed in Figure \ref{table 1}.
 	\renewcommand{\thefigure}{S2}
 	\begin{figure}[h!]
 		\centering
 		\begin{tabular}{cc}
 			\begin{subfigure}{0.4\textwidth}\centering\includegraphics[width=.6\columnwidth]{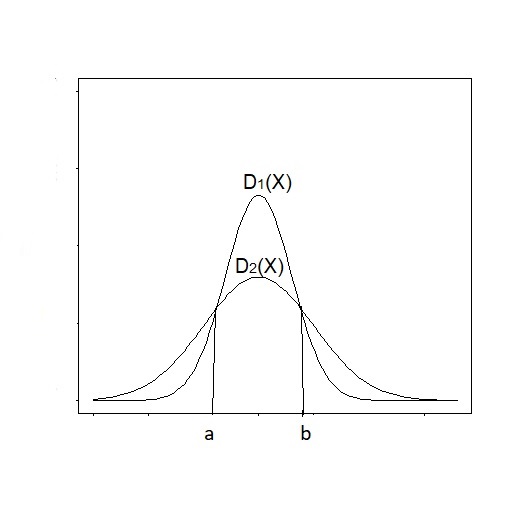}\caption{Possibility 1}\end{subfigure}&
 			\begin{subfigure}{0.4\textwidth}\centering\includegraphics[width=.6\columnwidth]{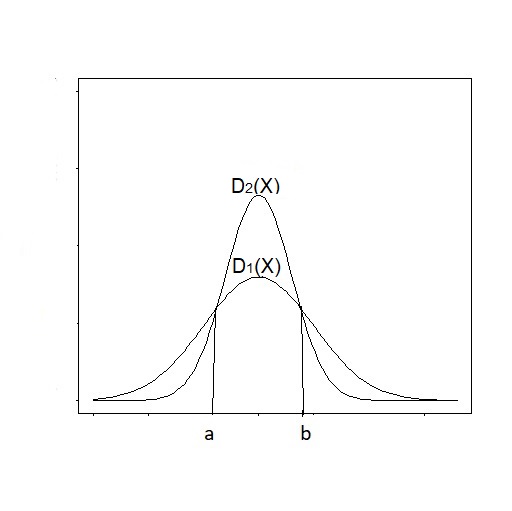}\caption{Possibility 2}\end{subfigure}\\
 			\newline
 			\begin{subfigure}{0.4\textwidth}\centering\includegraphics[width=.6\columnwidth]{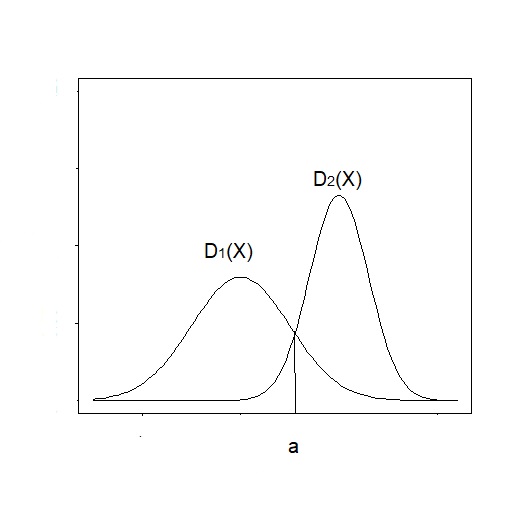}\caption{Possibility 3}\end{subfigure}&
 			\begin{subfigure}{0.4\textwidth}\centering\includegraphics[width=.6\columnwidth]{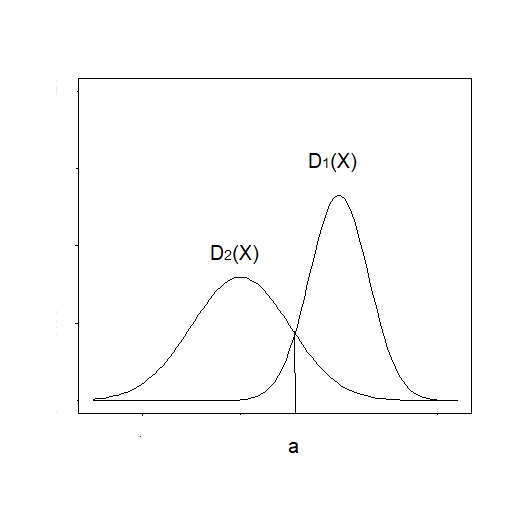}\caption{Possibility 3}\end{subfigure}\\
 		\end{tabular}
 		\caption{Different Possibilities}
 		\label{table 1}
 	\end{figure}
 	
 	We will first derive the influence function in possibility:1; the derivation of the same for the remaining possibilities are similar. To do that, let us first study the mathematical relationships among these functionals.
 	The proportion functional satisfies,
 	\begin{align*}
 	\pi_{1}(P)+\pi_{2}(P)=1.
 	\end{align*}
 	From the plots in Figure \ref{table 1}, it is intuitively clear that, the functional $\boldsymbol{\theta}_{\beta}(P) $ satisfies,
 	\begin{equation}
 	\centering
 	\label{eqq***}
 	\begin{array}{l}
 	D_{1}(a(P),\boldsymbol{\theta}_{\beta}(P))=D_{2}(a(P),\boldsymbol{\theta}_{\beta}(P)),\\
 	D_{1}(b(P),\boldsymbol{\theta}_{\beta}(P))=D_{2}(b(P),\boldsymbol{\theta}_{\beta}(P)).
 	\end{array}
 	\end{equation}
 	Next let us simplify $L_{\beta}(\boldsymbol{\theta},P)$ under $k=2$ and $p=1$.
 	\begin{align*}
 	L_{\beta}(\boldsymbol{\theta},P)&=E_{P}\left[\sum_{j=1}^{2}Z_{j}(X,\boldsymbol{\theta})\left[\log\;\pi_{j}+\frac{1}{\beta}f^{\beta}(X,\mu_{j},\sigma^{2}_{j})-\frac{1}{1+\beta}\int f^{1+\beta}(x,\mu_{j},\sigma^{2}_{j}) \;dx]\right]\right]\\
 	&=E_{P}\left[I(D_{1}(X,\boldsymbol{\theta})>D_{2}(X,\boldsymbol{\theta}))\left[\log\;\pi_{1}+\frac{1}{\beta}f^{\beta}(X,\mu_{1},\sigma^{2}_{1})-\frac{1}{1+\beta}\int f^{1+\beta}(x,\mu_{1},\sigma^{2}_{1}) \;dx]\right]\right]\\
 	&\;\;\;+E_{P}\left[I(D_{2}(X,\boldsymbol{\theta})>D_{1}(X,\boldsymbol{\theta}))\left[\log\;\pi_{2}+\frac{1}{\beta}f^{\beta}(X,\mu_{2},\sigma^{2}_{2})-\frac{1}{1+\beta}\int f^{1+\beta}(x,\mu_{2},\sigma^{2}_{2}) \;dx]\right]\right]\\
 	&=E_{P}\left[I(X \in (a,b))\left[\log\;\pi_{1}+\frac{1}{\beta}f^{\beta}(X,\mu_{1},\sigma^{2}_{1})-\frac{1}{1+\beta}\int f^{1+\beta}(x,\mu_{1},\sigma^{2}_{1}) \;dx]\right]\right]\\
 	&\;\;\;+E_{P}\left[I(X \notin (a,b))\left[\log\;\pi_{2}+\frac{1}{\beta}f^{\beta}(X,\mu_{2},\sigma^{2}_{2})-\frac{1}{1+\beta}\int f^{1+\beta}(x,\mu_{2},\sigma^{2}_{2}) \;dx]\right]\right]\\
 	&=\left[(P(b)-P(a))\log\;\pi_{1}+\frac{1}{\beta}\int_{a}^{b}f^{\beta}(x,\mu_{1},\sigma^{2}_{1})p(x)\;dx-\frac{P(b)-P(a)}{1+\beta}\int f^{1+\beta}(x,\mu_{1},\sigma^{2}_{1})\;dx\right]\\
 	&\;\;\;+\left[(1-P(b)+P(a))\log\;\pi_{2}+\frac{1}{\beta}\int_{x \notin (a,b)} f^{\beta}(x,\mu_{2},\sigma^{2}_{2})p(x)\;dx-\frac{1-P(b)+P(a)}{1+\beta}\int f^{1+\beta}(x,\mu_{2},\sigma^{2}_{2})\;dx\right]
 	\end{align*}
 	with $p(x)$ as the true density corresponding to the true distribution function $P(x)$.
 	Now using (\ref{Eq13}),
 	\begin{align*}
 	\int f^{1+\beta}(x,\mu_{j},\sigma^{2}_{j})\;dx&=\frac{1}{(2\pi)^{\frac{\beta}{2}}\sigma^{\beta}_{j}(1+\beta)^{\frac{1}{2}}},\;j=1,2.
 	\end{align*}
 	Hence,
 	\begin{align*}
 	L_{\beta}(\boldsymbol{\theta},P)&=\left[(P(b)-P(a))\log\;\pi_{1}+\frac{1}{\beta}\int_{a}^{b}f^{\beta}(x,\mu_{1},\sigma^{2}_{1})p(x)\;dx-\frac{P(b)-P(a)}{(2\pi)^{\frac{\beta}{2}}\sigma^{\beta}_{1}(1+\beta)^{\frac{3}{2}}}\right]\\
 	&\;\;\;+\left[(1-P(b)+P(a))\log\;\pi_{2}+\frac{1}{\beta}\int_{x \notin (a,b)} f^{\beta}(x,\mu_{2},\sigma^{2}_{2})p(x)\;dx-\frac{1-P(b)+P(a)}{(2\pi)^{\frac{\beta}{2}}\sigma^{\beta}_{2}(1+\beta)^{\frac{3}{2}}}\right].
 	\end{align*}
 	To find the estimator of $\mu_{j}$ and $\sigma^{2}_{j}$, we have to differentiate the above with respect to the respective parameters. Differentiating $L_{\beta}(\boldsymbol{\theta},P)$ with respect to $\mu_{1}$ and $\mu_{2}$ gives,
 	\begin{align*}
 	&\int_{a}^{b}f^{\beta}(x,\mu_{1},\sigma^{2}_{1})(x-\mu_{1})p(x)\;dx=0,\\
 	&\int_{x \notin (a,b)}f^{\beta}(x,\mu_{2},\sigma^{2}_{2})(x-\mu_{2})p(x)\;dx=0.
 	\end{align*}
 	And differentiating $L_{\beta}(\boldsymbol{\theta},P)$ with respect to $\sigma^{2}_{1}$ and $\sigma^{2}_{2}$ gives,
 	\begin{align*}
 	&\int_{a}^{b}f^{\beta}(x,\mu_{1},\sigma^{2}_{1})\left(\frac{(x-\mu_{1})^{2}}{2\sigma^{2}_{1}}-1\right)p(x)\;dx+\frac{\beta(P(b)-P(a))}{2(2\pi)^{\frac{\beta}{2}}(\sigma^{2}_{1})^{1+\frac{\beta}{2}}(1+\beta)^{\frac{3}{2}}}=0,\\
 	&\int_{x \notin (a,b)}f^{\beta}(x,\mu_{2},\sigma^{2}_{2})\left(\frac{(x-\mu_{2})^{2}}{2\sigma^{2}_{2}}-1\right)p(x)\;dx+\frac{\beta(1-P(b)+P(a))}{2(2\pi)^{\frac{\beta}{2}}(\sigma^{2}_{2})^{1+\frac{\beta}{2}}(1+\beta)^{\frac{3}{2}}}=0.
 	\end{align*}
 	Additionally,
 	\begin{align*}
 	\pi_{1}(P)=\int_{a(P)}^{b(P)}p(x)\;dx
 	\end{align*} 
 	with $p(x)$ being the density function corresponding to $P(x)$. So, the functional $\boldsymbol{\theta}_{\beta}(P)=\Big(\pi_{1}(P),\pi_{2}(P),\mu_{1}(P),
 	\\ \mu_{2}(P),$
 	$\sigma_{1}^{2}(P),\sigma_{2}^{2}(P)\Big)$ can be implicitly described through the following system of equations.
 	\begin{equation*}
 	\centering
 	%\label{Eq*}
 	\begin{array}{l}
 	\pi_{1}(P)=\int_{a(P)}^{b(P)}p(x)\;dx,\\
 	\pi_{1}(P)+\pi_{2}(P)=1,\\
 	D_{1}(c,\boldsymbol{\theta}_{\beta}(P))=D_{2}(c,\boldsymbol{\theta}_{\beta}(P))\; \text{for}\; c=a(P)\; \text{and}\; b(P),\\
 	\int_{a}^{b}f^{\beta}(x,\mu_{1},\sigma^{2}_{1})(x-\mu_{1})p(x)\;dx=0,\\
 	\int_{x \notin (a,b)}f^{\beta}(x,\mu_{2},\sigma^{2}_{2})(x-\mu_{2})p(x)\;dx=0,\\
 	\int_{a}^{b}f^{\beta}(x,\mu_{1},\sigma^{2}_{1})\left(\frac{(x-\mu_{1})^{2}}{2\sigma^{2}_{1}}-1\right)p(x)\;dx+\frac{\beta(P(b)-P(a))}{2(2\pi)^{\frac{\beta}{2}}(\sigma^{2}_{1})^{1+\frac{\beta}{2}}(1+\beta)^{\frac{3}{2}}}=0,\\
 	\int_{x \notin (a,b)}f^{\beta}(x,\mu_{2},\sigma^{2}_{2})\left(\frac{(x-\mu_{2})^{2}}{2\sigma^{2}_{2}}-1\right)p(x)\;dx+\frac{\beta(1-P(b)+P(a))}{2(2\pi)^{\frac{\beta}{2}}(\sigma^{2}_{2})^{1+\frac{\beta}{2}}(1+\beta)^{\frac{3}{2}}}=0
 	\end{array}
 	\end{equation*}
 	which is exactly the same system described in system (\ref{Eq*}) of equations.

 	\subsection{Derivation of the Influence Functions}
 	\label{appen3.2}
 	Consider the set-up and notation of Section \ref{SEC:IF}. 
 	Recall, we have assumed that, $\frac{M}{m}<c$ and $m>c_{1}$. In case of $p=1$,
 	\begin{align*}
 	\frac{M}{m}=\frac{\text{max}\{\sigma^{2}_{1}(P),\sigma^{2}_{2}(P)\}}{\text{min}\{\sigma^{2}_{1}(P),\sigma^{2}_{2}(P)\}}<c.
 	\end{align*}
 	The strict inequality is assumed to confirm that, the same constraint also holds in case of contaminated distribution, that is,
 	\begin{align*}
 	\frac{M_{\epsilon}}{m_{\epsilon}}=\frac{\text{max}\{\sigma^{2}_{1}(P_{\epsilon}),\sigma^{2}_{2}(P_{\epsilon})\}}{\text{min}\{\sigma^{2}_{1}(P_{\epsilon}),\sigma^{2}_{2}(P_{\epsilon})\}}<c.
 	\end{align*}
 	If the aforesaid eigenvalue ratio constraint does not hold for the contaminated distribution, 
 	it is not possible to derive the influence functions of our estimators which are derived under the same constraint. Similarly the second inequality $(m>c_{1})$ confirms the fact that the non-singularity constraint also holds under contamination.
 	
 	%The other constraint i.e. $m_{\epsilon}>c_{1}$ trivially holds for $\epsilon \in [0,1]$. 

 	Now, to derive the influence functions of our estimators, let us introduce the following notations. 
 	Our functionals are  $\left(\pi_{1}(P),\pi_{2}(P), a(P), b(P), \mu_{1}(P),\mu_{2}(P), \sigma_{1}^{2}(P),\sigma_{2}^{2}(P)\right)$ which satisfy the system (\ref{Eq*}) of equations.
 	and  suppose 
 	$ IF(\boldsymbol{\theta}_{\beta},P,y)=\big(IF(\pi_{1},P,y),IF(\pi_{2},P,y), IF(a,P,y),\\ IF(b,P,y), IF(\mu_{1},P,y), IF(\mu_{2},P,y), IF(\sigma^{2}_{1},P,y), IF(\sigma^{2}_{2},P,y)\big)^{'}$ 
 	be the vector of influence functions of the aforesaid functionals. 
 	Also let $\boldsymbol{\theta}_{\epsilon},a_{\epsilon}$ and $b_{\epsilon}$ are the contaminated versions of $\boldsymbol{\theta}_{\beta}(P),a(P)$ and $b(P)$ respectively. 
 	Then, we have
 	\begin{align*}
 	\pi_{1\epsilon}&=\int_{a_{\epsilon}}^{b_{\epsilon}}p_{\epsilon}(x)\;dx\\
 	&=(1-\epsilon)\int_{a_{\epsilon}}^{b_{\epsilon}}p(x)\;dx+\epsilon I(y \in (a_{\epsilon},b_{\epsilon})).
 	\end{align*}
 	Hence,
 	\begin{align*}
 	IF(\pi_{1},P,y)&=\frac{\partial \pi_{1\epsilon}}{\partial \epsilon}\Big|_{\epsilon=0}\\
 	&=\left[P(a_{0})-P(b_{0})\right]+\left[p(b_{0})IF(b,P,y)-p(a_{0})IF(a,P,y)\right]+I(y \in (a_{0},b_{0})).
 	\end{align*}
 	The equation,
 	\begin{align*}
 	\pi_{1}(P)+\pi_{2}(P)=1
 	\end{align*}
 	gives 
 	\begin{align*}
 	IF(\pi_{1},P,y)+IF(\pi_{2},P,y)=0.
 	\end{align*}
 	Next let us recall (Equation (\ref{eqq***})) that,
 	\begin{align*}
 	D_{1}(a_{\epsilon},\boldsymbol{\theta}_{\epsilon})&=D_{2}(a_{\epsilon},\boldsymbol{\theta}_{\epsilon})\\
 	&\text{and}\\
 	D_{1}(b_{\epsilon},\boldsymbol{\theta}_{\epsilon})&=D_{2}(b_{\epsilon},\boldsymbol{\theta}_{\epsilon}).
 	\end{align*}
 	Differentiating the above equations with respect to $\epsilon$ at $0$ gives,
 	\begin{align*}
 	&2\left[\frac{IF(\pi_{1},P,y)}{\pi_{10}}-\frac{IF(\pi_{2},P,y)}{\pi_{20}}\right]+\left[\frac{IF(\sigma^{2}_{2},P,y)}{\sigma^{2}_{20}}\left(1-\frac{(a_{0}-\mu_{20})^{2}}{\sigma^{2}_{20}}\right)-\frac{IF(\sigma^{2}_{1},P,y)}{\sigma^{2}_{10}}\left(1-\frac{(a_{0}-\mu_{10})^{2}}{\sigma^{2}_{10}}\right)\right]\\
 	&=2IF(a,P,y)\left[\frac{(a_{0}-\mu_{10})}{\sigma^{2}_{10}}-\frac{(a_{0}-\mu_{20})}{\sigma^{2}_{20}}\right]+2\left[\frac{(a_{0}-\mu_{20})IF(\mu_{2},P,y)}{\sigma^{2}_{20}}-\frac{(a_{0}-\mu_{10})IF(\mu_{1},P,y)}{\sigma^{2}_{10}}\right]
 	\end{align*}
 	and
 	\begin{align*}
 	&2\left[\frac{IF(\pi_{1},P,y)}{\pi_{10}}-\frac{IF(\pi_{2},P,y)}{\pi_{20}}\right]+\left[\frac{IF(\sigma^{2}_{2},P,y)}{\sigma^{2}_{20}}\left(1-\frac{(b_{0}-\mu_{20})^{2}}{\sigma^{2}_{20}}\right)-\frac{IF(\sigma^{2}_{1},P,y)}{\sigma^{2}_{10}}\left(1-\frac{(b_{0}-\mu_{10})^{2}}{\sigma^{2}_{10}}\right)\right]\\
 	&=2IF(b,P,y)\left[\frac{(b_{0}-\mu_{10})}{\sigma^{2}_{10}}-\frac{(b_{0}-\mu_{20})}{\sigma^{2}_{20}}\right]+2\left[\frac{(b_{0}-\mu_{20})IF(\mu_{2},P,y)}{\sigma^{2}_{20}}-\frac{(b_{0}-\mu_{10})IF(\mu_{1},P,y)}{\sigma^{2}_{10}}\right].
 	\end{align*}
 	Let us observe that (from system (\ref{Eq*}) of equations),
 	\begin{align*}
 	&\;\;\;\;\;\;\int_{a_{\epsilon}}^{b_{\epsilon}}f^{\beta}(x,\mu_{1\epsilon},\sigma^{2}_{1\epsilon})(x-\mu_{1\epsilon})p_{\epsilon}(x)\;dx=0\\
 	&\implies (1-\epsilon)\int_{a_{\epsilon}}^{b_{\epsilon}}f^{\beta}(x,\mu_{1\epsilon},\sigma^{2}_{1\epsilon})(x-\mu_{1\epsilon})p(x)\;dx+\epsilon f^{\beta}(y,\mu_{1\epsilon},\sigma^{2}_{1\epsilon})(y-\mu_{1\epsilon})I(y \in(a_{\epsilon},b_{\epsilon}))=0.
 	\end{align*}
 	Differentiating the above with respect to $\epsilon$ at $0$ gives,
 	\begin{align*}
 	-\int_{a}^{b}f^{\beta}(x,\mu_{1},\sigma^{2}_{1})(x-\mu_{1})p(x)\;dx+\frac{\partial}{\partial \epsilon}\int_{a_{\epsilon}}^{b_{\epsilon}}f^{\beta}(x,\mu_{1\epsilon},\sigma^{2}_{1\epsilon})(x-\mu_{1\epsilon})p(x)\;dx\Big |_{\epsilon=0}+f^{\beta}(y,\mu_{1},\sigma^{2}_{1})(y-\mu_{1})I(y \in (a,b))=0.
 	\end{align*}
 	To evaluate the middle term in the above equation we use the Leibniz integral rule (Intermediate Calculus $(1985)$ \cite{Liebnitzrule}) as follows.
 	\begin{align*}
 	&\frac{\partial}{\partial \epsilon}\int_{a_{\epsilon}}^{b_{\epsilon}}f^{\beta}(x,\mu_{1\epsilon},\sigma^{2}_{1\epsilon})(x-\mu_{1\epsilon})p(x)\;dx\Big |_{\epsilon=0}\\
 	&=f^{\beta}(b,\mu_{1},\sigma^{2}_{1})(b-\mu_{1})p(b)-f^{\beta}(a,\mu_{1},\sigma^{2}_{1})(a-\mu_{1})p(a)+\int_{a}^{b}\frac{\partial}{\partial \epsilon}f^{\beta}(x,\mu_{1\epsilon},\sigma^{2}_{1\epsilon})(x-\mu_{1\epsilon})p(x)\Big |_{\epsilon=0}\;dx.
 	\end{align*}
 	The calculation of $\frac{\partial}{\partial \epsilon}f^{\beta}(x,\mu_{1\epsilon},\sigma^{2}_{1\epsilon})(x-\mu_{1\epsilon})p(x)\Big |_{\epsilon=0}$ is straightforward and it can be easily observed that,
 	this integration will be a linear combination of $IF(\mu_{1},P,y)$ and $IF(\sigma^{2}_{1},P,y)$. The rest of the linear equations can be similarly derived as of those in the system (\ref{Eq*}) of equations. The derivation of these influence function in other possibilities (Figure \ref{table 1}) are similar and hence omitted.

 	These calculations finally lead to the following system of linear equations defining the required influence function $IF(\boldsymbol{\theta}_{\beta},P,y)$.
 	\begin{align*}
 	A_{\beta}(\boldsymbol{\theta}_{0},a_{0},b_{0})IF(\boldsymbol{\theta}_{\beta},P,y)=B_{\beta}(y,\boldsymbol{\theta}_{0},a_{0},b_{0}),
 	\nonumber
 	\end{align*}
 	where $\boldsymbol{\theta}_{0}=(\pi_{10},\pi_{20},\mu_{10},\mu_{20},\sigma^{2}_{10},\sigma^{2}_{20}),a_{0},b_{0}$ are the true values of $\boldsymbol{\theta}, a$ and $b$, respectively
 	and 
 	\begin{align*}
 	&B_{\beta}(y,\boldsymbol{\theta}_{0},a_{0},b_{0})=\Bigg(P(a_0)-P(b_0)+I(a_0<y<b_0),0,0,0,\int_{a_0}^{b_0}(x-\mu_{10})f^\beta(x,\mu_{10},\sigma^{2}_{10})p(x)\;dx-\\
 	&(y-\mu_{10})f^\beta(y,\mu_{10},\sigma^{2}_{10})I(a_0<y<b_0),\int_{x \notin (a_0,b_0)}(x-\mu_{20})f^\beta(x,\mu_{20},\sigma^{2}_{20})p(x)\;dx-
 	(y-\mu_{20})f^\beta(y,\mu_{20},\sigma^{2}_{20})\\
 	&I(y \notin (a_0,b_0)),\int_{a_0}^{b_0}f^\beta(x,\mu_{10},\sigma^{2}_{10})\left(\dfrac{(x-\mu_{10})^2}{\sigma^{2}_{10}}-1\right)p(x)\;dx-f^\beta(y,\mu_{10},\sigma^{2}_{10})\left(\dfrac{(y-\mu_{10})^2}{\sigma^{2}_{10}}-1\right)I(a_0<y<b_0),\\
 	&\int_{x \notin (a_0,b_0)}f^\beta(x,\mu_{20},\sigma^{2}_{20})\left(\dfrac{(x-\mu_{20})^2}{\sigma^{2}_{20}}-1\right)p(x)\;dx-f^\beta(y,\mu_{20},\sigma^{2}_{20})\left(\dfrac{(y-\mu_{20})^2}{\sigma^{2}_{20}}-1\right)I(y \notin (a_0,b_0)\Bigg).
 	\end{align*}
 	The $8 \times 8$ matrix $A_{\beta}(\boldsymbol{\theta}_{0},a_{0},b_{0})$ has the $j$-th row as $A_{j*}$, for $j=1, \ldots, 8$, where
 	\begin{align*}
 	&A_{1*}=\left(1,0,p(a_0),-p(b_0),0,0,0,0 \right),\\
 	&A_{2*}=\left(1,1,0,0,0,0,0,0 \right),\\
 	&A_{3*}=\Bigg(\dfrac{2}{\pi_{10}},\dfrac{-2}{\pi_{20}},-2\left(\dfrac{(a_0-\mu_{10})}{\sigma^{2}_{10}}-\dfrac{(a_0-\mu_{20})}{\sigma^{2}_{20}}\right),0,
 	\dfrac{(a_0-\mu_{10})}{\sigma^{2}_{10}},-\dfrac{(a_0-\mu_{20})}{\sigma^{2}_{20}},\\
 	&\dfrac{(a_0-\mu_{10})^2}{\sigma^{4}_{10}}-\dfrac{1}{\sigma^{2}_{10}},\dfrac{1}{\sigma^{2}_{20}}-\dfrac{(a_0-\mu_{20})^2}{\sigma^{4}_{20}}
 	\Bigg),\\
 	&A_{4*}=\Bigg(\dfrac{2}{\pi_{10}},\dfrac{-2}{\pi_{20}},0,-2\left(\dfrac{(b_0-\mu_{10})}{\sigma^{2}_{10}}-\dfrac{(b_0-\mu_{20})}{\sigma^{2}_{20}}\right),
 	\dfrac{(b_0-\mu_{10})}{\sigma^{2}_{10}},-\dfrac{(b_0-\mu_{20})}{\sigma^{2}_{20}},\\
 	&\dfrac{(b_0-\mu_{10})^2}{\sigma^{4}_{10}}-\dfrac{1}{\sigma^{2}_{10}},\dfrac{1}{\sigma^{2}_{20}}-\dfrac{(b_0-\mu_{20})^2}{\sigma^{4}_{20}}
 	\Bigg),\\
 	&A_{5*}=\Bigg(0,0,-f^\beta(a_0,\mu_{10},\sigma^{2}_{10})(a_0-\mu_{10})p(a_0),f^\beta(b_0,\mu_{10},\sigma^{2}_{10})(b_0-\mu_{10})p(b_0),C_3,0,C_5,0\Bigg),\\
 	&A_{6*}=\Bigg(0,0,-f^\beta(a_0,\mu_{20},\sigma^{2}_{20})(a_0-\mu_{20})p(a_0),f^\beta(b_0,\mu_{20},\sigma^{2}_{20})(b_0-\mu_{20})p(b_0),0,C_4,0,C_6\Bigg),
 	\end{align*}
 	\begin{align*}
 	&A_{7*}=\Bigg(0,0,\dfrac{\beta p(a_0)}{C_0\sigma^{2+\beta}_{10}}-f^\beta(a_0,\mu_{10},\sigma^{2}_{10})\left(\dfrac{(a_0-\mu_{10})^2}{\sigma^{2}_{10}}-1\right)p(a_0),-\dfrac{\beta p(b_0)}{C_0\sigma^{2+\beta}_{10}}\\
 	&+f^\beta(b_0,\mu_{10},\sigma^{2}_{10}\left(\dfrac{(b_0-\mu_{10})^2}{\sigma^{2}_{10}}-1\right)p(b_0),\dfrac{-2C_1}{\sigma^{2}_{10}+2C_5},0,C_8-C_7+(P(a_0)-P(b_0))\dfrac{\beta}{C_0\sigma^{(4+2\beta)}_{10}},0\Bigg),\\
 	&A_{8*}=\Bigg(0,0,\dfrac{\beta p(a_0)}{C_0\sigma^{2+\beta}_{20}}-f^\beta(a_0,\mu_{20},\sigma^{2}_{20})\left(\dfrac{(a_0-\mu_{20})^2}{\sigma^{2}_{20}}-1\right)p(a_0),-\dfrac{\beta p(b_0)}{C_0\sigma^{2+\beta}_{20}}\\
 	&+f^\beta(b_0,\mu_{20},\sigma^{2}_{20}\left(\dfrac{(b_0-\mu_{20})^2}{\sigma^{2}_{20}}-1\right)p(b_0),0,\dfrac{2C_2}{\sigma^{2}_{10}-2C_6},0,C_9-C_{10}+(1-P(a_0)+P(b_0))\dfrac{\beta}{C_0\sigma^{(4+2\beta)}_{20}}\Bigg)
 	\end{align*}
 	and
 	\begin{align*}
 	&C_0=\dfrac{\beta}{2(2\pi)^{\beta/2}(1+\beta)^{1.5}},\\
 	&C_1=\int_{a_0}^{b_0}(x-\mu_{10})f^\beta(x,\mu_{10},\sigma^{2}_{10})p(x)\;dx,\\
 	&C_2=\int_{x \notin (a_0,b_0)}(x-\mu_{20})f^\beta(x,\mu_{20},\sigma^{2}_{20})p(x)\;dx,\\
 	&C_3=\int_{a_0}^{b_0}f^\beta(x,\mu_{10},\sigma^{2}_{10})\left(\dfrac{(x-\mu_{10})^2}{\sigma^{2}_{10}}-1\right)p(x)\;dx,\\
 	&C_4=\int_{x \notin (a_0,b_0)}f^\beta(x,\mu_{20},\sigma^{2}_{20})\left(\dfrac{(x-\mu_{20})^2}{\sigma^{2}_{20}}-1\right)p(x)\;dx,\\
 	&C_5=\int_{a_0}^{b_0}\dfrac{1}{2}p(x)f^{\beta}(x,\mu_{10},\sigma^{2}_{10})\left(\dfrac{(x-\mu_{10})^3}{\sigma^{4}_{10}}-\dfrac{x-\mu_{10}}{\sigma^{2}_{10}}\right)\;dx,\\
 	&C_6=\int_{x \notin (a_0,b_0)}\dfrac{1}{2}p(x)f^{\beta}(x,\mu_{20},\sigma^{2}_{20})\left(\dfrac{(x-\mu_{20})^3}{\sigma^{4}_{20}}-\dfrac{x-\mu_{20}}{\sigma^{2}_{20}}\right)\;dx,\\
 	&C_7=\int_{a_0}^{b_0}p(x)f^{\beta}(x,\mu_{10},\sigma^{2}_{10})\dfrac{(x-\mu_{10})^2}{\sigma^{4}_{10}}\;dx,\\
 	&C_8=\int_{a_0}^{b_0}p(x)f^{\beta}(x,\mu_{10},\sigma^{2}_{10})\left(\dfrac{(x-\mu_{10})^2}{\sigma^{2}_{10}}-1\right)^2\dfrac{1}{2\sigma^{2}_{10}}\;dx,\\
 	&C_9=\int_{x \notin (a_0,b_0)}p(x)f^{\beta}(x,\mu_{20},\sigma^{2}_{20})\dfrac{(x-\mu_{20})^2}{\sigma^{4}_{20}}\;dx,\\
 	&C_{10}=\int_{x \notin (a_0,b_0)}p(x)f^{\beta}(x,\mu_{20},\sigma^{2}_{20})\left(\dfrac{(x-\mu_{20})^2}{\sigma^{2}_{20}}-1\right)^2\dfrac{1}{2\sigma^{2}_{20}}\;dx.\\
 	\end{align*}
 	\subsection{Influence of a single contamination on the MDPDEs}
 	
 	To illustrate the effect of moving a single sample observation too far from the original data cloud on the resulting MDPDEs of a mixture normal model, we have taken a random sample of size $99$ and dimension $2$ from a $3$-component normal mixture with component means $(0,0)^{t}$, $(5,5)^{t}$, $(-5,-5)^{t}$, dispersions $I_{2}$ (for all the components) and weights $0.33$, $0.33$ and $0.34$. Now, we contaminate this sample with a single additional observation $(\delta,\delta)^{t}$, with $\delta$ running from $-15$ to $15$ in intervals of $0.5$, and calculate the (summed) squared $L_2$ norms of the biases of the component mean estimates (corresponding to $\beta=0.1$). This summed squared norm bias is plotted in Figure \ref{ifc} against $\delta$. It may be observed that as $\delta$ becomes too large or too small, in which case the additional observation becomes a clearly incongruent observation in relation to the rest of the data, the effect of this outlier actually vanishes, and the summed squared norm of the biases settles down on what one would have had (represented by the horizontal dashed line) if that observation was just not there in the sample. This is a consequence of the fact that for very distant observations, the strong density-power downweighting 
 	applied by the minimum DPD method makes the contribution of the corresponding term practically vanish. 
 	
 	\renewcommand{\thefigure}{S3}
 	\begin{figure}[h!]
 		\centering
 		\includegraphics[height=4 in,width= 4 in]{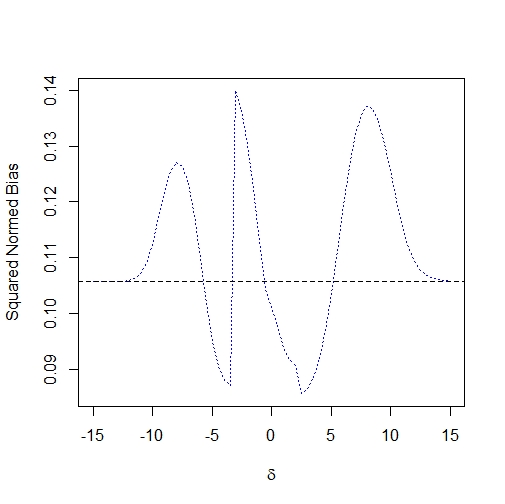}
 		\caption{The summed squared bias of the MDPDEs of the component means from contaminated samples (blue) and the horizontal red line corresponds to the bias of the MDPDEs of the component means based on the original sample.}
 		\label{ifc}
 	\end{figure}
 	
 	\newpage 
 	\section{Bias and Mean Squared Errors of the Cluster Means }
 	\label{appen4}
 	The bias ($L_{2}$ norm) and the mean squared errors of the cluster means for pure, uniformly (from chi-squared) contaminated, uniformly (from annulus) contaminated, outlying cluster contaminated and datasets with differentially dispersed clusters are provided in Tables \ref{table3.1}, \ref{table3.15}, \ref{table3.2}, \ref{table3.3} and \ref{table3.4}, respectively.
 	%\newpage 
 	\begin{table}[h!]
 		\hspace{-0.5in}
 		\small
 		\begin{tabular}{c c c c c c c c c c c c c} 
 			\hline
 			&       \multicolumn{4}{c}{\hspace{8em}MPLE$_{\beta}$} & & \multicolumn{2}{c}{TCLUST} & \multicolumn{2}{c}{TKMEANS} & KMEDOIDS   & MCLUST\\ [1ex] 
 			$p$ &  $\boldsymbol{\Sigma}$  &  $\beta=0$  & $\beta=0.1$ & $\beta=0.3$ & $\beta=0.5$ & $\alpha=0.0$ &  $\alpha=0.05$ & $\alpha=0.0$  & $\alpha=0.05$ &  & \\ [1ex] 
 			\hline
 			2	&	$\boldsymbol{I}_{2}$	&		0.014		&		0.016		&		0.019		&		0.021		&		0.014		&		0.019		&		0.019		&		0.021		&		0.014		& 0.030\\
 			&		&	$(	0.018	)$	&	$(	0.018	)$	&	$(	0.019	)$	&	$(	0.022	)$	&	$(	0.018	)$	&	$(	0.021	)$	&	$(	0.019	)$	&	$(	0.024	)$	&	$(	0.043	)$ & (0.018)	\\
 			&	$3\boldsymbol{I}_{2}$	&		0.030		&		0.037		&		0.054		&		0.063		&		0.032		&		0.094		&		0.048		&		0.058		&		0.043		& 0.029\\
 			&		&	$(	0.076	)$	&	$(	0.077	)$	&	$(	0.081	)$	&	$(	0.087	)$	&	$(	0.078	)$	&	$(	0.091	)$	&	$(	0.068	)$	&	$(	0.087	)$	&	$(	0.133	)$ & (0.060)	\\
 			&	$5\boldsymbol{I}_{2}$	&		0.086		&		0.110		&		0.163		&		0.194		&		0.181  &  0.142		&		0.168		&		0.060		&		0.126		& 0.082\\
 			&		&	$(	0.279	)$	&	$(	0.379	)$	&	$(	0.296	)$	&	$(	0.310	)$	&	(0.219) & (0.320)	&	$(	0.154	)$	&	$(	0.178	)$	&	$(	0.288	)$ & (0.146)\\
 			&	&	&								\\
 			4	&	$\boldsymbol{I}_{4}$	&		0.024		&		0.022		&		0.021		&		0.023		&		0.031		&		0.029		&		0.023		&		0.026		&		0.097	&  0.028	\\
 			&		&	$(	0.035	)$	&	$(	0.037	)$	&	$(	0.043	)$	&	$(	0.053	)$	&	$(	0.037	)$	&	$(	0.044	)$	&	$(	0.034	)$	&	$(	0.041	)$	&	$(	0.444	)$	&   (0.036)\\
 			&	$3\boldsymbol{I}_{4}$	&		0.040		&		0.040		&		0.043		&		0.046		&		0.042		&		0.057		&		0.053		&		0.059		&		0.131 & 0.044	\\
 			&		&	$(	0.105	)$	&	$(	0.110	)$	&	$(	0.132	)$	&	$(	0.163	)$	&	$(	0.110	)$	&	$(	0.121	)$	&	$(	0.113	)$	&	$(	0.132	)$	&	$(	1.395	)$	&   (0.117)\\
 			&	$5\boldsymbol{I}_{4}$	&		0.067		&		0.071		&		0.081		&		0.092		&		0.075		&		0.050		&		0.094		&		0.077		&		0.205	 & 0.043	\\
 			&		&	$(	0.230	)$	&	$(	0.230	)$	&	$(	0.257	)$	&	$(	0.307	)$	&	$(	0.222	)$	&	$(	0.225	)$	&	$(	0.227	)$	&	$(	0.252	)$	&	$(	2.423	)$  & (0.197)	\\
 			&	&	&							\\
 			6	&	$\boldsymbol{I}_{6}$	&		0.025		&		0.026		&		0.028		&		0.035		&		0.038		&		0.031		&		0.035		&		0.040		&		0.193	&  0.030	\\
 			&		&	$(	0.055	)$	&	$(	0.057	)$	&	$(	0.069	)$	&	$(	0.090	)$	&	$(	0.054	)$	&	$(	0.061	)$	&	$(	0.055	)$	&	$(	0.061	)$	&	$(	1.612	)$	 &  (0.056)\\
 			&	$3\boldsymbol{I}_{6}$	&		0.066		&		0.067		&		0.069		&		0.072		&		0.056		&		0.063		&		0.066		&		0.073		&		0.307	 & 0.066	\\
 			&		&	$(	0.164	)$	&	$(	0.169	)$	&	$(	0.203	)$	&	$(	0.167	)$	&	$(	0.184	)$	&	$(	0.195	)$	&	$(	0.174	)$	&	$(	0.201	)$	&	$(	4.923	)$ &  (0.170)	\\
 			&	$5\boldsymbol{I}_{6}$	&		0.046		&		0.048		&		0.053		&		0.063		&		0.057  &  0.091		&		0.059		&		0.073		&		0.386	 & 0.062	\\
 			&		&	$(	0.253	)$	&	$(	0.281	)$	&	$(	0.341	)$	&	$(	0.443	)$	&	(0.281) & (0.309)	&	$(	0.285	)$	&	$(	0.311	)$	&	$(	7.821	)$	 & (0.270)\\
 			&	&	&								\\
 			8	&	$\boldsymbol{I}_{8}$	&		0.040		&		0.039		&		0.043		&		0.055		&		0.044		&		0.045		&		0.044		&		0.051		&		0.162	 & 0.043\\
 			&		&	$(	0.073	)$	&	$(	0.077	)$	&	$(	0.100	)$	&	$(	0.141	)$	&	$(	0.074	)$	&	$(	0.083	)$	&	$(	0.075	)$	&	$(	0.085	)$	&	$(	3.202	)$  &  (0.075)	\\
 			&	$3\boldsymbol{I}_{8}$	&		0.058		&		0.059		&		0.068		&		0.081		&		 0.073  &  0.074		&		0.054		&		0.056		&		0.490		& 0.086\\
 			&		&	$(	0.207	)$	&	$(	0.217	)$	&	$(	0.277	)$	&	$(	0.384	)$	&	(0.218) & (0.233)	&	$(	0.217	)$	&	$(	0.242	)$	&	$(	10.196	)$  &  (0.210)\\
 			&	$5\boldsymbol{I}_{8}$	&		0.087		&		0.084		&		0.087		&		0.101		&		0.088  &  0.078		&		0.085		&		0.088		&		0.563		& 0.100\\
 			&		&	$(	0.381	)$	&	$(	0.396	)$	&	$(	0.504	)$	&	$(	0.711	)$	&	(0.382) & (0.431)	&	$(	0.369	)$	&	$(	0.415	)$	&	$(	16.049	)$  & (0.344)\\
 			&	&	&							\\
 			10	&	$\boldsymbol{I}_{10}$	&		0.054		&		0.054		&		0.057		&		0.065		&		0.043		&		0.041		&		0.040		&		0.048		&		0.324		& 0.041\\
 			&		&	$(	0.088	)$	&	$(	0.094	)$	&	$(	0.129	)$	&	$(	0.201	)$	&	$(	0.089	)$	&	$(	0.100	)$	&	$(	0.090	)$	&	$(	0.100	)$	&	$(	5.240	)$ & (0.095)\\
 			&	$3\boldsymbol{I}_{10}$	&		0.058		&		0.058		&		0.067		&		0.089		&		0.062  &  0.085		&		0.081		&		0.085		&		0.626	 & 0.062	\\
 			&		&	$(	0.258	)$	&	$(	0.272	)$	&	$(	0.375	)$	&	$(	0.600	)$	&(0.272) & (0.310)	&	$(	0.260	)$	&	$(	0.299	)$	&	$(	15.654	)$  &  (0.259)	\\
 			&	$5\boldsymbol{I}_{10}$	&		0.09		&		0.090		&		0.103		&		0.128		&		0.116  &  0.115		&		0.071		&		0.070		&		0.720	&   0.082	\\
 			&		&	$(	0.498	)$	&	$(	0.523	)$	&	$(	0.704	)$	&	$(	1.112	)$	&	(0.492) & (0.498)	&	$(	0.482	)$	&	$(	0.531	)$	&	$(	26.962	)$ &  (0.418)	\\[2ex] 
 			\hline
 		\end{tabular}
 		\caption{Estimated bias and mean squared errors (within parentheses) for pure datasets.}
 		\label{table3.1}
 	\end{table}
 	
 	\begin{table}[h!]
 		\hspace{-0.5in}
 		\small
 		\begin{tabular}{c c c c c c c c c c c c} 
 			\hline
 			&       \multicolumn{4}{c}{\hspace{8em}MPLE$_{\beta}$} & & \multicolumn{2}{c}{TCLUST} & \multicolumn{2}{c}{TKMEANS} & KMEDOIDS & MCLUST \\ [1ex] 
 			$p$ &  $\boldsymbol{\Sigma}$  &  $\beta=0$  & $\beta=0.1$ & $\beta=0.3$ & $\beta=0.5$ & $\alpha=0.1$ &  $\alpha=0.15$ & $\alpha=0.1$  & $\alpha=0.15$ & \\ [1ex] 
 			\hline
 			2	&	$\boldsymbol{I}_{2}$	&		0.101		&		0.041		&		0.013		&		0.014		&		0.022		&		0.041		&		0.016		&		0.018		&		0.077		&		0.016		\\
 			&		&	$(	0.076	)$	&	$(	0.036	)$	&	$(	0.023	)$	&	$(	0.024	)$	&	$(	0.024	)$	&	$(	0.026	)$	&	$(	0.020	)$	&	$(	0.022	)$	&	$(	0.107	)$	&	$(	0.025	)$	\\
 			&	$3\boldsymbol{I}_{2}$	&		0.106		&		0.076		&		0.039		&		0.039		&		0.044		&		0.027		&		0.068		&		0.078		&		0.093		&		0.056		\\
 			&		&	$(	0.154	)$	&	$(	0.129	)$	&	$(	0.095	)$	&	$(	0.091	)$	&	$(	0.100	)$	&	$(	0.089	)$	&	$(	0.094	)$	&	$(	0.097	)$	&	$(	0.501	)$	&	$(	0.086	)$	\\
 			&	$5\boldsymbol{I}_{2}$	&		0.159		&		0.110		&		0.061		&		0.051		&		0.126		&		0.144		&		0.129		&		0.066		&		0.056		&		0.194		\\
 			&		&	$(	0.309	)$	&	$(	0.275	)$	&	$(	0.245	)$	&	$(	0.261	)$	&	$(	0.352	)$	&	$(	0.310	)$	&	$(	0.253	)$	&	$(	0.249	)$	&	$(	1.193	)$	&	$(	0.228	)$	\\
 			&	&	&				&				&				&				&				&				&				&				&							\\
 			4	&	$\boldsymbol{I}_{4}$	&		0.058		&		0.028		&		0.027		&		0.030		&		0.020		&		0.033		&		0.033		&		0.028		&		0.120		&		0.028		\\
 			&		&	$(	0.060	)$	&	$(	0.037	)$	&	$(	0.042	)$	&	$(	0.052	)$	&	$(	0.046	)$	&	$(	0.051	)$	&	$(	0.039	)$	&	$(	0.044	)$	&	$(	0.471	)$	&	$(	0.040	)$	\\
 			&	$3\boldsymbol{I}_{4}$	&		0.061		&		0.031		&		0.033		&		0.043		&		0.044		&		0.046		&		0.065		&		0.060		&		0.166		&		0.051		\\
 			&		&	$(	0.206	)$	&	$(	0.127	)$	&	$(	0.119	)$	&	$(	0.143	)$	&	$(	0.124	)$	&	$(	0.145	)$	&	$(	0.119	)$	&	$(	0.132	)$	&	$(	1.464	)$	&	$(	0.119	)$	\\
 			&	$5\boldsymbol{I}_{4}$	&		0.114		&		0.092		&		0.081		&		0.080		&		0.054		&		0.073		&		0.062		&		0.079		&		0.295		&		0.071		\\
 			&		&	$(	0.352	)$	&	$(	0.272	)$	&	$(	0.256	)$	&	$(	0.301	)$	&	$(	0.252	)$	&	$(	0.274	)$	&	$(	0.253	)$	&	$(	0.265	)$	&	$(	2.685	)$	&	$(	0.256	)$	\\
 			&	&	&				&				&				&				&				&				&				&				&							\\
 			6	&	$\boldsymbol{I}_{6}$	&		0.049		&		0.037		&		0.039		&		0.046		&		0.037		&		0.033		&		0.037		&		0.036		&		0.152		&		0.032		\\
 			&		&	$(	0.173	)$	&	$(	0.065	)$	&	$(	0.078	)$	&	$(	0.099	)$	&	$(	0.069	)$	&	$(	0.072	)$	&	$(	0.066	)$	&	$(	0.072	)$	&	$(	1.697	)$	&	$(	0.061	)$	\\
 			&	$3\boldsymbol{I}_{6}$	&		0.066		&		0.054		&		0.057		&		0.063		&		0.067		&		0.059		&		0.048		&		0.054		&		0.394		&		0.061		\\
 			&		&	$(	0.310	)$	&	$(	0.216	)$	&	$(	0.247	)$	&	$(	0.307	)$	&	$(	0.187	)$	&	$(	0.207	)$	&	$(	0.187	)$	&	$(	0.222	)$	&	$(	5.234	)$	&	$(	0.186	)$	\\
 			&	$5\boldsymbol{I}_{6}$	&		0.097		&		0.094		&		0.102		&		0.116		&		0.088		&		0.073		&		0.068		&		0.067		&		0.367		&		0.074		\\
 			&		&	$(	0.438	)$	&	$(	0.364	)$	&	$(	0.421	)$	&	$(	0.532	)$	&	$(	0.314	)$	&	$(	0.367	)$	&	$(	0.312	)$	&	$(	0.335	)$	&	$(	7.889	)$	&	$(	0.315	)$	\\
 			&	&	&				&				&				&				&				&				&				&				&							\\
 			8	&	$\boldsymbol{I}_{8}$	&		0.094		&		0.030		&		0.033		&		0.038		&		0.038		&		0.039		&		0.044		&		0.034		&		0.284		&		0.051		\\
 			&		&	$(	0.220	)$	&	$(	0.088	)$	&	$(	0.112	)$	&	$(	0.156	)$	&	$(	0.092	)$	&	$(	0.088	)$	&	$(	0.088	)$	&	$(	0.089	)$	&	$(	3.513	)$	&	$(	0.092	)$	\\
 			&	$3\boldsymbol{I}_{8}$	&		0.086		&		0.062		&		0.061		&		0.080		&		0.050		&		0.073		&		0.089		&		0.085		&		0.452		&		0.082		\\
 			&		&	$(	0.363	)$	&	$(	0.260	)$	&	$(	0.336	)$	&	$(	0.467	)$	&	$(	0.257	)$	&	$(	0.280	)$	&	$(	0.278	)$	&	$(	0.282	)$	&	$(	10.232	)$	&	$(	0.257	)$	\\
 			&	$5\boldsymbol{I}_{8}$	&		0.138		&		0.113		&		0.118		&		0.133		&		0.109		&		0.197		&		0.103		&		0.103		&		0.655		&		0.085		\\
 			&		&	$(	0.525	)$	&	$(	0.411	)$	&	$(	0.510	)$	&	$(	0.715	)$	&	$(	0.411	)$	&	$(	0.444	)$	&	$(	0.429	)$	&	$(	0.454	)$	&	$(	17.489	)$	&	$(	0.413	)$	\\
 			&	&	&				&				&				&				&				&				&				&				&							\\
 			10	&	$\boldsymbol{I}_{10}$	&		0.066		&		0.040		&		0.046		&		0.055		&		0.042		&		0.046		&		0.054		&		0.051		&		0.267		&		0.046		\\
 			&		&	$(	0.308	)$	&	$(	0.107	)$	&	$(	0.140	)$	&	$(	0.213	)$	&	$(	0.111	)$	&	$(	0.113	)$	&	$(	0.108	)$	&	$(	0.110	)$	&	$(	5.576	)$	&	$(	0.104	)$	\\
 			&	$3\boldsymbol{I}_{10}$	&		0.086		&		0.068		&		0.078		&		0.103		&		0.089		&		0.086		&		0.088		&		0.088		&		0.474		&		0.075		\\
 			&		&	$(	0.437	)$	&	$(	0.306	)$	&	$(	0.401	)$	&	$(	0.613	)$	&	$(	0.318	)$	&	$(	0.319	)$	&	$(	0.329	)$	&	$(	0.345	)$	&	$(	16.821	)$	&	$(	0.291	)$	\\
 			&	$5\boldsymbol{I}_{10}$	&		0.122		&		0.115		&		0.129		&		0.168		&		0.085		&		0.090		&		0.119		&		0.118		&		0.819		&		0.112		\\
 			& & $(	0.645	)$	&	$(	0.537	)$	&	$(	0.720	)$	&	$(	1.101	)$	&	$(	0.545	)$	&	$(	0.570	)$	&	$(	0.497	)$	&	$(	0.554	)$	&	$(	27.373	)$	&	$(	0.511	)$	\\	[2ex] 
 			\hline
 		\end{tabular}
 		\caption{Estimated bias and mean squared errors (within parentheses) for uniformly (chi-squared method) contaminated datasets.}
 		\label{table3.15}
 	\end{table}
 	\begin{table}[h!]
 		\hspace{-0.5in}
 		\small
 		\begin{tabular}{c c c c c c c c c c c c} 
 			\hline
 			&       \multicolumn{4}{c}{\hspace{8em}MPLE$_{\beta}$} & & \multicolumn{2}{c}{TCLUST} & \multicolumn{2}{c}{TKMEANS} & KMEDOIDS & MCLUST \\ [1ex] 
 			$p$ &  $\boldsymbol{\Sigma}$  &  $\beta=0$  & $\beta=0.1$ & $\beta=0.3$ & $\beta=0.5$ & $\alpha=0.1$ &  $\alpha=0.15$ & $\alpha=0.1$  & $\alpha=0.15$ & \\ [1ex] 
 			\hline
 			2	&	$\boldsymbol{I}_{2}$	&		0.794		&		0.035		&		0.038		&		0.040		&		0.072		&		0.022		&		0.051		&		0.021		&		0.135	&    0.014	\\
 			&		&	$(	0.810	)$	&	$(	0.021	)$	&	$(	0.023	)$	&	$(	0.026	)$	&	$(	0.059	)$	&	$(	0.023	)$	&	$(	0.030	)$	&	$(	0.027	)$	&	$(	0.125	)$ &   (0.020)	\\
 			&	$3\boldsymbol{I}_{2}$	&		0.697		&		0.174		&		0.061		&		0.076		&		0.514		&		0.060		&		0.123		&		0.061		&		0.237	&    0.064	\\
 			&		&	$(	0.749	)$	&	$(	0.255	)$	&	$(	0.085	)$	&	$(	0.093	)$	&(1.741) & (0.109)	&	$(	0.098	)$	&	$(	0.082	)$	&	$(	0.448	)$	&   (0.077)\\
 			&	$5\boldsymbol{I}_{2}$	&		0.778		&		0.311		&		0.254		&		0.293		&		 1.962   &  0.247		&		0.300		&		0.053		&		0.429	&    0.070	\\
 			&		&	$(	1.151	)$	&	$(	0.513	)$	&	$(	0.393	)$	&	$(	0.402	)$	&	$(	37.672	)$	&	$(	0.439	)$	&	$(	0.319	)$	&	$(	0.212	)$	&	$(	0.863	)$ &   (0.173)	\\
 			&	&	&				&						\\
 			4	&	$\boldsymbol{I}_{4}$	&		0.057		&		0.026		&		0.028		&		0.030		&		0.048		&		0.026		&		0.054		&		0.019		&		0.118	&    0.031	\\
 			&		&	$(	0.225	)$	&	$(	0.040	)$	&	$(	0.045	)$	&	$(	0.054	)$	&	$(	0.046	)$	&	$(	0.045	)$	&	$(	0.053	)$	&	$(	0.051	)$	&	$(	0.516	)$	&   (0.044)\\
 			&	$3\boldsymbol{I}_{4}$	&		0.131		&		0.078		&		0.069		&		0.076		&		0.093		&		0.069		&		0.047		&		0.043		&		0.141	&    0.045	\\
 			&		&	$(	0.317	)$	&	$(	0.132	)$	&	$(	0.145	)$	&	$(	0.171	)$	&	$(	0.132	)$	&	$(	0.134	)$	&	$(	0.129	)$	&	$(	0.137	)$	&	$(	1.485	)$ &   (0.122)	\\
 			&	$5\boldsymbol{I}_{4}$	&		0.219		&		0.094		&		0.083		&		0.088		&		0.105		&		0.067		&		0.086		&		0.059		&		0.237	&    0.084	\\
 			&		&	$(	0.524	)$	&	$(	0.269	)$	&	$(	0.246	)$	&	$(	0.277	)$	&	$(	0.248	)$	&	$(	0.292	)$	&	$(	0.241	)$	&	$(	0.262	)$	&	$(	2.275	)$ &   (0.248)	\\
 			&	&	&				&							\\
 			6	&	$\boldsymbol{I}_{6}$	&		0.050		&		0.045		&		0.049		&		0.056		&		0.036		&		0.041		&		0.036		&		0.036		&		0.136	&    0.037	\\
 			&		&	$(	0.250	)$	&	$(	0.061	)$	&	$(	0.073	)$	&	$(	0.095	)$	&	$(	0.066	)$	&	$(	0.073	)$	&	$(	0.065	)$	&	$(	0.069	)$	&	$(	1.545	)$ &   (0.063)	\\
 			&	$3\boldsymbol{I}_{6}$	&		0.096		&		0.059		&		0.069		&		0.082		&		0.059		&		0.068		&		0.069		&		0.080		&		0.322	&    0.062	\\
 			&		&	$(	0.357	)$	&	$(	0.185	)$	&	$(	0.229	)$	&	$(	0.297	)$	&	$(	0.188	)$	&	$(	0.207	)$	&	$(	0.185	)$	&	$(	0.213	)$	&	$(	4.988	)$  &   (0.195)	\\
 			&	$5\boldsymbol{I}_{6}$	&		0.127		&		0.069		&		0.089		&		0.107		&		0.082		&		0.080		&		0.110		&		0.107		&		0.357	&    0.095	\\
 			&		&	$(	0.435	)$	&	$(	0.287	)$	&	$(	0.344	)$	&	$(	0.453	)$	&	$(	0.342	)$	&	$(	0.352	)$	&	$(	0.346	)$	&	$(	0.383	)$	&	$(	8.597	)$ &   (0.319)	\\
 			&	&	&				&					\\
 			8	&	$\boldsymbol{I}_{8}$	&		0.059		&		0.045		&		0.047		&		0.050		&		0.034		&		0.032		&		0.049		&		0.035		&		0.191	&    0.042	\\
 			&		&	$(	0.305	)$	&	$(	0.081	)$	&	$(	0.105	)$	&	$(	0.146	)$	&	$(	0.087	)$	&	$(	0.091	)$	&	$(	0.089	)$	&	$(	0.086	)$	&	$(	3.279	)$ &   (0.083)	\\
 			&	$3\boldsymbol{I}_{8}$	&		0.091		&		0.072		&		0.076		&		0.075		&		0.081		&		0.067		&		0.090		&		0.092		&		0.489	&    0.065	\\
 			&		&	$(	0.394	)$	&	$(	0.247	)$	&	$(	0.304	)$	&	$(	0.415	)$	&	$(	0.248	)$	&	$(	0.277	)$	&	$(	0.248	)$	&	$(	0.266	)$	&	$(	10.136	)$ &   (0.243)	\\
 			&	$5\boldsymbol{I}_{8}$	&		0.088		&		0.093		&		0.107		&		0.127		&		0.102		&		0.100		&		0.084		&		0.083		&		0.512	&    0.097	\\
 			&		&	$(	0.548	)$	&	$(	0.429	)$	&	$(	0.543	)$	&	$(	0.771	)$	&	$(	0.392	)$	&	$(	0.458	)$	&	$(	0.411	)$	&	$(	0.465	)$	&	$(	16.497	)$&   (0.426)	\\
 			&	&	&				&					\\
 			10	&	$\boldsymbol{I}_{10}$	&		0.076		&		0.042		&		0.051		&		0.064		&		0.048		&		0.046		&		0.055		&		0.051		&		0.305	&    0.048	\\
 			&		&	$(	0.279	)$	&	$(	0.109	)$	&	$(	0.148	)$	&	$(	0.231	)$	&	$(	0.119	)$	&	$(	0.119	)$	&	$(	0.113	)$	&	$(	0.116	)$	&	$(	5.410	)$ &   (0.010)	\\
 			&	$3\boldsymbol{I}_{10}$	&		0.093		&		0.064		&		0.073		&		0.098		&		0.069		&		0.079		&		0.093		&		0.100		&		0.539	&    0.080	\\
 			&		&	$(	0.451	)$	&	$(	0.289	)$	&	$(	0.384	)$	&	$(	0.592	)$	&	$(	0.330	)$	&	$(	0.331	)$	&	$(	0.304	)$	&	$(	0.343	)$	&	$(	16.047	)$	&   (0.303)\\
 			&	$5\boldsymbol{I}_{10}$	&		0.101		&		0.090		&		0.088		&		0.115		&		0.098		&		0.115		&		0.099		&		0.102		&		0.836	 &    0.112 	\\
 			&		&	$(	0.612	)$	&	$(	0.492	)$	&	$(	0.641	)$	&	$(	0.998	)$	&	$(	0.507	)$	&	$(	0.560	)$	&	$(	0.571	)$	&	$(	0.601	)$	&	$(	27.712	)$ &   (0.513)	\\[2ex] 
 			\hline
 		\end{tabular}
 		\caption{Estimated bias and mean squared errors (within parentheses) for uniformly (from annulus) contaminated datasets.}
 		\label{table3.2}
 	\end{table}
 	
 	\begin{table}[h!]
 		%\centering
 		\hspace{-0.5in}
 		\small
 		\begin{tabular}{c c c c c c c c c c c c} 
 			\hline
 			&       \multicolumn{4}{c}{\hspace{8em}MPLE$_{\beta}$} & & \multicolumn{2}{c}{TCLUST} & \multicolumn{2}{c}{TKMEANS} & KMEDOIDS & MCLUST \\ [1ex] 
 			$p$ &  $\boldsymbol{\Sigma}$  &  $\beta=0$  & $\beta=0.1$ & $\beta=0.3$ & $\beta=0.5$ & $\alpha=0.1$ &  $\alpha=0.15$ & $\alpha=0.1$  & $\alpha=0.15$ & \\ [1ex] 
 			\hline
 			2	&	$\boldsymbol{I}_{2}$	&		4.870		&		0.028		&		0.029		&		0.030		&		2.232   &  0.025		&		1.424		&		0.021		&		20.888	&       22.152	\\
 			&		&	$(	23.97	)$	&	$(	0.018	)$	&	$(	0.021	)$	&	$(	0.024	)$	&	(47.236) & (0.028)	&	$(	27.977	)$	&	$(	0.027	)$	&	$(	457.109	)$ &      (497.0417)	\\
 			&	$3\boldsymbol{I}_{2}$	&		4.876		&		4.487		&		0.062		&		0.070		&		10.536  &  0.427		&		3.187		&		0.053		&		21.928	&      21.762	\\
 			&		&	$(	24.141	)$	&	$(	20.599	)$	&	$(	0.087	)$	&	$(	0.094	)$	&	(233.780)& (10.392)	&	$(	68.016	)$	&	$(	0.093	)$	&	$(	482.016	)$ &      (482.281)	\\
 			&	$5\boldsymbol{I}_{2}$	&		9.712		&		10.096		&		0.990		&		0.566		&		21.899  &  20.053		&		8.837		&		0.064		&		26.925	&       21.938	\\
 			&		&	$(	145.236	)$	&	$(	158.673	)$	&	$(	28.353	)$	&	$(	0.685	)$	&	(485.105)& (461.605)	&	$(	191.516	)$	&	$(	0.189	)$	&	$(	491.365	)$ &      (486.658)	\\
 			&	&	&				&						\\
 			4	&	$\boldsymbol{I}_{4}$	&		6.860		&		0.025		&		0.030		&		0.037		&		2.661   &  0.030		&		3.249		&		0.030		&		29.363	&      31.221	\\
 			&		&	$(	47.878	)$	&	$(	0.040	)$	&	$(	0.045	)$	&	$(	0.054	)$	&	(77.778) & (0.050)	&	$(	98.839	)$	&	$(	0.052	)$	&	$(	906.572	)$ &      (987.839)	\\
 			&	$3\boldsymbol{I}_{4}$	&		6.881		&		6.310		&		0.055		&		0.059		&		17.412  &  0.051		&		5.772		&		0.048		&		30.596	&       30.730	\\
 			&		&	$(	47.980	)$	&	$(	44.003	)$	&	$(	0.131	)$	&	$(	0.155	)$	&	(547.829)& (0.145)	&	$(	172.106	)$	&	$(	0.144	)$	&	$(	941.274	)$&      (959.393)  	\\
 			&	$5\boldsymbol{I}_{4}$	&		6.974		&		7.079		&		0.089		&		0.093		&		28.472  &  21.841		&		6.983		&		0.047		&		30.767	&     30.856	\\
 			&		&	$(	49.457	)$	&	$(	51.252	)$	&	$(	0.239	)$	&	$(	0.283	)$	&	(892.367)& (689.382)	&	$(	217.568	)$	&	$(	0.243	)$	&	$(	951.243	)$ &      (968.882)	\\
 			&	&	&				&						\\
 			6	&	$\boldsymbol{I}_{6}$	&		8.757		&		0.033		&		0.038		&		0.041		&		5.511   &  0.033		&		0.993		&		0.036		&		36.762	&       38.104  	\\
 			&		&	$(	109.060	)$	&	$(	0.064	)$	&	$(	0.076	)$	&	$(	0.097	)$	&	(204.188)& (0.067)	&	$(	31.582	)$	&	$(	0.068	)$	&	$(	1393.215	)$   &      (1472.076)	\\
 			&	$3\boldsymbol{I}_{6}$	&		8.343		&		6.886		&		0.063		&		0.065		&		15.997  &  0.067		&		1.784		&		0.054		&		37.819	 &       38.054 	\\
 			&		&	$(	75.015	)$	&	$(	66.517	)$	&	$(	0.208	)$	&	$(	0.275	)$	&	(617.990)& (0.195)	&	$(	61.955	)$	&	$(	0.196	)$	&	$(	1444.099	)$  &      (0.172)	\\
 			&	$5\boldsymbol{I}_{6}$	&		7.859		&		9.443		&		0.320		&		0.128		&		30.467  &  13.672 		&		6.331		&		0.076		&		37.696	&       0.061	\\
 			&		&	$(	75.932	)$	&	$(	90.913	)$	&	$(	6.551	)$	&	$(	0.495	)$	&	(976.222)& (526.805)	&	$(	238.451	)$	&	$(	0.356	)$	&	$(	1432.805	)$ &      (1470.286)	\\
 			&	&	&				&						\\
 			8	&	$\boldsymbol{I}_{8}$	&		14.642		&		0.042		&		0.054		&		0.059		&		9.995   &  0.044		&		4.623		&		0.042		&		42.513	&      43.710	\\
 			&		&	$(	215.859	)$	&	$(	0.086	)$	&	$(	0.108	)$	&	$(	0.150	)$	&	(430.428)& (0.088)	&	$(	188.793	)$	&	$(	0.087	)$	&	$(	1872.28	)$ &      (1937.060)	\\
 			&	$3\boldsymbol{I}_{8}$	&		14.509		&		4.952		&		0.095		&		0.117		&		 13.260  &  0.061		&		7.295		&		0.069		&		43.359	&       43.846	\\
 			&		&	$(	214.399	)$	&	$(	61.377	)$	&	$(	0.307	)$	&	$(	0.429	)$	&(591.215)& (0.272)	&	$(	321.262	)$	&	$(	0.249	)$	&	$(	1903.009	)$ &      (1951.052)	\\
 			&	$5\boldsymbol{I}_{8}$	&		14.555		&		11.787		&		0.099		&		0.114		&		 33.417  &  5.136		&		6.360		&		0.103		&		43.435	&       43.567	\\
 			&		&	$(	213.653	)$	&	$(	143.740	)$	&	$(	0.551	)$	&	$(	0.758	)$	&(1191.519)& (225.505)	&	$(	279.669	)$	&	$(	0.571	)$	&	$(	1909.009	)$ &      (1926.959)	\\
 			&	&	&				&						\\
 			10	&	$\boldsymbol{I}_{10}$	&		16.282		&		0.047		&		0.057		&		0.070		&		5.188   &  0.045		&		5.211		&		0.048		&		45.457	&       48.981	\\
 			&		&	$(	267.572	)$	&	$(	0.103	)$	&	$(	0.138	)$	&	$(	0.218	)$	&	(242.153)& (0.117)	&	$(	248.064	)$	&	$(	0.109	)$	&	$(	2232.301	)$&      (2432.958)	\\
 			&	$3\boldsymbol{I}_{10}$	&		16.293		&		7.749		&		0.076		&		0.092		&		18.751  &  0.084		&		5.091		&		0.078		&		48.279	&       49.325	\\
 			&		&	$(	268.224	)$	&	$(	116.158	)$	&	$(	0.396	)$	&	$(	0.616	)$	&	(935.354)& (0.328)	&	$(	236.564	)$	&	$(	0.408	)$	&	$(	2370.414	)$ &      (2466.566)	\\
 			&	$5\boldsymbol{I}_{10}$	&		16.357		&		14.035		&		0.119		&		0.163		&		 25.640  &  5.964		&		14.062		&		0.110		&		48.664	&       48.745	\\
 			&		&	$(	272.519	)$	&	$(	231.636	)$	&	$(	0.620	)$	&	$(	1.092	)$	&	 (1281.557)& (298.761)	&	$(	698.405	)$	&	$(	0.550	)$	&	$(	2408.094	)$ &      (2411.771)	\\[2ex] 
 			\hline
 		\end{tabular}
 		\caption{Estimated bias and mean squared errors (within parentheses) for outlying cluster contaminated datasets.}
 		\label{table3.3}
 	\end{table}
 	
 	\begin{table}[!h]
 		%\centering
 		\hspace{-0.9in}
 		\small
 		\begin{tabular}{c c c c c c c c c c c c} 
 			\hline
 			&       \multicolumn{4}{c}{\hspace{8em}MPLE$_{\beta}$} & & \multicolumn{2}{c}{TCLUST} & \multicolumn{2}{c}{TKMEANS} & KMEDOIDS & MCLUST \\ [1ex] 
 			$Type$ &  $p$  &  $\beta=0$  & $\beta=0.1$ & $\beta=0.3$ & $\beta=0.5$ & $\alpha=0$ &  $\alpha=0.05$ & $\alpha=0$  & $\alpha=0.05$ & \\ [1ex] 
 			\hline
 			Pure	&	2	&	0.018	&	0.019		&	0.026	&	0.032		&	0.025	&	0.027		&	0.108	&	0.025		&	0.064	&	0.015	\\
 			&		&	(0.027)	&	(0.027	)	&	(0.030)	&	(0.034	)	&	(0.027)	&	(0.034	)	&	(0.044)	&	(0.054	)	&	(0.075)	&	(0.030	)	\\
 			&		&		&			&		&			&		&			&		&			&		&			\\
 			Pure	&	6	&	0.057	&	0.057		&	0.058	&	0.061		&	0.049	&	0.058		&	0.044	&	0.045		&	0.170	&	0.037		\\
 			&		&	(0.084)	&	(0.087	)	&	(0.107)	&	(0.141	)	&	(0.105)	&	(0.123	)	&	(0.098)	&	(0.115	)	&	(2.422)	&	(0.086	)	\\
 			& & & & & & &\\
 			& & & & & & &\\
 			&		&		&			&		&			&	$\alpha=$0.10	&	$\alpha=$0.15		&	$\alpha=$0.10	&	$\alpha=$0.15		&		&			\\
 			Uniform	&	2	&	0.136	&	0.094		&	0.060	&	0.042		&	0.030	&	0.047		&	0.135	&	0.062		&	0.102	&	0.042		\\
 			(chi-squared)&		&	(0.134)	&	(0.098	)	&	(0.041)	&	(0.043	)	&	(0.042)	&	(0.040	)	&	(0.064)	&	(0.061	)	&	(0.353)	&	(0.047	)	\\
 			Contaminated&		&		&			&		&			&		&			&		&			&		&			\\
 			&		&		&			&		&			&		&			&		&			&		&			\\
 			Uniform	&	6	&	0.077	&	0.114		&	0.053	&	0.066		&	0.073	&	0.055		&	0.049	&	0.044		&	0.249	&	0.023		\\
 			(chi-squared)&		&	(0.251)	&	(0.115	)	&	(0.121)	&	(0.154	)	&	(0.125)	&	(0.123	)	&	(0.113)	&	(0.130	)	&	(2.639)	&	(0.101	)	\\
 			Contaminated&		&		&			&		&			&		&			&		&			&		&			\\
 			&		&		&			&		&			&		&			&		&			&		&			\\
 			Uniform	&	2	&	0.586	&	0.820		&	0.018	&	0.025		&	0.142	&	0.048		&	0.185	&	0.030		&	0.290	&	0.032		\\
 			(Annulus)&		&	(0.966)	&	(0.940	)	&	(0.034)	&	(0.037	)	&	(0.093)	&	(0.047	)	&	(0.106)	&	(0.062	)	&	(0.285)	&	(0.040	)	\\
 			Contaminated&		&		&			&		&			&		&			&		&			&		&			\\
 			&		&		&			&		&			&		&			&		&			&		&			\\
 			Uniform	&	6	&	0.087	&	0.037		&	0.044	&	0.049		&	0.040	&	0.045		&	0.057	&	0.050		&	0.193	&	0.061		\\
 			(Annulus)&		&	(0.357)	&	(0.107	)	&	(0.126)	&	(0.156	)	&	(0.107)	&	(0.116	)	&	(0.112)	&	(0.123	)	&	(2.653)	&	(0.102	)	\\
 			Contaminated&		&		&			&		&			&		&			&		&			&		&			\\
 			&		&		&			&		&			&		&			&		&			&		&			\\
 			
 			Outlying	&	2	&	4.911	&	4.538		&	0.018	&	0.017		& 6.0677  &  0.037 	&	5.318	&	0.064		&	21.599	&	21.505		\\
 			Cluster&		&	(24.337)	&	(20.865	)	&	(0.040)	&	(0.044	)	&(129.004)& (0.050)	&	(110.232)	&	(0.057	)	&	(467.394)	&	(462.341	)	\\
 			&		&		&			&		&			&		&			&		&			&		&			\\
 			Outlying	&	6	&	8.622	&	6.970		&	0.039	&	0.047		&	  9.707   &  0.048		&	6.810	&	0.074		&	36.805	&	37.312		\\
 			Cluster&		&	(83.850)	&	(70.038	)	&	(0.127)	&	(0.158	)	&	(358.676)& (0.129)	&	(257.010)	&	(0.128	)	&	(1389.584	&	(1392.408	)	\\[2ex] 
 			\hline
 		\end{tabular}
 		\caption{Estimated bias and mean squared errors (within parentheses) for datasets with differentially dispersed clusters.}
 		\label{table3.4}
 	\end{table}
 	\newpage 
 	\section{Optimal choice of $\beta$}
 	\label{appen5}
 	The optimal selection of the $\beta$ parameter in the minimum Density Power Divergence estimation has been explored primarily in simple estimation scenarios by, among others, Warwick and Jones $(2005)$. They essentially evaluated the performance of the estimator through its asymptotic summed mean squared error (MSE)  
 	\begin{equation} \label{cc} 
 	E\left\{\left(\hat{\theta}_{\beta}-\theta^*\right)^T\left(\hat{\theta}_{\beta}-\theta^*\right)\right\}=n^{-1}{\rm 
 		tr}\left\{J_{\beta}^{-1}\left(\theta^g_{\beta}\right)K_{\beta}\left(\theta^g_{\beta}\right)J_{\beta}^{-1}\left(\theta^g_{\beta}\right)\right\}+\left(\theta^g_{\beta}-\theta^*\right)^T\left(\theta^g_{\beta}-\theta^*\right), 
 	\end{equation}  
 	where $\theta^g_{\beta}$ is the best fitting parameter value (i.e., $\theta^g_{\beta}=\underset{\theta \in \Theta}{\text{argmin}}\;D_{\beta}(g,f_{\theta})$) and $\theta^*$ is the actual target parameter. If the true unknown density belongs to the model family, $\theta^*=\theta^g_{\beta}$. The detailed expressions of $J_{\beta}\left(\theta\right)$ and $K_{\beta}\left(\theta\right)$, the matrices involved in the asymptotic variance of the minimum DPD estimator with tuning parameter $\beta$, can be found in Basu et al. $(1998)$, Warwick and Jones $(2005)$ or Basak et al. $(2021)$.  Now, there are two unknown quantities in Equation (\ref{cc}), namely, $\theta^g_{\beta}$ and $\theta^*$. In practice, Warwick and Jones $(2005)$ suggested $\theta^g_{\beta}$ to be replaced by $\hat{\theta}_{\beta}$, the MDPDE of $\theta$ and $\theta^*$ to be replaced by some suitable robust pilot estimator $\hat{\theta}^{P}$. Later, Basak et al. $(2021)$ suggested an iterative procedure to find the optimal choice of the tuning parameter $\beta$. This is basically a refinement of the procedure proposed by Warwick and Jones $(2005)$ which chooses the optimal parameter estimate at any particular stage as the pilot estimate for the next stage and continues the process till convergence.
 	
 	Any of the approaches above can be adapted to the clustering situation considered by us. However, there is one issue involved here. The literature has so far considered the estimation of an optimal $\beta$ parameter for one sample of real data. Our situation is a little more complex than that. Here one has to estimate the parameters of each of the $k$ populations, separately, and then again this process has to be repeated in each iteration. Thus, tuning parameter selection has to be performed separately for each population at each iteration. Therefore it does not make sense to talk about a single optimal value of $\beta$, as these cases may result in distinct optimals. Otherwise there is no conceptual difficulty in optimal tuning parameter selection in our clustering problem. However, we believe that the implementation of this should be reserved for a sequel paper, given that the level of refinement is quite intricate. 
 	
 	%Both of these procedures (due to Warwick and Jones $(2005)$, and due to Basak et al. $(2021)$) have only been applied to estimation set-up. In future, we plan to extend it to the clustering set-up. There is however one issue with the optimal selection of the $\beta$ parameter. According to our algorithm, we estimate the component parameters for different clusters separately during a certain iteration. But the outlier contamination associated with each of these clusters may differ and thus different optimal choices of $\beta$ could exist for different clusters. So, there could be a set of cluster specific optimal $\beta$ values rather than a particular single choice of the optimal $\beta$ value. This surely claims a separate and independent work which we have a plan to implement in future.
 	\newpage 
 	\section{Determination of $T$}
 	\label{appen6}
 	We choose the optimal value of $T$ following the maximal-gap philosophy. To understand this, we present the plots of the negative log likelihood values of the sample observations along with the choice of $-\text{log}(T)$ in Figure \ref{figure2}. The figure clearly indicates how the cut-offs correspond to maximal gaps.
 	\renewcommand{\thefigure}{S4}
 	\begin{figure}
 		\centering
 		\begin{tabular}{cc}
 			\begin{subfigure}{0.5\textwidth}\centering\includegraphics[height=3in,width=.8\columnwidth]{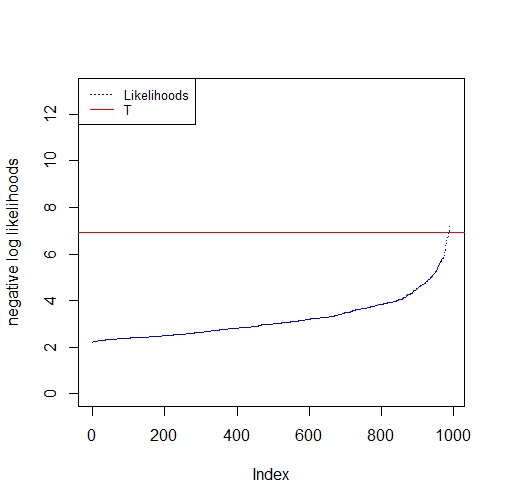}\caption{Pure Data }\end{subfigure}&
 			\begin{subfigure}{0.5\textwidth}\centering\includegraphics[height=3in,width=0.8\columnwidth]{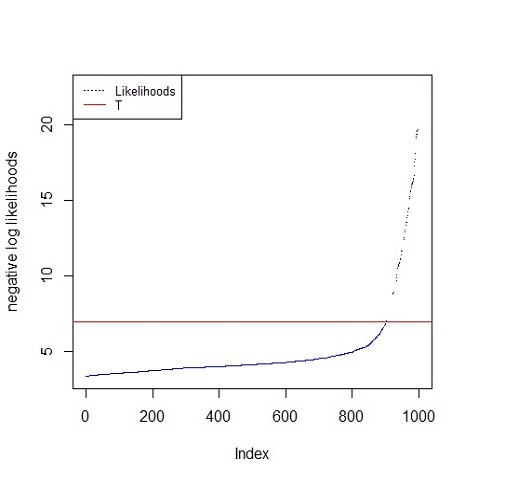}\caption{Uniformly  contaminated (chi-squared type) data}\end{subfigure}\\
 			\begin{subfigure}{0.5\textwidth}\centering\includegraphics[height=3 in,width=.8\columnwidth]{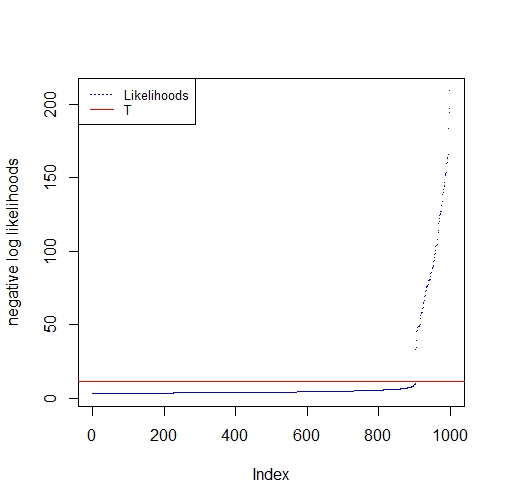}\caption{Annulus contaminated Data}\end{subfigure}&
 			\begin{subfigure}{0.5\textwidth}\centering\includegraphics[height=3 in,width=.8\columnwidth]{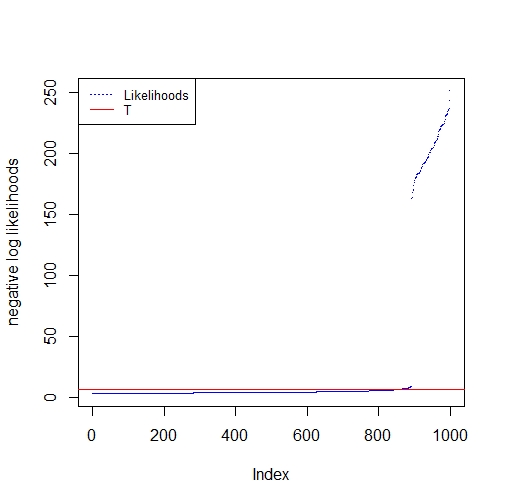}\caption{Outlying cluster Contaminated Data}\end{subfigure}
 		\end{tabular}
 		\caption{Plots of the negative log likelihoods of sample observations with the optimal choice of $-\text{log}(T)$ as outlier thresholds.}
 		\label{figure2}
 	\end{figure}
 	To implement the maximal-gap procedure in the R software, the following codes can be useful. 
 	\newpage 
 	\begin{verbatim}
 	l=sort(L) % sorted vertor of likelihoods.
 	gaps=c()
 	for(i in 1:(n-1))
 	{
 	gaps[i]=l[i+1]-l[i]  % Construction of successive gaps.
 	}
 	u=sort.list(gaps)[n-1]  % Position of the maximal-gap.
 	T=l[u]
 	\end{verbatim}
 	If there is no contamination in the data, normality of the clusters suggests that the concentration of the sample observations will be more in the central region and less in the low probability zones which are distant from the cluster means. Thus, the maximum gap is expected to be found in the low probability zones containing a few extreme observations. Hence, in case of no contamination, the maximal-gap strategy will, in most cases, find the threshold far away from the central region and the number of declared outliers will be small, as one would ideally expect. 
 	
 	As we have mentioned in the main manuscript (Remark $2.3$), there could be situations where the performance of the maximal-gap strategy may not be fully satisfactory. Using the $t$-th largest gap (rather than the overall largest) may provide a way around in such cases but the optimal choice of this $t$ may itself pose a major challenge. All in all, the choice of the tuning parameter $T$ remains a difficult problem for which a fully satisfactory solution, covering all different cases, is not yet available.
 	% One drawback of this philosophy can be observed in case of some special types of data contaminations. For example, if we have a not-so-remote background contamination together with an additional extremely remote outlying observation, only the remotest data point would be discarded, but not the other background noise. We believe that there might be a way to generalize the idea of maximal (largest) gap by choosing the value of $T$ around the position of the $t$-th largest gap ($t = 2$ will serve the purpose in this particular case) depending on the situation. We feel that a data based selection of an \enquote{optimal} value of $t$ may be worth looking into. However, we recognize that finding an optimal choice of $t$ will, at the least, be a difficult problem and the solution will depend on the nature of the data at hand. There will also be theoretical challenges including the existence and uniqueness of such an optimal choice. A single optimal choice of $t$ may fail to exist in certain situations which needs alternative treatments.

 \end{appendices}


\begin{thebibliography}{}
 		\bibitem{sclaemixnorm}
 	Andrews, D. F. and Mallows, C. L. (1974). Scale mixtures of normal distributions. \textit{Journal of the Royal Statistical Society: Series B}, \textbf{36(1)}, 99-102.
 	
 	\bibitem{raftery}
 	Banfield, J. D. and Raftery, A. E. (1993). Model-based Gaussian and non-Gaussian
 	clustering. \textit{Biometrics}, \textbf{49(3)}, 803–821.
 	
 	\bibitem{basak}
 	Basak, S., Basu, A., and Jones, M. C. (2021). On the ‘optimal’ density power divergence tuning parameter. \textit{Journal of Applied Statistics}, \textbf{48(3)}, 536-556.
 	
 	
 	\bibitem{basupaper} 
 	Basu, A., Harris, I. R., Hjort, N. L. and Jones, M. C. (1998). Robust and efficient estimation by minimizing a density power divergence.\textit{ Biometrika}, \textbf{85(3)}, 549-559.
 	
 	\bibitem{basubook} 
 	Basu, A., Shioya, H. and Park, C. (2011).\textit{ Statistical Inference: The Minimum Distance Approach}. Chapman and Hall/CRC.
 	
 	\bibitem{kc} 
 	Cerioli, A., García-Escudero, L. A., Mayo-Iscar, A. and Riani, M. (2018). Finding the number of normal groups in model-based clustering via constrained likelihoods. \textit{Journal of Computational and Graphical Statistics}, \textbf{27(2)}, 404-416.
 	
 	\bibitem{arxiv} 
 	Chakraborty, S., Basu, A. and Ghosh, A. (2022). Existence and Consistency of the Maximum Pseudo $\beta$-Likelihood Estimators for Multivariate Normal Mixture Models. arXiv preprint arXiv:2205.05405.
 	
 	\bibitem{gammaclust2} 
 	Chen, T. L., Hsieh, D. N., Hung, H., Tu, I. P., Wu, P. S., Wu, Y. M., $\cdots$  and Huang, S. Y. (2014). $\gamma $-SUP: A clustering algorithm for cryo-electron microscopy images of asymmetric particles. \textit{Annals of Applied Statistics}, \textbf{8(1)}, 259-285.
 	
 	\bibitem{alpha_div}
 	Cichocki, A., $\&$ Amari, S. I. (2010). Families of alpha-beta-and gamma-divergences: Flexible and robust measures of similarities. \textit{Entropy}, \textbf{12(6)}, 1532-1568.
 	
 	\bibitem{trimmedkmeans}
 	Cuesta-Albertos, J. A., Gordaliza, A. and Matrán, C. (1997). Trimmed $ k $-means: An attempt to robustify quantizers. \textit{The Annals of Statistics}, \textbf{25(2)}, 553-576.
 	
 	%	\bibitem[\text{Cuesta-Albertos et al. (2008)}]{tkm_estimation}
 	%	Cuesta‐Albertos, J. A., Matrán, C., $\&$ Mayo‐Iscar, A. (2008). Robust estimation in the normal mixture model based on robust clustering. \textit{Journal of the Royal Statistical Society: Series B}, \textbf{70(4)}, 779-802.
 	
 	%	\bibitem{Day}
 	%	Day, N. E., 1969. Estimating the components of a mixture of normal distributions. Biometrika, 56(3), 463-474.
 	
 	% 	\bibitem{seeddata}
 	% 	Dua, D. and Graff, C. (2019). \textit{UCI Machine Learning Repository [\href{http://archive.ics.uci.edu/ml]}{http://archive.ics.uci.edu/ml]}}. Irvine, CA: University of California, School of Information and Computer Science.
 	
 	\bibitem{Dykstra}
 	Dykstra, R. L. (1983). An algorithm for restricted least squares regression. \textit{Journal of the American Statistical Association},\textbf{ 78(384)}, 837-842.
 	
 	\bibitem{punzo2}
 	Farcomeni, A., and Punzo, A. (2020). Robust model-based clustering with mild and gross outliers. \textit{Test}, \textbf{29(4)}, 989-1007.
 	
 	\bibitem{q_entropy_finance}
 	Ferrari, D. and Paterlini, S. (2009). The maximum $L_q$-likelihood method: an application to extreme quantile estimation in finance. \textit{Methodology and Computing in Applied Probability}, \textbf{11(1)}, 3-19.
 	
 	\bibitem{q_entropy}
 	Ferrari, D. and Vecchia, D.L. (2012). On robust estimation via pseudo-additive information. \textit{Biometrika}, \textbf{99(1)}, 238-244.
 	
 	\bibitem{q_entropy_tail}
 	Ferrari, D. and Yang, Y. (2007). \textit{Estimation of tail probability via the maximum $L_q$-likelihood method}. University of Minnesota.
 	
 	\bibitem{L_q likelihood}
 	Ferrari, D. and Yang, Y. (2010). Maximum $L_q$-likelihood estimation. \textit{The Annals of Statistics}, \textbf{38(2)}, 753-783.
 	
 	\bibitem{swissbank}
 	Flury, B. and Riedwyl, H. (1988). \textit{Multivariate Statistics: A Practical Approach}. London: Chapman $\&$ Hall.
 	
 	\bibitem{tclustpackage}
 	Fritz, H., Garcıa-Escudero, L.A. and Mayo-Iscar, A. (2012). tclust: An R package for a trimming approach to cluster analysis. \textit{Journal of Statistical Software}, \textbf{47(12)}, 1-26.
 	
 	\bibitem{tclustpackage_2013}
 	Fritz, H., García-Escudero, L. A., and Mayo-Iscar, A. (2013). A fast algorithm for robust constrained clustering. Computational Statistics and Data Analysis, 61, 124-136.
 	
 	\bibitem{dpdjapan}
 	Fujisawa, H. and Eguchi, S. (2006). Robust estimation in the normal mixture model. \textit{Journal of Statistical Planning and Inference}, \textbf{136(11)}, 3989-4011.
 	
 	\bibitem{dpdjapan2008}
 	Fujisawa, H. and Eguchi, S. (2008). Robust parameter estimation with a small bias against heavy contamination. \textit{Journal of Multivariate Analysis}, \textbf{99(9)}, 2053-2081.
 	
 	\bibitem{Gallegos}	
 	Gallegos, M. T. (2002). Maximum likelihood clustering with outliers. In \textit{Classification, Clustering, and Data Analysis} (247-255). Springer, Berlin, Heidelberg.
 	
 	\bibitem{gallegosritter}
 	Gallegos, M. T. and Ritter, G. (2005). A robust method for cluster analysis. \textit{The Annals of Statistics}, \textbf{33(1)}, 347-380.
 	
 	%	\bibitem[\text{García-Escudero and Gordaliza (1999)}]{tclustif}
 	%	García-Escudero, L. Á. and Gordaliza, A. (1999). Robustness properties of k means and trimmed k means. \textit{Journal of the American Statistical Association},\textbf{ 94(447)}, 956-969.
 	
 	%	\bibitem[\text{García-Escudero et al. (1999)}]{tclustifa}
 	%	García-Escudero, L. A., Gordaliza, A. and Matrán, C. (1999). A central limit theorem for multivariate generalized trimmed $ k $-means. \textit{The Annals of Statistics},\textbf{ 27(3)}, 1061-1079.
 	
 	\bibitem{tclust} 
 	García-Escudero, L. A., Gordaliza, A., Matrán, C. and Mayo-Iscar, A. (2008). A general trimming approach to robust cluster analysis. \textit{The Annals of Statistics}, \textbf{36(3)}, 1324-1345.
 	
 	\bibitem{reviewpaper}
 	García-Escudero, L. A., Gordaliza, A., Matrán, C. and Mayo-Iscar, A. (2010). A review of robust clustering methods. \textit{Advances in Data Analysis and Classification}, \textbf{4(2-3)}, 89-109.
 	
 	%	\bibitem[\text{García-Escudero et al. (2009)}]{tclust_linear}
 	%	García‐Escudero, L. A., Gordaliza, A., San Martin, R., Van Aelst, S., $\&$ Zamar, R. (2009). Robust linear clustering. \textit{Journal of the Royal Statistical Society: Series B}, \textbf{71(1)}, 301-318.
 	
 	\bibitem{qent3} 
 	Güney, Y., Tuac, Y., Özdemir, Ş., and Arslan, O. (2021). Conditional maximum $L_q$-likelihood estimation for regression model with autoregressive error terms. \textit{Metrika}, \textbf{84(1)}, 47-74.
 	
 	%	\bibitem[\text{Hampel et al. (1986)}]{hampel}
 	%	Hampel F. R., Rousseeuw P. J., Ronchetti E, Stahel WA (1986) \textit{Robust Statistics, The Approach Based on the
 	%	Influence Function}. Wiley, New York.
 	
 	\bibitem{hathway}
 	Hathaway, R. J. (1985). A constrained formulation of maximum-likelihood estimation for normal mixture distributions. \textit{The Annals of Statistics}, \textbf{13(2)}, 795-800.
 	
 	%	\bibitem[\text{Havrda and Charvát (1967)}]{alphaentropy}
 	% 	Havrda, J. and Charvát, F. (1967). Quantification method of classification processes. Concept of structural $ a $-entropy. \textit{Kybernetika},\textbf{ 3(1)}, 30-35.
 	
 	\bibitem{trimcluster}
 	Henning, C. (2020). trimcluster: Cluster Analysis with Trimming.\textit{ R package} version \textbf{0.1-5}. \href{https://CRAN.R-project.org/package=trimcluster}{CRAN}.
 	
 	%	\bibitem[\text{Huber (1981)}]{huber}
 	%	Huber P. J. (1981). \textit{Robust Statistics}. Wiley, New York.
 	
 	%   \bibitem{robustdiscriminant}
 	% 	Hubert, M. and Van Driessen, K. (2004). Fast and robust discriminant analysis.\textit{ Computational Statistics and Data Analysis}, \textbf{45(2)}, 301-320.
 	
 	\bibitem{gamma2001}
 	Jones, M. C., Hjort, N. L., Harris, I. R., and Basu, A. (2001). A comparison of related density‐based minimum divergence estimators. \textit{Biometrika}, \textbf{88(3)}, 865-873.
 	
 	\bibitem{kmedoid}
 	Kaufman, L. and Rousseeuw, PJ. (1987). Clustering by means of medoids. \textit{In: Dodge Y (ed)
 		Statistical Data Analysis Based on the $L_1$ Norm and Related Methods}, pp 405–416.
 	
 	
 	% 	\bibitem{mcd}
 	% 	Leroy, A. M. and Rousseeuw, P.J. (1987).\textit{ Robust regression and outlier detection}. Wiley Series in Probability and Mathematical Statistics, New York: Wiley, 1987.
 	
 	% 	\bibitem[\text{Li et al. (2004)}]{MEC1}
 	% 	Li, H., Zhang, K. and Jiang, T. (2004), August. Minimum entropy clustering and applications to gene expression analysis. In Proceedings. \textit{2004 IEEE Computational Systems Bioinformatics Conference, 2004}. \textbf{CSB 2004}. (142-151). IEEE.
 	
 	\bibitem{lindsay}
 	Lindsay, B. G. (1995, January). Mixture models: theory, geometry and applications. In \textit{NSF-CBMS regional conference series in probability and statistics} \textbf{(i-163)}. Institute of Mathematical Statistics and the American Statistical Association. 
 	
 	\bibitem{clusterpack}
 	Maechler, M., Rousseeuw, P., Struyf, A., Hubert, M. and Hornik, K. (2017). cluster: Cluster analysis basics and extensions (2019).\textit{ R package} version \textbf{ 2(3)}.
 	
 		\bibitem{mardia}
 	Mardia, K.V., Kent, J.T. and Bibby, J.M. (1979). \textit{Multivariate Analysis}. Probability and mathematical statistics. Academic Press Inc.
 	
 	\bibitem{peel}
 	McLachlan, G. and Peel, D. (2004).\textit{ Finite Mixture Models}. John Wiley and Sons, New York.
 	
 	\bibitem{gammaclust1}
 	Notsu, A., Komori, O. and Eguchi, S. (2014). Spontaneous clustering via minimum gamma-divergence.\textit{ Neural computation}, \textbf{26(2)}, 421-448.
 	
 	\bibitem{Liebnitzrule}	
 	Protter, M.H. and Charles Jr, B. (2012). \textit{Intermediate Calculus}. Springer Science and Business Media.
 	
 	\bibitem{punzo1}
 	Punzo, A., and McNicholas, P. D. (2016). Parsimonious mixtures of multivariate contaminated normal distributions.\textit{ Biometrical Journal}, \textbf{58(6)}, 1506-1537.
 	
 	\bibitem{qent1}
 	Qin, Y. and Priebe, C.E. (2017). Robust Hypothesis Testing Via $L _q$-Likelihood. \textit{Statistica Sinica}, \textbf{27}, 1793-1813.
 	
 	\bibitem{sclaemix2}
 	Qin, Z. S., Damien, P. and Walker, S. (2003, November). Scale mixture models with applications to Bayesian inference. In \textit{AIP Conference Proceedings} (\textbf{Vol. 690, No. 1},  394-395). \textit{AIP}.
 	
 	%	\bibitem{ritter}
 	%	Ritter, G. (2014).\textit{ Robust cluster analysis and variable selection}. CRC Press.
 	
 	\bibitem{mcd1}
 	Rousseeuw, P. J. (1985). Multivariate estimation with high breakdown point. \textit{Mathematical statistics and applications}, \textbf{8(283-297)}, 37.
 	
 	\bibitem{robustif1}
 	Ruwet, C., García-Escudero, L. A., Gordaliza, A. and Mayo-Iscar, A. (2012). The influence function of the TCLUST robust clustering procedure.\textit{ Advances in Data Analysis and Classification}, \textbf{6(2)}, 107-130.
 	
 	\bibitem{mclust}
 	Scrucca, L., Fop, M., Murphy, T. B., and Raftery, A. E. (2016). mclust 5: clustering,
 	classification and density estimation using Gaussian finite mixture models. \textit{The R
 		Journal},\textbf{ 8(1)}, 289.
 	
 	
 	\bibitem{em}
 	SpaceX Satellite. Digital image, \href{https://medium.com/swlh/using-deep-learning-semantic-segmentation-method-for-ship-detection-on-satellite-optical-imagery-ffeaae8c1ab}{https://medium.com/swlh/using-deep-learning-semantic-segmentation-method-for-ship-detection-on-satellite-optical-imagery-ffeaae8c1ab}.
 	
 	\bibitem{sugarjames}
 	Sugar, C. A. and James, G. M. (2003). Finding the number of clusters in a dataset: An information-theoretic approach. \textit{Journal of the American Statistical Association}, \textbf{98(463)}, 750-763.
 	
 	\bibitem{titterington}	
 	Titterington, D. M., Smith, A. F. and Makov, U. E. (1985). \textit{Statistical Analysis of Finite Mixture Distributions}. Chichester, Wiley.
 	
 	%	\bibitem{vandervaart} 
 	% 	Vaart, A. W. and Wellner, J. A. (1996). \textit{Weak convergence and empirical processes: with applications to statistics}. Springer.
 	
 	% 	\bibitem[\text{Vinh and Epps (2010)}]{MEC2}
 	%	Vinh, N. X. and Epps, J. (2010, December). mincentropy: A novel information theoretic approach for the generation of alternative clusterings. In \textbf{2010} \textit{IEEE International Conference on Data Mining} (521-530). \textit{IEEE}.
 	
 	\bibitem{warwick}
 	Warwick, J. and Jones, M. C. (2005). Choosing a robustness tuning parameter. \textit{Journal of Statistical Computation and Simulation}, \textbf{75(7)}, 581-588.
 	
 	\bibitem{qent2}
 	Xu, L., Xiang, S. and Yao, W. (2019). Robust maximum $L_q$-likelihood estimation of joint mean–covariance models for longitudinal data.\textit{ Journal of Multivariate Analysis},\textbf{ 171}, 397-411.
 \end{thebibliography}
 \end{document}